\numberwithin{equation}{section}
\theoremstyle{definition}
\newtheorem{theorem}{Theorem}
\newtheorem{lemma}{Lemma}
\newtheorem{corollary}{Corollary}
\theoremstyle{definition}
\newtheorem{definition}{Definition}
\newtheorem{remark}{Remark}
\newtheorem{assumption}{Assumption}
\newtheorem{routine}{Routine}
\newcommand{\bfx}{\mathbf{x}} 
\newcommand{\bfy}{\mathbf{y}}
\newcommand{\bfz}{\mathbf{z}} 
\newcommand{\bfX}{\mathbf{X}}
\newcommand{\bfY}{\bfy}
\newcommand{\bfW}{\mathbf{W}}
\newcommand{\bfV}{\mathbf{V}}
\newcommand{\bfU}{\mathbf{U}}
\newcommand{\sbfU}{\mathbf{U}^*}
\newcommand{\bftdU}{\tilde{\mathbf{U}}}
\newcommand{\bfB}{\mathbf{B}}
\newcommand{\calK}{\mathcal{K}}
\newcommand{\calA}{\mathcal{A}}
\newcommand{\calI}{\mathcal{I}}
\newcommand{\tdA}{\tilde{\mathcal{A}}}
\newcommand{\tdI}{\tilde{\mathcal{I}}}
\newcommand{\tdbfS}{\tilde{\mathbf{S}}}
\newcommand{\bfC}{\mathbf{C}}
\newcommand{\bfD}{\mathbf{D}}
\newcommand{\bfT}{\mathbf{T}}
\newcommand{\bfH}{\mathbf{H}}
\newcommand{\bfM}{\mathbf{M}}
\newcommand{\bfO}{\mathbf{O}}
\newcommand{\bfR}{\mathbf{R}}
\newcommand{\bfS}{\mathbf{S}}
\newcommand{\bfZ}{\mathbf{Z}}
\newcommand{\bfI}{\mathbf{I}}
\newcommand{\bfa}{\mathbf{a}}
\newcommand{\bfb}{\mathbf{b}}
\newcommand{\bfd}{\mathbf{d}}
\newcommand{\bfv}{\mathbf{v}}
\newcommand{\bfs}{\mathbf{s}}
\newcommand{\bfr}{\mathbf{r}}
\newcommand{\tdbfs}{\tilde{\mathbf{s}}}
\newcommand{\bfu}{\mathbf{u}}
\newcommand{\bfzr}{\mathbf{0}}
\newcommand{\calD}{\mathcal{D}}
\newcommand{\scrR}{\mathscr{R}}
\newcommand{\calH}{\mathcal{H}}
\newcommand{\R}{\mathbb{R}}
\newcommand{\bmbeta}{\bm{\beta}}
\newcommand{\hbmbeta}{\hat{\bmbeta}}
\newcommand{\hbeta}{\hat{\beta}}
\newcommand{\tdbmbeta}{\tilde{\bmbeta}}
\newcommand{\ckbmbeta}{\check{\bmbeta}}
\newcommand{\sbmbeta}{{\bmbeta}^*}
\newcommand{\tdbeta}{\tilde{\beta}}
\newcommand{\ckbeta}{\check{\beta}}
\newcommand{\bmalpha}{\bm{\alpha}}
\newcommand{\bmdelta}{\bm{\delta}}
\newcommand{\hbmdelta}{\hat{\bm{\delta}}}
\newcommand{\tbmdelta}{\tilde{\bmdelta}}
\newcommand{\sbmdelta}{\bmdelta^*}
\newcommand{\bmDelta}{\bm{\Delta}}
\newcommand{\bmeps}{\bm{\varepsilon}}
\newcommand{\bmgamma}{\bm{\gamma}}
\newcommand{\bmtheta}{\bm{\theta}}
\newcommand{\bmTheta}{\bm{\Theta}}
\newcommand{\bmmu}{\bm{\mu}}
\newcommand{\bmeta}{\bm{\eta}}
\newcommand{\bmSigma}{\bm{\Sigma}}
\newcommand{\hsigma}{\hat{\sigma}}
\newcommand{\bmLmd}{\bm{\Lambda}}
\newcommand{\E}{\mathbb{E}}
\newcommand{\Var}{\text{Var}}
\newcommand{\SD}{\text{SD}}
\newcommand{\cor}{\text{cor}}
\newcommand{\calE}{\mathcal{E}}
\newcommand{\I}{\bm{1}}
\newcommand{\defi}{\mathop{=}\limits^{\Delta}}  
\newcommand{\toinL}{\mathop{\to}\limits^{d}}      
\newcommand{\toinP}{\mathop{\to}\limits^{P}}      
\newcommand{\eqinL}{\mathop{=}\limits^{d}}      
\newcommand{\sgn}{\mbox{sgn}}
\newcommand{\dnorm}{\mathcal{N}}
\newcommand{\diag}{\mbox{diag}}
\newcommand{\rank}{\text{rank}}
\newcommand{\row}{\text{row}}
\newcommand{\nul}{\text{null}}
\newcommand{\supp}{\text{supp}}
\newcommand{\trans}{\mathsf{T}}
\newcommand{\supth}{\text{th}}
\newcommand{\supP}{\text{P}}
\newcommand{\sqn}{\sqrt{n}}
\begin{document}

\title{Monte Carlo Simulation for Lasso-Type Problems \\by Estimator Augmentation}
\author{Qing Zhou\thanks{UCLA Department of Statistics (email: zhou@stat.ucla.edu). This work was supported by NSF grants DMS-1055286 and DMS-1308376. The author thanks one referee for providing useful references.}}
\date{}
\maketitle

\begin{abstract}

Regularized linear regression under the $\ell_1$ penalty, such as the Lasso, has been shown
to be effective in variable selection and sparse modeling. The sampling distribution
of an $\ell_1$-penalized estimator $\hbmbeta$ is hard to determine 
as the estimator is defined by an optimization problem that in general can only be solved numerically
and many of its components may be exactly zero. Let $\bfS$ be the subgradient of the $\ell_1$ norm
of the coefficient vector $\bmbeta$ evaluated at $\hbmbeta$. We find that the joint sampling distribution
of $\hbmbeta$ and $\bfS$, together called an augmented estimator, 
is much more tractable and has a closed-form density under a normal error distribution in both low-dimensional ($p\leq n$)
and high-dimensional ($p>n$) settings. 
Given $\bmbeta$ and the error variance $\sigma^2$,
one may employ standard Monte Carlo methods, such as Markov chain Monte Carlo and importance sampling,
to draw samples from the distribution of the augmented estimator
and calculate expectations with respect to the sampling distribution of $\hbmbeta$. 
We develop a few concrete Monte Carlo algorithms and 
demonstrate with numerical examples that our approach may offer huge advantages 
and great flexibility in studying sampling distributions in $\ell_1$-penalized linear regression. 
We also establish nonasymptotic bounds on the difference between the true sampling distribution of $\hbmbeta$ and
its estimator obtained by plugging in estimated parameters, which justifies the validity of Monte Carlo simulation from an estimated sampling distribution even when $p\gg n\to \infty$.

{\em Key words}:  Confidence interval, importance sampling, Lasso, Markov chain Monte Carlo, p-value, sampling distribution, sparse linear model.

\end{abstract}

\section{Introduction}

Consider the linear regression model,
\begin{equation}\label{eq:linearmodel}
\bfY=\bfX\bmbeta+ \bmeps,
\end{equation}
where $\bfY$ is an $n$-vector, $\bfX$ an $n\times p$ design matrix, $\bmbeta=(\beta_j)_{1:p}$ the vector
of coefficients, and $\bmeps$ i.i.d. random errors with mean zero and variance $\sigma^2$.
Recently, $\ell_1$-penalized estimation methods \citep{Tibshirani96, Chen98} 
have been widely used to find sparse estimates of the coefficient vector.
Given positive weights $w_j$, $j=1,\ldots,p$, and a tuning parameter $\lambda>0$,
an $\ell_1$-penalized estimator $\hbmbeta=(\hbeta_j)_{1:p}$ is defined by minimizing
the following penalized loss function,
\begin{equation}\label{eq:lassoloss}
\ell(\bmbeta)=\frac{1}{2}\|\bfY-\bfX\bmbeta \|_2^2 + n \lambda \sum_{j=1}^p w_j |\beta_j|.
\end{equation}
By different ways of choosing $w_j$, the estimator corresponds to the Lasso \citep{Tibshirani96}, the adaptive Lasso \citep{Zou06}, and the one-step linear local approximation (LLA) estimator \citep{Zou08} among others. 
We call such an estimator a Lasso-type estimator.

In many applications of $\ell_1$-penalized regression, it is desired to quantify the uncertainty in the estimates. 
However, except for very special cases, the sampling distribution of a Lasso-type estimator is complicated and
difficult to approximate. Closed-form approximations to the covariance matrices of 
the estimators in \cite{Tibshirani96}, \cite{FanLi01}, and \cite{Zou06} are unsatisfactory, 
as they all give zero variance for a zero component of the estimators and thus fail to quantify
the uncertainty in variable selection. Theoretical results on finite-sample distributions
and confidence sets of some
Lasso-type estimators have been developed \citep{Potscher09,Potscher10}
but only under orthogonal designs, which clearly limits general applications of these results. 
The bootstrap can be used to approximate the sampling distribution of a Lasso-type estimator,
in which numerical optimization is needed to minimize \eqref{eq:lassoloss} for every bootstrap sample.
Although there are efficient algorithms, such as the Lars \citep{Efron04}, the homotopy algorithm \citep{Osborne00}, 
and coordinate descent \citep{Friedman07,WuLange08},
to solve this optimization problem, it is still time-consuming to apply these algorithms
hundreds or even thousands of times in bootstrap sampling. 
As pointed out by \cite{KnightFu00} and \cite{Chatterjee10}, the bootstrap may not be consistent for estimating 
the sampling distribution of the Lasso under certain circumstances.
To overcome this difficulty, a modified bootstrap \citep{Chatterjee11} and a perturbation resampling approach \citep{Minnier11} have been proposed, both justified under a fixed-$p$ asymptotic framework.
\cite{ZhangZhang11} have developed methods for constructing confidence intervals for individual coefficients and their linear combinations in high-dimensional ($p>n$) regression with sufficient conditions for the asymptotic normality of the proposed estimators. There are several recent articles on significance test and confidence region construction for sparse high-dimensional linear models \citep{Javanmard13b,Javanmard13a,Lockhart13,vandeGeer13}, all based on asymptotic distributions for various functions of the Lasso. On the other hand, knowledge on sampling distributions is also useful for distribution-based model selection with $\ell_1$ penalization, as demonstrated by stability selection \citep{Meinshausen10}
and the Bolasso \citep{Bach08}. 

A possible alternative to the bootstrap or resampling
is to simulate from a sampling distribution 
by Monte Carlo methods, such as Markov chain Monte Carlo (MCMC). 
An obvious obstacle to using these methods for a Lasso-type estimator
is that its sampling distribution does not have a closed-form density. In this article, we study
the joint distribution of a Lasso-type estimator $\hbmbeta$ and the subgradient $\bfS$ of $\|\bmbeta\|_1$
evaluated at $\hbmbeta$.
Interestingly, this joint distribution has a density that can be calculated explicitly 
assuming a normal error distribution, regardless of the relative size between $n$ and $p$.
Thus, one can develop Monte Carlo algorithms to draw samples from this joint distribution and estimate
various expectations of interest with respect to the sampling distribution of $\hbmbeta$, which is simply a marginal distribution. This approach offers great flexibility in studying the sampling distribution of 
a Lasso-type estimator. For instance,
one may use importance sampling (IS) to accurately estimate a tail probability (small p-value) with respect
to the sampling distribution under a null hypothesis, 
which can be orders of magnitude more efficient than any method directly targeting at the sampling distribution.
Another potential advantage of this approach is that, at each iteration, an MCMC algorithm only
evaluates a closed-form density, which is much faster than minimizing \eqref{eq:lassoloss} 
numerically as used in the bootstrap. Furthermore, our method can be interpreted as an MCMC algorithm targeting at a multivariate normal distribution with locally reparameterized moves and hence is expected to be computationally tractable. 

The remaining part of this article is organized as follows. After a high-level description of the basic idea, Section~\ref{sec:distribution} derives the density of the joint distribution of $\hbmbeta$ and $\bfS$ in the low-dimensional setting with $p\leq n$, and Section~\ref{sec:MC} develops MCMC algorithms for this setting. The density 
in the high-dimensional setting with $p>n$ is derived in Section~\ref{sec:highdim}. In Section~\ref{sec:pvalue}, 
we construct applications of the high-dimensional result in p-value calculation
for Lasso-type inference by IS. Numerical examples are provided in Sections~\ref{sec:MC}
and \ref{sec:pvalue} to demonstrate the efficiency of the Monte Carlo algorithms. Section~\ref{sec:errorbounds} provides theoretical justifications for simulation from an estimated sampling distribution of the Lasso by establishing its  consistency as $p\gg n\to \infty$.
Section~\ref{sec:generalization} includes generalizations to random designs, a connection to model selection consistency,
and a Bayesian interpretation of the sampling distribution. The article concludes with a brief discussion and some remarks.
Technical proofs are relegated to Section~\ref{sec:proofs}.

Notations for vectors and matrices are defined here. 
All vectors are regarded as column vectors. 
Let $A=\{j_1,\ldots,j_k\}\subseteq \{1,\ldots,m\}$ and $B=\{i_1,\ldots,i_{\ell}\} \subseteq \{1,\ldots,n\}$ be two index sets. 
For vectors $\bfv=(v_j)_{1:m}$ and $\bfu=(u_i)_{1:n}$,
we define $\bfv_A=(v_j)_{j\in A}=(v_{j_1},\ldots,v_{j_k})$, $\bfv_{-A}=(v_j)_{j\notin A}$, 
and $(\bfv_A,\bfu_B)=(v_{j_1},\ldots,v_{j_k},u_{i_1},\ldots,u_{i_{\ell}})$.
For a matrix $\bfM=(M_{ij})_{m\times n}$,
write its columns as $\bfM_j$, $j=1,\ldots,n$. Then $\bfM_B=(\bfM_{j})_{j\in B}$ extracts the columns in $B$,
the submatrix $\bfM_{AB}=(M_{ij})_{i\in A, j \in B}$ extracts the rows in $A$ and the columns in $B$,
and $\bfM_{A\bullet}=(M_{ij})_{i \in A}$ extracts the rows in $A$. Furthermore, $\bfM^{\trans}_B$ and
$\bfM_{AB}^{\trans}$ are understood as $(\bfM_B)^{\trans}$ and $(\bfM_{AB})^{\trans}$, respectively.
We denote the row space, the null space, and the rank of $\bfM$ by $\row(\bfM)$, $\nul(\bfM)$, and $\rank(\bfM)$, respectively.
Denote by $\diag(\bfv)$ the $m\times m$ diagonal matrix with $\bfv$ as the diagonal elements,
and by $\diag(\bfM,\bfM')$ the block diagonal matrix with $\bfM$ and $\bfM'$ as the diagonal blocks,
where the submatrices $\bfM$ and $\bfM'$ may be of different sizes and may not be square.
For a square matrix $\bfM$, $\diag(\bfM)$ extracts the diagonal elements.
Denote by $\bfI_n$ the $n\times n$ identity matrix.

\section{Estimator augmentation}\label{sec:distribution}

\subsection{The basic idea}\label{sec:basic}

Let $\bfW=\diag(w_1,\ldots,w_p)$.
A minimizer $\hbmbeta$ of \eqref{eq:lassoloss} is given by the Karush-Kuhn-Tucker (KKT) condition 
\begin{equation}\label{eq:lassograd}
\frac{1}{n}\bfX^{\trans}\bfY = \frac{1}{n}\bfX^{\trans} \bfX \hbmbeta + \lambda \bfW \bfS,
\end{equation}
where $\bfS=(S_j)_{1:p}$ is the subgradient of the function $g(\bmbeta)=\|\bmbeta\|_1$ evaluated at 
the solution $\hbmbeta$. Therefore, 
\begin{equation}\label{eq:defS}
\left\{
\begin{array}{ll}
S_j=\sgn(\hbeta_j) & \mbox{ if } \hbeta_j \ne 0, \\
S_j \in [-1,1] & \mbox{ if } \hbeta_j=0,
\end{array}
\right.
\end{equation}
for $j=1,\ldots,p$. Hereafter, we may simply call $\bfS$ the subgradient if the meaning is clear from context.
Lemma~\ref{lm:prelim} reviews a few basic facts about the uniqueness of $\hbmbeta$ and $\bfS$. 
\begin{lemma}\label{lm:prelim}
For any $\bfY$, $\bfX$ and $\lambda>0$, every minimizer $\hbmbeta$ of \eqref{eq:lassoloss}
gives the same fitted value $\bfX \hbmbeta$ and the same subgradient $\bfS$. Moreover, if the columns of $\bfX$ are in general position, then $\hbmbeta$ is unique for any $\bfY$ and $\lambda>0$.
\end{lemma}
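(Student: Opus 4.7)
The plan is to prove the two claims separately: uniqueness of the fitted value and subgradient follows from convexity plus KKT, while uniqueness of $\hbmbeta$ itself requires a geometric argument invoking general position.

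For the common fitted value, I would argue as follows. The objective $\ell(\bmbeta)$ in \eqref{eq:lassoloss} is convex, and its data-fit term $\tfrac12\|\bfY-\bfX\bmbeta\|_2^2$ is strictly convex as a function of $\bfX\bmbeta$, while the penalty is convex in $\bmbeta$. Given two minimizers $\hbmbeta^{(1)}$ and $\hbmbeta^{(2)}$, a strict midpoint convexity argument on $\ell$ applied to $\tfrac12(\hbmbeta^{(1)}+\hbmbeta^{(2)})$ would contradict optimality unless $\bfX\hbmbeta^{(1)}=\bfX\hbmbeta^{(2)}$. Plugging this common fit into the KKT equation \eqref{eq:lassograd} yields $\lambda\bfW\bfS^{(1)} = \lambda\bfW\bfS^{(2)}$, and since $\bfW$ is a positive diagonal matrix and $\lambda>0$, we get $\bfS^{(1)}=\bfS^{(2)}$. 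This portion is standard convex analysis and should be short.

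For uniqueness of $\hbmbeta$ under general position, the strategy is to turn a hypothetical pair of distinct minimizers into a linear dependence forbidden by the hypothesis. By the first part, the common subgradient $\bfS$ determines an equicorrelation set $\calE=\{j:|S_j|=1\}$, and \eqref{eq:defS} forces $\hbeta_j=0$ whenever $|S_j|<1$ and $\sgn(\hbeta_j)=S_j$ on the support; hence any minimizer is supported in $\calE$ with sign pattern $\bfS_\calE$. If $\hbmbeta^{(1)}\ne\hbmbeta^{(2)}$, then $\bfd := \hbmbeta^{(1)}-\hbmbeta^{(2)}$ is a nonzero vector supported on $B:=\supp(\bfd)\subseteq\calE$ satisfying $\bfX\bfd=0$, giving a nontrivial linear dependence among the columns $\{\bfX_j:j\in B\}$.

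The main obstacle is to convert this dependence into a violation of general position. Following Tibshirani (2013, \emph{Electronic J.~Statist.}), I would combine the dependence $\sum_{j\in B} d_j\bfX_j=0$ with the KKT-derived identity $\bfX_j^{\trans}(\bfY-\bfX\hbmbeta)/n = \lambda w_j S_j$ for $j\in\calE$, so that (after rescaling by the weights $w_j$) the signed columns $\{S_j\bfX_j/w_j:j\in B\}$ lie in a common affine hyperplane and also satisfy a linear dependence; together these relations trap $|B|$ points in an affine subspace of dimension at most $|B|-2$, contradicting the general position hypothesis. This is the delicate geometric step; since a carefully written proof already exists in the cited reference, I would either reproduce the short argument or simply cite it, as the remainder is routine.
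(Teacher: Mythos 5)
Your argument is correct and is essentially the paper's approach: the paper simply cites Lemmas 1 and 3 of Tibshirani (2013) for the uniqueness of the fit and of $\hbmbeta$, and notes that $\bfS$ is then determined by $\bfX\hbmbeta$ through the KKT condition \eqref{eq:lassograd} — exactly the strict-convexity, KKT, and equicorrelation/general-position steps you reconstruct before deferring to the same reference. The only point worth a footnote (shared with the paper) is that with nontrivial weights the general-position hypothesis is really needed for the rescaled columns $\bfX_j/w_j$, since the weighted problem is an ordinary Lasso in the reparameterized design $\tilde{\bfX}_j=\bfX_j/w_j$.
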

\begin{proof}
See Lemma 1 and Lemma 3 in \cite{Tibshirani13} for proof of the uniqueness of the fitted value $\bfX\hbmbeta$ and
the uniqueness of $\hbmbeta$. Since $\bfS$ is
a (vector-valued) function of $\bfX\hbmbeta$ from the KKT condition \eqref{eq:lassograd}, 
it is also unique for fixed $\bfY$, $\bfX$ and $\lambda$. 
\end{proof}

We regard $\hbmbeta$ and $\bfS$ together as the solution to Equation \eqref{eq:lassograd}.
Lemma~\ref{lm:prelim} establishes that $(\hbmbeta,\bfS)$ is unique for any $\bfY$ 
assuming the columns of $\bfX$ are in general position (for a technical definition see \citealp{Tibshirani13}), regardless of the sizes of $n$ and $p$.
We call the vector $(\hbmbeta,\bfS)$ the augmented estimator in an $\ell_1$-penalized regression
problem. The augmented estimator will play a central role in our study of the sampling distribution of $\hbmbeta$.

Let $\bfU = \frac{1}{n}\bfX^{\trans}\bmeps=\frac{1}{n}\bfX^{\trans}\bfY-\bfC\bmbeta$, where 
$\bfC=\frac{1}{n}\bfX^{\trans} \bfX$ is the Gram matrix. By definition, $\bfU\in\row(\bfX)$. Rewrite the KKT condition as 
\begin{equation}\label{eq:lassoKKTinU}
\bfU = \bfC \hbmbeta + \lambda \bfW \bfS - \bfC \bmbeta \defi \bfH(\hbmbeta,\bfS;\bmbeta),
\end{equation}
which shows that $\bfU$ is a function of $(\hbmbeta,\bfS)$. On the other hand,
$\bfY$ determines $(\hbmbeta,\bfS)$ only through $\bfU$, which implies that 
$(\hbmbeta,\bfS)$ is unique for any $\bfU$ as long as it is unique for any $\bfY$.
Therefore, under the assumptions for the uniqueness of $\hbmbeta$, 
$\bfH$ is a bijection between $(\hbmbeta,\bfS)$ and $\bfU$. 
For a fixed $\bfX$, the only source of randomness in the linear model \eqref{eq:linearmodel} 
is the noise vector $\bmeps$, which determines the distribution of $\bfU$.
With the bijection between $\bfU$ and $(\hbmbeta,\bfS)$, one may derive the joint distribution of $(\hbmbeta,\bfS)$,
which has a closed-form density under a normal error distribution. Then we develop Monte Carlo algorithms
to sample from this joint distribution and obtain the sampling distribution of $\hbmbeta$. 
This is the key idea of this article, which works for both the low-dimensional setting ($p\leq n$) and the
high-dimensional setting ($p>n$). Although the basic strategy is the same, the technical details 
are slightly more complicated for the high-dimensional setting.
For the sake of understanding, we first focus on the low-dimensional case in the remaining part of Section~\ref{sec:distribution} and Section~\ref{sec:MC}, and then generalize the results to the high-dimensional setting
in Section~\ref{sec:highdim}.

Before going through all the technical details, we take a glimpse of the utility of this work in a couple concrete examples. Given a design matrix $\bfX$ and a value of $\lambda$, our method gives a closed-form joint density $\pi$ for the Lasso-type estimator $\hbmbeta$ and the subgradient $\bfS$ under assumed values of the true parameters (Theorems~\ref{thm:joint} and \ref{thm:jointhigh}). Targeting at this density, we have developed MCMC algorithms, such as the Lasso sampler in Section~\ref{sec:MHLasso}, to draw samples from the joint distribution of $(\hbmbeta,\bfS)$. Such MCMC samples allow for approximation of marginal distributions for a Lasso-type estimator. Figure~\ref{fig:diagnostic} demonstrates the results of the Lasso sampler applied on a simulated dataset with $n=500$, $p=100$ and a normal error distribution. The scatter plot in Figure~\ref{fig:diagnostic}(a) confirms that $\hbeta_j$ indeed may have a positive probability to be exactly zero in its sampling distribution. Accordingly, the distribution of the subgradient $S_j$ in Figure~\ref{fig:diagnostic}(b) has a continuous density on $(-1,1)$ and two point masses on $\pm 1$. The fast mixing and low autocorrelation shown in the figure are surprisingly satisfactory for a simple MCMC algorithm in such a high-dimensional and complicated space ($\R^p \times 2^{\{1,\ldots,p\}}$, see \eqref{eq:space}). Exploiting the explicit form of the bijection $\bfH$, we achieve the goal of sampling from the joint distribution of $(\hbmbeta,\bfS)$ via an MCMC algorithm essentially targeting at a multivariate normal distribution (Section~\ref{sec:viewofMCMC}). This approach does not need numerical optimization in any step, which makes it highly efficient compared to bootstrap or resampling-based methods. Another huge potential of our method is its ability to estimate tail probabilities, such as small p-values in a significance test (Section~\ref{sec:pvalue}). Estimating tail probabilities is challenging for any simulation method. With a suitable proposal distribution, having an explicit density makes it possible to accurately estimate tail probabilities by importance weights. For example, our method can estimate a tail probability on the order of $10^{-20}$, with a coefficient of variation around 2, by simulating only 5,000 samples from a proposal distribution. This is absolutely impossible when bootstrapping the Lasso or simulating from the sampling distribution directly.

\begin{figure}[t]
\centering
  \begin{minipage}[b]{0.4\linewidth}
   \centering
\includegraphics[width=\linewidth,trim=0in 0.2in 0in 0.5in,clip]{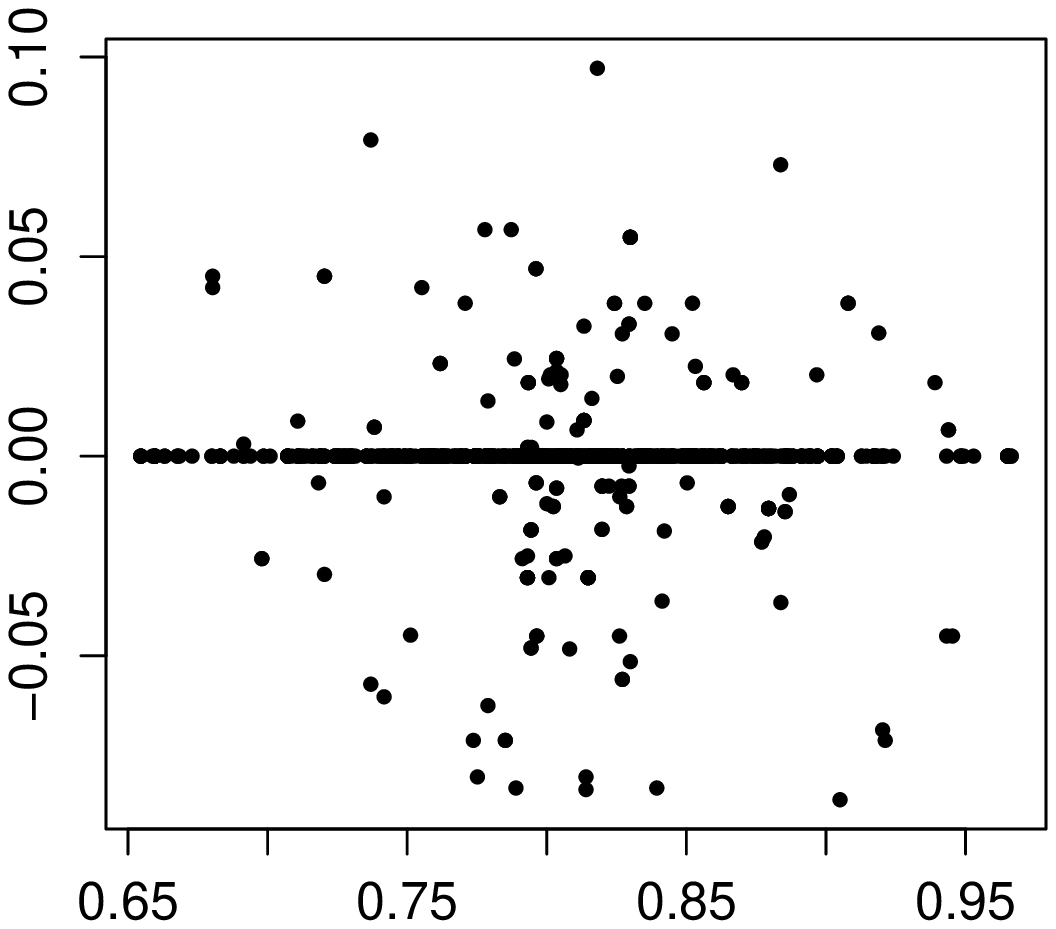} \\
(a)
\end{minipage}
\begin{minipage}[b]{0.4\linewidth}
   \centering
   \includegraphics[width=\linewidth,trim=0in 0.2in 0in 0.5in,clip]{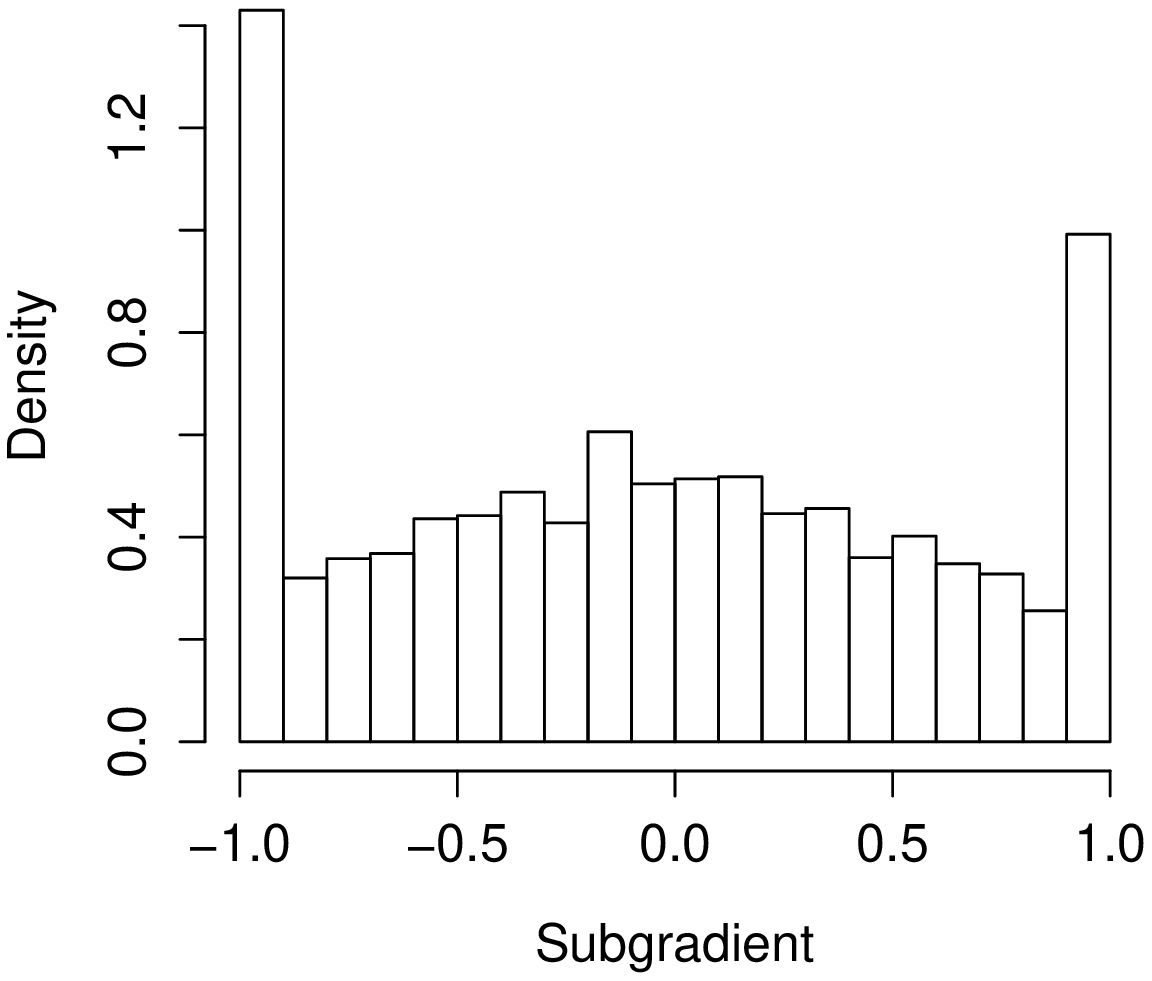} \\
   (b)
\end{minipage}   \\
  \begin{minipage}[b]{0.4\linewidth}
   \centering
   \includegraphics[width=\linewidth,trim=0in 0.2in 0in 0.5in,clip]{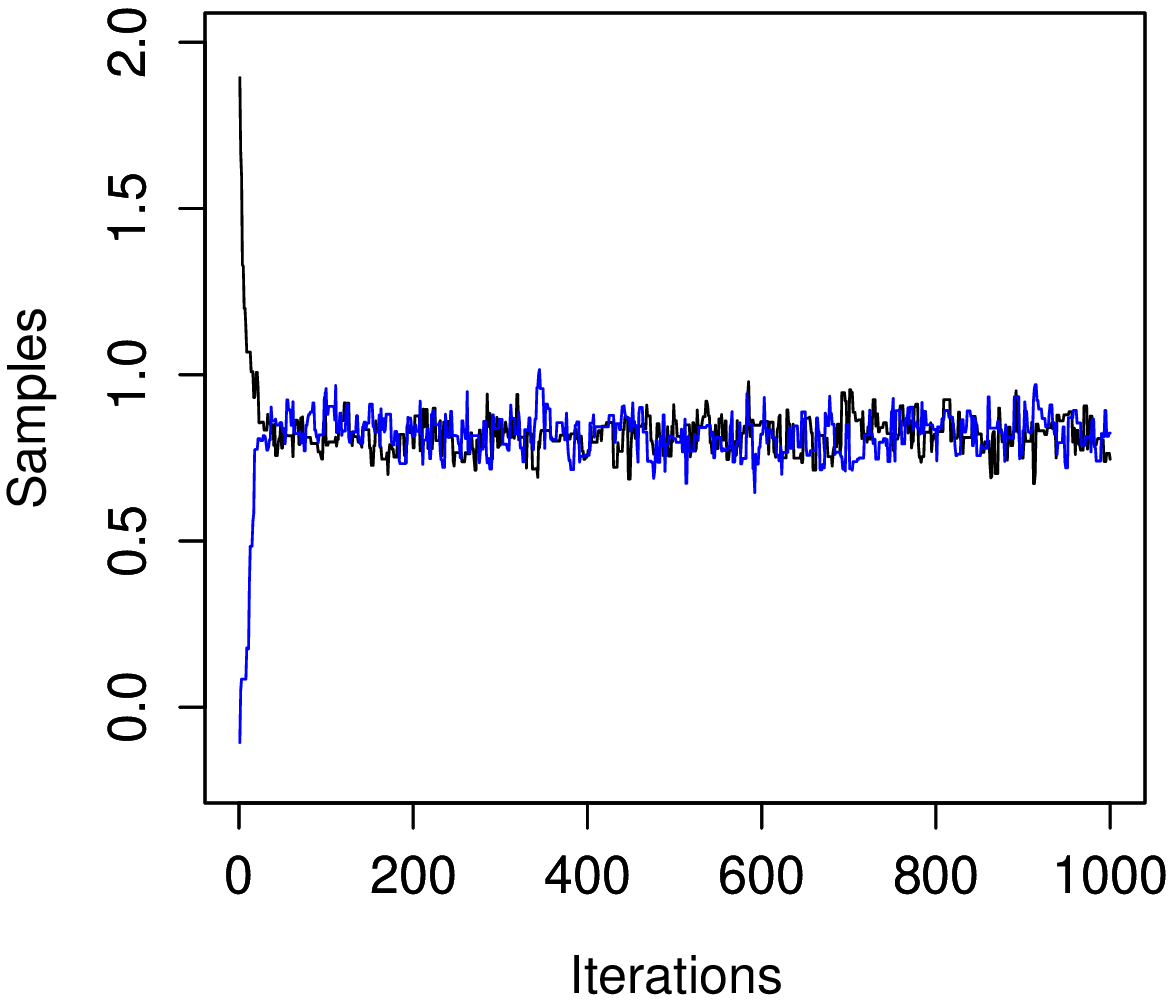} \\
   (c)
\end{minipage}
\begin{minipage}[b]{0.4\linewidth}
   \centering
   \includegraphics[width=\linewidth,trim=0in 0.2in 0in 0.5in,clip]{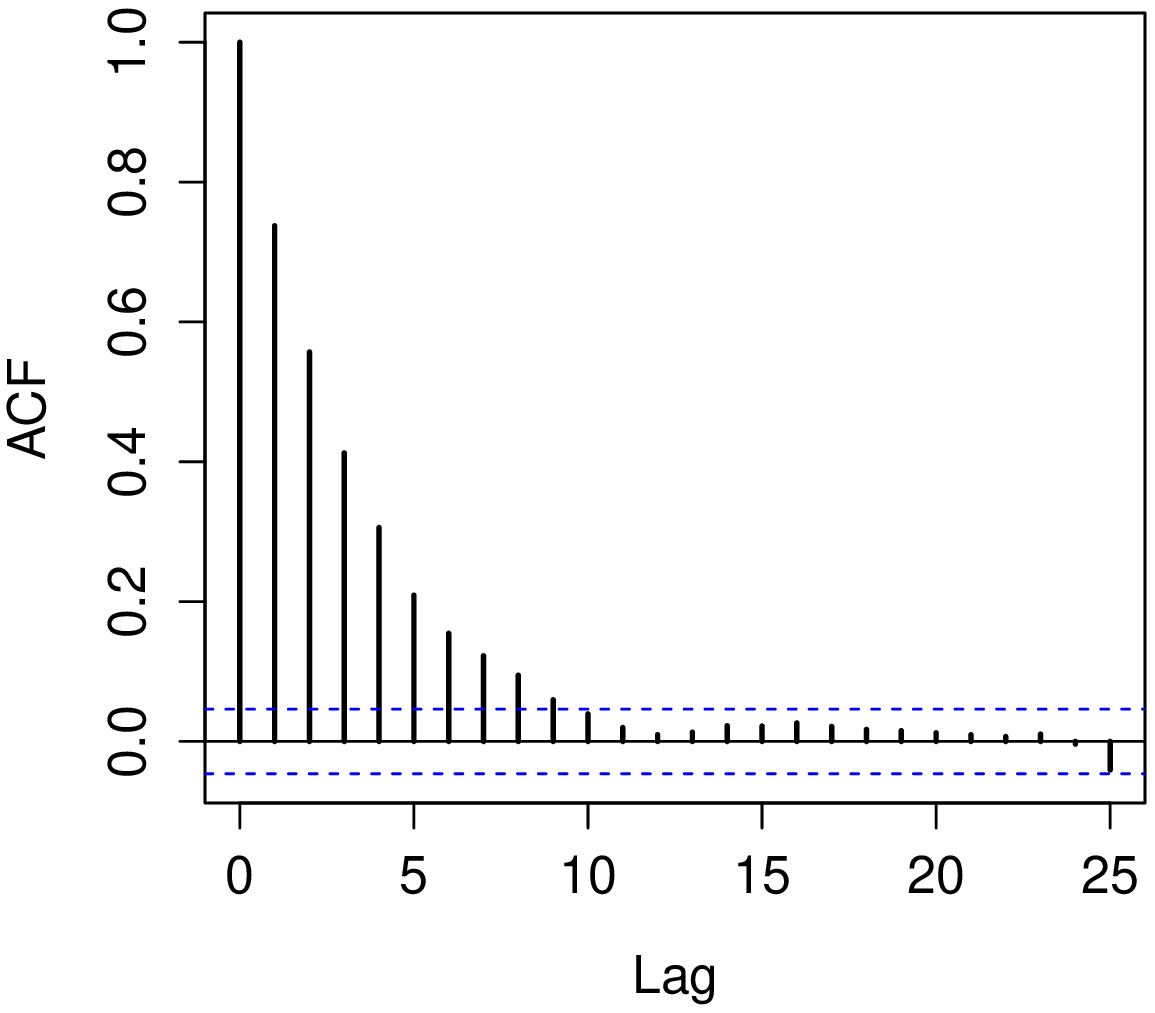} \\
   (d)
\end{minipage}   \\
   \caption{Demonstration of the Lasso sampler on a dataset with $p=100$. (a) Scatter plot of the samples of $\hbeta_1$ (x-axis) and $\hbeta_{50}$ (y-axis); (b) histogram of the subgradient $S_{50}$; (c) two sample paths of $\hbeta_1$ with diverse initial values; (d) a typical autocorrelation function.\label{fig:diagnostic}}
\end{figure}

\subsection{The bijection}

In the low-dimensional setting, we assume that $\rank(\bfX)=p\leq n$, which guarantees that the columns of $\bfX$ are in general position. 

Before writing down the bijection explicitly, we first examine the respective spaces for $\bfU$
and $(\hbmbeta,\bfS)$. Under the assumption that $\rank(\bfX)=p$, the row space of $\bfX$ is simply $\R^p$,
which is the space for $\bfU$. Let $\calA=\supp(\hbmbeta)\defi\{j:\hbeta_j \ne 0\}$ be the active set of $\hbmbeta$ and 
$\calI=\{1,\ldots,p\}\setminus \calA$ be the inactive set, i.e., the set of the zero components of $\hbmbeta$. 
After removing the degeneracies among its components as given in \eqref{eq:defS}, 
the vector $(\hbmbeta,\bfS)$ can be equivalently represented by the triple $(\hbmbeta_{\calA},\bfS_{\calI},\calA)$.
They are equivalent because 
from $(\hbmbeta_{\calA},\bfS_{\calI},\calA)$ one can unambiguously recover $(\hbmbeta,\bfS)$,
by setting $\hbmbeta_{\calI}=\bfzr$ and $\bfS_{\calA}=\sgn(\hbmbeta_{\calA})$ \eqref{eq:defS}, and vice versa.
It is more convenient and transparent to work with this equivalent representation. 
One sees immediately that $(\hbmbeta_{\calA},\bfS_{\calI},\calA)$ lies in  
\begin{equation}\label{eq:space}
\Omega= \{(\bfb_A,\bfs_I,A): A\subseteq\{1,\ldots,p\}, \bfb_A\in (\R\setminus\{0\})^{|A|}, \bfs_I \in [-1,1]^{p-|A|}\},
\end{equation}
where $I=\{1,\ldots,p\}\setminus A$. Hereafter, we always understand $(\bfb_A,\bfs_I,A)$ as the equivalent
representation of $(\bfb,\bfs)=((b_j)_{1:p},(s_j)_{1:p})$ with $\supp(\bfb)=A$ and $\bfs_A=\sgn(\bfb_A)$.
Clearly, $\Omega \subset \R^p\times 2^{\{1,\ldots,p\}}$, where $2^{\{1,\ldots,p\}}$ 
is the collection of all subsets of $\{1,\ldots,p\}$, and thus $(\hbmbeta_{\calA},\bfS_{\calI},\calA)$ 
lives in the product space of $\R^p$ and a finite discrete space.

Partition $\hbmbeta$ as $(\hbmbeta_\calA,\hbmbeta_\calI)=(\hbmbeta_\calA,\bfzr)$ and
$\bfS$ as $(\bfS_\calA,\bfS_\calI)=(\sgn(\hbmbeta_\calA),\bfS_\calI)$. 
Then the KKT condition \eqref{eq:lassoKKTinU} can be rewritten,
\begin{eqnarray}
\bfU & = &(\bfC_\calA \mid \bfC_{\calI})
\left(\begin{array}{c}
\hbmbeta_{\calA} \\
\bfzr \\
\end{array}\right)
+ \lambda (\bfW_{\calA} \mid \bfW_{\calI})
\left(\begin{array}{c}
\bfS_{\calA} \\
\bfS_{\calI} \\
\end{array}\right)
-\bfC\bmbeta, \label{eq:lassojoint_a} \\
& = & \bfD(\calA)\left(
\begin{array}{l}
\hbmbeta_{\calA}  \\
\bfS_{\calI} 
\end{array}
\right) +  \lambda \bfW_{\calA} \sgn(\hbmbeta_\calA)-  \bfC\bmbeta \defi \bfH(\hbmbeta_\calA,\bfS_\calI, \calA; \bmbeta), 
\label{eq:lassojoint}
\end{eqnarray}
where $\bfD(\calA)=(\bfC_{\calA} \mid \lambda\bfW_{\calI})$ is a $p\times p$ matrix. Permuting the rows of $\bfD(\calA)$,
one sees that
\begin{equation}\label{eq:defD}
|\det\bfD(\calA)|=
\det\left(
\begin{array}{cc}
\bfC_{\calA\calA} & \bfzr \\ 
\bfC_{\calI\calA} & \lambda\bfW_{\calI\calI}
\end{array}
\right) = \lambda^{|\calI|}\det(\bfC_{\calA\calA})\prod_{j\in \calI}w_j>0
\end{equation}
if $\bfC_{\calA\calA}>0$.
Due to the equivalence between $(\hbmbeta_\calA,\bfS_\calI, \calA)$ and $(\hbmbeta,\bfS)$, 
the map $\bfH$ defined here is essentially 
the same as the one defined in \eqref{eq:lassoKKTinU}.
\begin{lemma}\label{lm:bijectioninlowdim}
If $\rank(\bfX)=p$, then for any $\bmbeta$ and $\lambda>0$, the mapping $\bfH:\Omega\to \R^p$ defined in \eqref{eq:lassojoint} is a bijection that maps $\Omega$ onto $\R^p$.
\end{lemma}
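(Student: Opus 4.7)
The plan is to prove that $\bfH$ is both surjective and injective onto $\R^p$, using the existence and uniqueness, respectively, of the Lasso-type minimizer for each data vector $\bfY$, as captured by Lemma~\ref{lm:prelim}.

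For surjectivity, I would take an arbitrary $\bfU\in\R^p$ and use the hypothesis $\rank(\bfX)=p$ to produce some $\bfY\in\R^n$ satisfying $\frac{1}{n}\bfX^{\trans}(\bfY - \bfX\bmbeta) = \bfU$; such a $\bfY$ exists because $\row(\bfX) = \R^p$ in this regime. The convex program \eqref{eq:lassoloss} applied to this $\bfY$ admits at least one minimizer $\hbmbeta$, and the KKT condition \eqref{eq:lassograd} supplies a valid subgradient $\bfS$ obeying \eqref{eq:defS}. Passing to the equivalent representation $(\hbmbeta_\calA, \bfS_\calI, \calA)\in\Omega$ with $\calA = \supp(\hbmbeta)$, I would conclude that $\bfH(\hbmbeta_\calA, \bfS_\calI, \calA; \bmbeta) = \bfU$ by construction.

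For injectivity, suppose two triples $(\bfb_A, \bfs_I, A)$ and $(\bfb'_{A'}, \bfs'_{I'}, A')$ in $\Omega$ share the same image $\bfU$ under $\bfH$. Fix any $\bfY$ with $\frac{1}{n}\bfX^{\trans}(\bfY - \bfX\bmbeta) = \bfU$. Unpacking each triple into a full pair by setting the inactive coefficients to zero and the active subgradients to the respective signs, I would verify that each of the two resulting pairs satisfies the KKT equation \eqref{eq:lassograd} for this $\bfY$ together with the componentwise subgradient condition \eqref{eq:defS}, so each coefficient vector is a minimizer of $\ell$ with an associated valid subgradient. Since $\rank(\bfX)=p$ implies that the columns of $\bfX$ are in general position, Lemma~\ref{lm:prelim} forces both minimizers and their subgradients to coincide, so the two triples agree in $\Omega$.

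The main step requiring care is really just bookkeeping: translating carefully between the $\Omega$ representation and the full pair $(\hbmbeta, \bfS)$, and noting that linear independence of the columns of $\bfX$ implies general position. A purely algebraic route is tempting: by \eqref{eq:defD}, for each fixed active set $\calA$ and fixed sign pattern the restriction of $\bfH$ is affine with invertible matrix $\bfD(\calA)$, so bijectivity is immediate on each slice. The remaining content — that the images of distinct slices partition $\R^p$ — is however precisely the uniqueness statement of Lemma~\ref{lm:prelim}, so an appeal to that lemma is unavoidable and constitutes the only nontrivial ingredient.
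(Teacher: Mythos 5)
Your proof is correct and follows essentially the same route as the paper: the paper's (very terse) proof likewise reduces both surjectivity and injectivity to the statement that for every $\bfU\in\R^p$ the KKT equation \eqref{eq:lassoKKTinU} has a unique solution $(\hbmbeta,\bfS)$, which is exactly the existence-plus-uniqueness content of Lemma~\ref{lm:prelim} that you invoke after producing a suitable $\bfY$ from $\rank(\bfX)=p$. Your version merely makes explicit the bookkeeping (sufficiency of the KKT conditions for the convex problem, the translation between $(\bfb_A,\bfs_I,A)$ and $(\bfb,\bfs)$, and that linear independence implies general position) that the paper leaves implicit.
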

\begin{proof}
For any $\bfU \in \R^p$, there is a unique solution $(\hbmbeta,\bfS)$ to Equation \eqref{eq:lassoKKTinU} if $\rank(\bfX)=p$, 
and thus, a unique $(\hbmbeta_\calA,\bfS_\calI,\calA)\in\Omega$ 
such that $\bfH(\hbmbeta_\calA,\bfS_\calI, \calA; \bmbeta)=\bfU$. 
For any $(\hbmbeta_\calA,\bfS_\calI,\calA) \in \Omega$, $\bfH$ maps it into $\R^p$.
\end{proof}

It is helpful for understanding the map $\bfH$ to consider its inverse $\bfH^{-1}$ and its restriction to $\calA=A$,
where $A$ is a fixed subset of $\{1,\ldots,p\}$. 
For any $\bfU \in \R^p$, 
if $\bfH^{-1}(\bfU; \bmbeta) = (\hbmbeta_\calA,\bfS_\calI, \calA)$, then the unique solution to Equation~\eqref{eq:lassojoint_a} is $(\hbmbeta_\calA,\bfS_\calI, \calA)$. 
Given a fixed $A$, $(\hbmbeta_A,\bfS_I)$ lives in the subspace 
\begin{equation}\label{eq:subspace}
\Omega_A=\{(\bfb_A,\bfs_I) \in \R^p: \bfb_A\in (\R\setminus\{0\})^{|A|}, \bfs_I \in [-1,1]^{p-|A|}\}.
\end{equation}
Let $\bfH_A(\bfb_A,\bfs_I;\bmbeta) = \bfH(\bfb_A,\bfs_I,A;\bmbeta)$ for $(\bfb_A,\bfs_I)\in \Omega_A$ 
and $U_A=\bfH_A(\Omega_A;\bmbeta)$
be the image of $\Omega_A$ under the map $\bfH_A$. Now imagine we plug different $\bfU \in \R^p$
into Equation~\eqref{eq:lassojoint_a} and solve for $(\hbmbeta_\calA,\bfS_\calI, \calA)$. 
Then the set $\Omega_A\times \{A\}$ is the collection of all
possible solutions such that $\supp(\hbmbeta)=A$, the set $U_A$ is the collection of all $\bfU$ that give
these solutions, and $\bfH_A$ is a bijection between the two sets. It is easy to see that 
$\Omega=\bigcup_A \Omega_A \times \{A\}$,
i.e., $\{\Omega_A\times\{A\}\}$, for $A$ extending over all subsets of $\{1,\ldots,p\}$, 
form a partition of the space $\Omega$. The bijective
nature of $\bfH$ implies that $\{U_A\}$ also form a partition of $\R^p$, the space of $\bfU$.
Figure~\ref{fig:bijection} illustrates the bijection $\bfH$ for $p=2$ and the space partitioning by $A$. 
In this case, $\bfH_A$ map the four subspaces $\Omega_A$ for $A=\varnothing,\{1\},\{2\},\{1,2\}$,
each in a different $\R^2$, onto the space of $\bfU$ which is another $\R^2$.

\begin{figure}[ht]
\centering
   \includegraphics[width=0.55\linewidth,angle=-90,trim=0.5in 0.5in 0.5in 0.5in,clip]{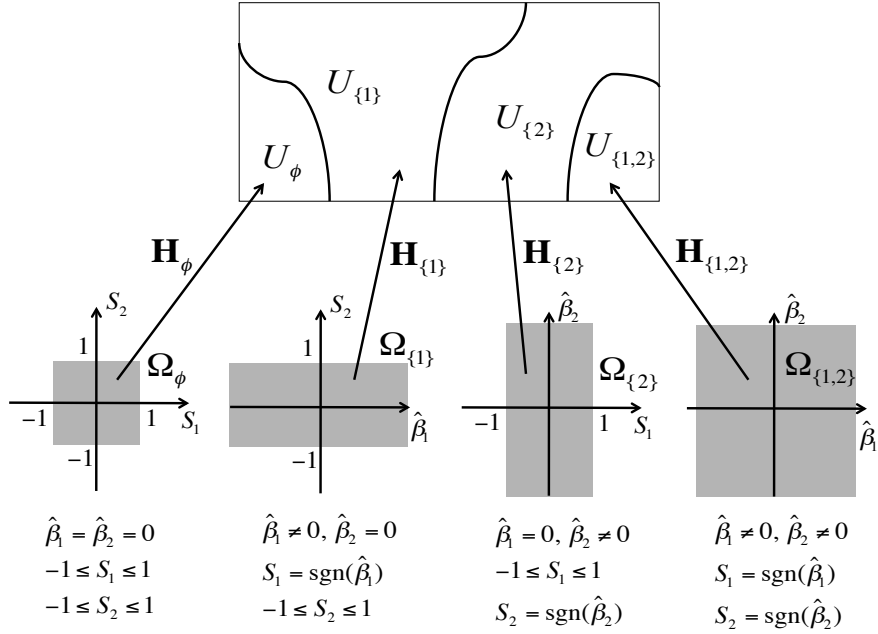} \\
   \caption{The bijection $\bfH$, its restrictions $\bfH_A$, the four subspaces $\Omega_A$ (shaded areas) and
   the corresponding partition in the space of $\bfU$ for $p=2$. 
    \label{fig:bijection}}
\end{figure}

\begin{remark}\label{rm:bijectionlowdim}
The simple fact that $\bfH$ maps every point in $\Omega$ into $\row(\bfX)=\R^p$ 
is crucial to the derivation of the sampling distribution of 
$(\hbmbeta_{\calA},\bfS_{\calI},\calA)$ in the low-dimensional setting. This means that every  
$(\bfb_A,\bfs_I,A) \in \Omega$ is the solution to Equation \eqref{eq:lassojoint_a}
for $\bfU=\bfu=\bfH(\bfb_A,\bfs_I,A;\bmbeta)$, and therefore one can simply find the probability density 
of $(\hbmbeta_{\calA},\bfS_{\calI},\calA)$ at $(\bfb_A,\bfs_I,A)$ by the density of $\bfU$ at  $\bfu$. 
This is not the case when $p>n$ (Section~\ref{sec:highdim}).
\end{remark}

\subsection{The sampling distribution}

Now we can use the bijection $\bfH$ to find the distribution of $(\hbmbeta_\calA,\bfS_\calI, \calA)$ from the distribution of $\bfU$. Let $\xi_k$ denote $k$-dimensional Lebesgue measure.  

\begin{theorem}\label{thm:joint}
Assume that $\rank(\bfX)=p$ and let $f_{\bfU}$ be the probability density of $\bfU$ with respect to $\xi_p$.
For $(\bfb_A,\bfs_I,A)\in\Omega$, the joint distribution of $(\hbmbeta_\calA,\bfS_\calI, \calA)$ is given by
\begin{eqnarray}\label{eq:jointdensity}
 P(\hbmbeta_A\in d\bfb_A,\bfS_I \in d\bfs_I,\calA=A) 
&=&f_{\bfU}(\bfH(\bfb_A,\bfs_I,A;\bmbeta))|\det\bfD(A)| \xi_p(d\bfb_A d\bfs_I) \nonumber \\
& \defi & \pi(\bfb_A,\bfs_I,A) \xi_p(d\bfb_A d\bfs_I),
\end{eqnarray}
and the distribution of $(\hbmbeta_\calA, \calA)$ is a marginal distribution given by
\begin{align}\label{eq:distrnbeta}
 P(  \hbmbeta_A \in d\bfb_A, \calA=A) 
 =\left[\int_{[-1,1]^{p-|A|}}\pi(\bfb_A,\bfs_I,A)\xi_{p-|A|}(d\bfs_I)\right]\xi_{|A|}(d\bfb_A). 
\end{align}
\end{theorem}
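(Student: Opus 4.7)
The plan is to reduce the claim to a standard change-of-variables argument once we refine the partition given in Lemma~\ref{lm:bijectioninlowdim}. The space $\Omega_A$ in \eqref{eq:subspace} is not a single domain of smoothness for $\bfH_A$ because the term $\lambda \bfW_A \sgn(\bfb_A)$ in \eqref{eq:lassojoint} is only piecewise constant in $\bfb_A$. So I would first refine the partition of $\Omega$ by both the active set $A$ and the sign pattern $\bfs_A \in \{-1,1\}^{|A|}$ on $A$, writing $\Omega_{A,\bfs_A} = \{\bfb_A \in \R^{|A|}: \sgn(\bfb_A)=\bfs_A\} \times [-1,1]^{p-|A|}$. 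On each such piece, the restriction $\bfH_{A,\bfs_A}$ of $\bfH$ is genuinely affine in $(\bfb_A,\bfs_I)$, with constant term $\lambda \bfW_A \bfs_A - \bfC\bmbeta$ and linear part $\bfD(A)(\bfb_A^{\trans},\bfs_I^{\trans})^{\trans}$.

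Next, I would carry out the change of variables. Since $\rank(\bfX)=p$ implies $\bfC_{\calA\calA}>0$, the computation in \eqref{eq:defD} gives $|\det \bfD(A)|>0$, so $\bfH_{A,\bfs_A}$ is a $C^1$ diffeomorphism from $\Omega_{A,\bfs_A}$ onto its image $U_{A,\bfs_A}\subseteq \R^p$, with constant Jacobian determinant $\det \bfD(A)$. By Lemma~\ref{lm:bijectioninlowdim}, the images $\{U_{A,\bfs_A}\}$ form a (measure-theoretic) partition of $\R^p$. For any Borel test set $E\subseteq \Omega_{A,\bfs_A}\times\{A\}$, the event $\{(\hbmbeta_\calA,\bfS_\calI,\calA)\in E\}$ coincides with $\{\bfU \in \bfH_{A,\bfs_A}(E)\}$, so
\begin{equation*}
\Prob\bigl((\hbmbeta_\calA,\bfS_\calI,\calA)\in E\bigr)
= \int_{\bfH_{A,\bfs_A}(E)} f_{\bfU}(\bfu)\, \xi_p(d\bfu)
= \int_{E} f_{\bfU}\bigl(\bfH(\bfb_A,\bfs_I,A;\bmbeta)\bigr) |\det \bfD(A)|\, \xi_p(d\bfb_A d\bfs_I),
\end{equation*}
which is exactly \eqref{eq:jointdensity} on the piece. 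Since this holds for every $(A,\bfs_A)$ and these pieces cover $\Omega$ up to a $\xi_p$-null set (the hyperplanes where some $\hbeta_j=0$ with $j\in A$), the density formula extends to all of $\Omega$.

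Finally, the marginal formula \eqref{eq:distrnbeta} follows by integrating $\bfs_I$ out over $[-1,1]^{p-|A|}$, since conditional on $\calA=A$ the subgradient coordinates $\bfS_\calI$ lie in that box by \eqref{eq:defS}. The main obstacle I anticipate is purely bookkeeping rather than conceptual: one must be careful that the sign indeterminacy of $\sgn(\bfb_A)$ at the boundary $b_j=0$ is harmless (the boundary has $\xi_p$-measure zero), and that across different sign patterns the Jacobian determinant is the same $|\det \bfD(A)|$ because $\bfD(A)$ does not depend on $\bfs_A$. Once those points are noted, the argument is a direct application of the change-of-variables formula followed by Fubini's theorem for the marginalization.
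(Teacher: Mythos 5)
Your proposal is correct and follows essentially the same route as the paper: a change of variables through the bijection $\bfH_A$ with constant Jacobian $\bfD(A)$ on each piece of the partition of $\Omega$, followed by integrating out $\bfs_I$ for the marginal. The extra refinement by sign pattern $\bfs_A$ just makes explicit what the paper handles by noting that $b_j\ne 0$ on $\Omega_A$ renders $\bfH_A$ differentiable with derivative $\bfD(A)$ independent of the signs.
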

\begin{proof}
Let $\bfu=\bfH(\bfb_A,\bfs_I,A;\bmbeta)= \bfH_A(\bfb_A,\bfs_I;\bmbeta)$. 
From \eqref{eq:lassojoint} and \eqref{eq:subspace}, one sees that
for any fixed $A$, $b_j\ne 0$ for all $j\in A$ and $\bfH_A$ is differentiable. Differentiating $\bfu$ with respect to $(\bfb_A,\bfs_I)$,
\begin{equation*}
d\bfu=\frac{\partial \bfH_A}{\partial(\bfb_A,\bfs_I)^{\trans}}\left(\begin{array}{l}
d\bfb_A  \\ d\bfs_I \end{array}\right)
=\bfD(A) \left(\begin{array}{l}
d\bfb_A  \\ d\bfs_I \end{array}\right)
\end{equation*}
and thus $\xi_p(d\bfu)=|\det\bfD(A)|\xi_p(d\bfb_A d\bfs_I)$.  Since $\bfH$ and 
$\bfH_A:\Omega_A \to U_A$ are bijections, a change of variable gives
\begin{align*}
& P(\hbmbeta_A\in d\bfb_A,\, \bfS_I \in d\bfs_I,\calA=A)  =  P(\bfU \in d\bfu) \\
   &\quad\quad =  f_{\bfU}(\bfH(\bfb_A,\bfs_I,A;\bmbeta)) |\det\bfD(A)|\xi_p(d\bfb_A d\bfs_I).
\end{align*}
Integrating \eqref{eq:jointdensity} over $\bfs_I \in [-1,1]^{p-|A|}$ 
gives \eqref{eq:distrnbeta}.
\end{proof}

\begin{remark}\label{rm:compare}
Equation~\eqref{eq:jointdensity} gives the joint distribution of $(\hbmbeta_\calA,\bfS_\calI, \calA)$ and 
effectively the joint distribution of $(\hbmbeta,\bfS)$. The density $\pi(\bfb_A,\bfs_I,A)$ is defined 
with respect to the product of $\xi_p$ and counting measure on $2^{\{1,\ldots,p\}}$. Analogously, the sampling distribution
of $\hbmbeta$ is given by the distribution of $(\hbmbeta_\calA, \calA)$ in \eqref{eq:distrnbeta}. To be rigorous, \eqref{eq:jointdensity} is derived by assuming that $(\bfb_A,\bfs_I)$ is an interior point of $\Omega_A$. Note that $(\bfb_A,\bfs_I)\in\Omega_A$ is not in the interior if and only if $|s_j|=1$ for some $j\in I$, and thus the Lebesgue measure of the union of these points is zero. Therefore, it will cause no problem at all to use $\pi$ as the density for all points in $\Omega$. The joint distribution of $(\hbmbeta_\calA,\bfS_\calI, \calA)$ has at least two nice properties which make it much more tractable than the distribution of $\hbmbeta$. First, the density $\pi$ does not involve multidimensional integral and has a closed-form expression that can be calculated explicitly if $f_{\bfU}$ is given. 
Second, the continuous components $(\hbmbeta_\calA,\bfS_\calI)$ 
always have the same dimension $(=p)$ for any value of $\calA$, 
while $\hbmbeta_\calA$ lives in $\R^{|\calA|}$ whose dimension changes
with $\calA$. These two properties are critical to the development of MCMC to sample from $\pi$. 
See Section~\ref{sec:reversibility} for more discussion. We explicitly include the dominating Lebesgue measure to clarify
the dimension of a density.
\end{remark}

\begin{remark}\label{rm:marginal}
The distribution of $\hbmbeta=(\hbeta_1,\ldots,\hbeta_p)$ in \eqref{eq:distrnbeta} is essentially 
defined for each $A$. In many problems, one may be interested in the marginal distribution of $\hbeta_j$
such as for calculating p-values and constructing confidence intervals. To obtain such a marginal distribution,
we need to sum over all possible active sets, which cannot be done analytically. Our strategy is to 
draw samples from the joint distribution of $(\hbmbeta_\calA, \bfS_\calI,\calA)$ by a Monte Carlo method. Then from 
the Monte Carlo samples one can easily approximate any marginal distribution of interest, such as that of $\hbeta_j$. 
This is exactly our motivation
for estimator augmentation, which is in spirit similar to the use of auxiliary variables in the MCMC literature.
\end{remark}

To further help our understanding of the density $\pi$, 
consider a few conditional and marginal distributions derived from the joint
distribution \eqref{eq:jointdensity}. First, the sampling distribution of the active set $\calA$ is given by
\begin{equation}\label{eq:distrnforA}
P(\calA=A)=\int_{\Omega_A} \pi(\bfb_A,\bfs_I,A) \xi_p(d\bfb_A d\bfs_I)\defi Z_A,
\end{equation}
where $\Omega_A$ is the subspace for $(\hbmbeta_A,\bfS_I)$ defined in \eqref{eq:subspace}. In other words,
$Z_A$ is the probability of $\Omega_A\times\{A\}$ with respect to the joint distribution $\pi$. 
Second, the conditional density of $(\hbmbeta_A,\bfS_I)$ given $\calA=A$ (with respect to $\xi_p$) is
\begin{equation}\label{eq:condforbs}
\pi(\bfb_A,\bfs_I \mid A)=\frac{1}{Z_A} \pi(\bfb_A,\bfs_I,A)\propto f_{\bfU}(\bfH(\bfb_A,\bfs_I,A;\bmbeta))
\end{equation}
for $(\bfb_A,\bfs_I)\in \Omega_A \subset \R^p$. 
Using $p=2$ as an illustration, the joint density
$\pi$ is defined over all four shaded areas in Figure~\ref{fig:bijection}, 
while a conditional density $\pi(\cdot \mid A)$ is defined on each one of them.
To give a concrete probability calculation, for $a_2>a_1>0$, 
\begin{eqnarray*}
P(\hbeta_1 \in [a_1,a_2], \hbeta_2=0) & = & P(\hbeta_1 \in[a_1,a_2], \calA=\{1\}) \\
& = & \int_{a_1}^{a_2} \int_{-1}^{1} \pi(b_1,s_2,\{1\}) ds_2 db_1,
\end{eqnarray*}
which is an integral over the rectangle $[a_1,a_2] \times [-1,1]$ in $\Omega_{\{1\}}$ (Figure~\ref{fig:bijection}).
Clearly, this probability can be approximated by Monte Carlo integration if we have enough samples from $\pi$.

\begin{remark}
We emphasize that the weights $w_j$ and the tuning parameter $\lambda$ are assumed to be fixed in Theorem~\ref{thm:joint}. For the adaptive Lasso, one may choose $w_j=|\tdbeta_j|^{-1}$ based on an initial estimate $\tdbeta_j$. The distribution \eqref{eq:jointdensity} is valid for the adaptive Lasso only if we ignore the randomness in $\tdbeta_j$ and regard $w_j$ as constants during the repeated sampling procedure.
\end{remark}

\subsection{Normal errors}

Denote by $\dnorm_k(\bmmu,\bmSigma)$ the $k$-variate normal distribution with mean $\bmmu$ and covariance
matrix $\bmSigma$, and by $\phi_k(\bfz;\bmmu,\bmSigma)$ its probability density function.
If the error $\bmeps \sim \dnorm_n(\bfzr,\sigma^2 \bfI_n)$ and $\rank(\bfX)=p$, 
then $\bfU \sim \dnorm_p(\bfzr,\frac{\sigma^2}{n}\bfC)$.
In this case, the joint density $\pi$ \eqref{eq:jointdensity} has a closed-form expression.
Recall that $\bfs_A=\sgn(\bfb_A)$ and define
\begin{eqnarray}
 \bmmu (A,\bfs_A;\bmbeta) & = & [\bfD(A)]^{-1} \left( \bfC\bmbeta - \lambda  \bfW_A \bfs_A\right),  \label{eq:munorm} \\
\bmSigma(A;\sigma^2) & = & \frac{\sigma^2}{n}[\bfD(A)]^{-1}\bfC[\bfD(A)]^{-\trans}.  \label{eq:varnorm}
\end{eqnarray}
\begin{corollary}\label{cor:normaljoint}
If $\rank(\bfX)=p$ and $\bmeps \sim \dnorm_n(\bfzr,\sigma^2 \bfI_n)$, then 
the joint density of $(\hbmbeta_{\calA},\bfS_{\calI},\calA)$ is
\begin{equation}\label{eq:jointdensitynormal}
\pi(\bfb_A,\bfs_I,A) =\phi_p(\bfz ; \bmmu(A,\bfs_A;\bmbeta),\bmSigma(A;\sigma^2))\I((\bfz,A)\in \Omega), 
\end{equation}
where $\bfz=(\bfb_A,\bfs_I)\in\R^p$ and $\I(\cdot)$ is an indicator function.
\end{corollary}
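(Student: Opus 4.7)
The plan is to substitute the normal density of $\bfU$ into the general expression for $\pi$ from Theorem~\ref{thm:joint} and to recognize the result, after collecting a determinant factor, as a multivariate normal density on $\bfz = (\bfb_A, \bfs_I)$. Since $\bfU = \tfrac{1}{n}\bfX^{\trans}\bmeps$ with $\bmeps \sim \dnorm_n(\bfzr, \sigma^2 \bfI_n)$, and $\rank(\bfX) = p$ ensures $\bfC$ is positive definite, it follows that $\bfU \sim \dnorm_p(\bfzr, \tfrac{\sigma^2}{n}\bfC)$. Theorem~\ref{thm:joint} then gives $\pi(\bfb_A, \bfs_I, A) = \phi_p\bigl(\bfH(\bfb_A, \bfs_I, A; \bmbeta);\, \bfzr,\, \tfrac{\sigma^2}{n}\bfC\bigr) \cdot |\det \bfD(A)|$ on $\Omega$.

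The key observation is that the bijection $\bfH$ in \eqref{eq:lassojoint} is affine in $\bfz$. Using \eqref{eq:munorm}, I would rewrite $\bfH(\bfb_A, \bfs_I, A; \bmbeta) = \bfD(A)\bfz + \lambda \bfW_A \bfs_A - \bfC\bmbeta = \bfD(A)\bigl(\bfz - \bmmu(A, \bfs_A; \bmbeta)\bigr)$. Substituting into the Gaussian exponent and using \eqref{eq:varnorm} to identify $\bmSigma(A;\sigma^2)^{-1} = (n/\sigma^2)\,\bfD(A)^{\trans}\bfC^{-1}\bfD(A)$, the quadratic form $\tfrac{n}{\sigma^2}\bfH^{\trans}\bfC^{-1}\bfH$ collapses to $(\bfz - \bmmu)^{\trans}\bmSigma^{-1}(\bfz - \bmmu)$, where I abbreviate $\bmmu = \bmmu(A, \bfs_A; \bmbeta)$ and $\bmSigma = \bmSigma(A;\sigma^2)$.

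For the normalizing constant, a direct determinant computation gives $|\det \bmSigma| = (\sigma^2/n)^p |\det \bfC| \cdot |\det \bfD(A)|^{-2}$, so that $\bigl|\tfrac{\sigma^2}{n}\bfC\bigr|^{-1/2} \cdot |\det \bfD(A)| = |\det \bmSigma|^{-1/2}$. This is precisely where the Jacobian factor $|\det \bfD(A)|$ brought along by the change of variables in Theorem~\ref{thm:joint} gets absorbed into the new covariance determinant. Combining exponent and normalizing constant collapses $\pi$ into $\phi_p(\bfz; \bmmu, \bmSigma)$, and the indicator $\I((\bfz, A) \in \Omega)$ is simply inherited from the support of the joint density in \eqref{eq:jointdensity}.

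No step is genuinely hard; the main obstacle is bookkeeping, namely verifying that the Jacobian factor and the covariance determinant depend on $\bfD(A)$ in a compatible way, and that the stated $\bmSigma^{-1}$ is consistent with the quadratic form emerging from $\bfH^{\trans}\bfC^{-1}\bfH$. The required invertibility of $\bfD(A)$ and of $\bfC$ follows from $\rank(\bfX) = p$, which gives $\bfC_{\calA\calA} > 0$ and hence $\det \bfD(A) \ne 0$ via \eqref{eq:defD}.
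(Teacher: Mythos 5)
Your proposal is correct and follows essentially the same route as the paper: both rewrite $\bfH$ as the affine map $\bfD(A)[\bfz-\bmmu(A,\bfs_A;\bmbeta)]$ and convert $\phi_p(\cdot\,;\bfzr,n^{-1}\sigma^2\bfC)\,|\det\bfD(A)|$ into $\phi_p(\bfz;\bmmu,\bmSigma)$, the only difference being that you verify the quadratic form and determinant matching explicitly while the paper invokes the standard affine-transformation property of the normal density in one step.
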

\begin{proof}
First note that 
\begin{equation}\label{eq:affineKKT}
\bfH(\bfb_A,\bfs_I,A;\bmbeta)=\bfD(A)[\bfz-\bmmu(A,\bfs_A;\bmbeta)].
\end{equation}
Under the assumptions, $\bfU \sim \dnorm_p(\bfzr,\frac{\sigma^2}{n}\bfC)$. By Theorem~\ref{thm:joint},
\begin{eqnarray*}
\pi(\bfb_A,\bfs_I,A)&=&\phi_p\left(\bfD(A)[\bfz-\bmmu(A,\bfs_A;\bmbeta)]; \bfzr,n^{-1}{\sigma^2}\bfC\right)|\det\bfD(A)| \\
& = & \phi_p\left(\bfz ; \bmmu(A,\bfs_A;\bmbeta),n^{-1}{\sigma^2}[\bfD(A)]^{-1}\bfC[\bfD(A)]^{-\trans}\right) \\
& = & \phi_p(\bfz ; \bmmu(A,\bfs_A;\bmbeta),\bmSigma(A;\sigma^2))
\end{eqnarray*}
for $(\bfb_A,\bfs_I,A)=(\bfz,A)\in \Omega$.
\end{proof}

Without the normal error assumption, 
Corollary~\ref{cor:normaljoint} is still a good approximation when $n$ is large and $p$ is fixed, 
since $\sqn \bfU \toinL \dnorm_p(\bfzr,\sigma^2\bfC)$ assuming $\frac{1}{n}\bfX^{\trans}\bfX\to \bfC>0$ as $n\to\infty$.  

Note that both the continuous components $\bfz$ and the active set $A$ are 
arguments of the density \eqref{eq:jointdensitynormal}.
For different $A$ and $\bfs_A$, the normal density $\phi_p$ has different parameters.
Given a particular $A^*$ and $\bfs^*\in \{\pm 1\}^{|A^*|}$, let $I^*=\{1,\ldots,p\}\setminus A^*$ and
\begin{equation}\label{eq:subsetAs}
\Omega_{A^*,\bfs^*}=\{(\bfb_{A^*},\bfs_{I^*})\in\Omega_{A^*}: \sgn(\bfb_{A^*})=\bfs^*\}.
\end{equation}
Then $\Omega_{A^*,\bfs^*}\times\{A^*\}$ is the subset of $\Omega$ corresponding to 
the event $\{\calA=A^*, \sgn(\hbmbeta_{A^*})=\bfs^*\}$.
For $\bfz \in\Omega_{A^*,\bfs^*}$, the density $\pi(\bfz,A^*)$ is identical to
$\phi_p(\bfz ; \bmmu(A^*,\bfs^*;\bmbeta),\bmSigma(A^*;\sigma^2))$, i.e.,
\begin{equation*}
\pi(\bfz,A^*)\I(\bfz \in\Omega_{A^*,\bfs^*})
=\phi_p(\bfz ; \bmmu(A^*,\bfs^*;\bmbeta),\bmSigma(A^*;\sigma^2))\I(\bfz \in\Omega_{A^*,\bfs^*}).
\end{equation*}
Intuitively, this is because $\bfH$ restricted to $A=A^*$ and $\bfs_{A^*}=\bfs^*$ is simply an affine map
[see \eqref{eq:affineKKT}].
Consequently, the probability of $\Omega_{A^*,\bfs^*}\times\{A^*\}$ with respect to $\pi$ is 
\begin{equation}\label{eq:signpm}
P(\calA=A^*, \sgn(\hbmbeta_{A^*})=\bfs^*)=\int_{\Omega_{A^*,\bfs^*}} \phi_p(\bfz ; \bmmu(A^*,\bfs^*;\bmbeta),\bmSigma(A^*;\sigma^2)) \xi_p(d\bfz),
\end{equation}
and $[\hbmbeta_{A^*},\bfS_{I^*} \mid \calA=A^*,\sgn(\hbmbeta_{A^*})=\bfs^*]$
is the truncated $\dnorm_p(\bmmu(A^*,\bfs^*;\bmbeta),\bmSigma(A^*;\sigma^2))$ on $\Omega_{A^*,\bfs^*}$.
For $p=2$, if $A^*=\{1\}$, and $\bfs^*=-1$, the region $\Omega_{\{1\},-1}=(-\infty,0)\times[-1,1]$ is
the left half of the $\Omega_{\{1\}}$ in Figure~\ref{fig:bijection} and the density $\pi$ restricted to this region is
the same as the part of a bivariate normal density on the same region.

If $\bfC=\bfI_p$ and $\bfW=\bfI_p$, the Lasso is equivalent to soft-thresholding the ordinary least-squares estimator $\hbmbeta^{\text{OLS}}=(\hbeta^{\text{OLS}}_j)_{1:p}$. In this case, \eqref{eq:munorm} and \eqref{eq:varnorm} have simpler forms:
\begin{eqnarray*}
\bmmu (A,\bfs_A;\bmbeta)& =&
\left(
\begin{array}{c}
\bmbeta_{A}-\lambda \bfs_{A}  \\
\lambda^{-1} \bmbeta_I
\end{array}
\right), \\
\bmSigma(A;\sigma^2)& = & 
\frac{\sigma^2}{n}\left( \begin{array}{cc}
\bfI_{|A|} &  \bfzr \\
\bfzr & \lambda^{-2} \bfI_{|I|}
 \end{array}\right).
\end{eqnarray*}
By \eqref{eq:signpm} we find, for example,
\begin{align*}
&P(\calA=A,\sgn(\hbmbeta_A)=(1,\ldots,1)) \\ 
& = \prod_{j\in A} \int_{0}^{\infty} \phi\left(b_j;\beta_j-\lambda,\frac{\sigma^2}{n}\right) db_j \cdot
\prod_{j\in I} \int_{-1}^1 \phi\left(s_j;\frac{\beta_j}{\lambda},\frac{\sigma^2}{\lambda^2 n}\right) ds_j \\
&= \prod_{j\in A} P(\hbeta^{\text{OLS}}_j> \lambda) \cdot  \prod_{j\in I} P(|\hbeta^{\text{OLS}}_j| \leq \lambda),
\end{align*}
where the last equality is due to that $\hbmbeta^{\text{OLS}} \sim \dnorm_p(\bmbeta,n^{-1}\sigma^2 \bfI_p)$. One sees that our result is consistent with that  obtained directly from soft-thresholding each component of $\hbmbeta^{\text{OLS}}$ by $\lambda$.

\subsection{Estimation}\label{sec:asymest}

To apply Theorem~\ref{thm:joint} in practice, one needs to estimate $f_{\bfU}$ and $\bmbeta$ if they are not given.
Suppose that $f_{\bfU}$ is estimated by $\hat{f}_{\bfU}$ and $\bmbeta$ is estimated by 
$\ckbmbeta$. Then, the corresponding estimate of the density $\pi$ is 
\begin{equation}\label{eq:estjointdensity}
\hat{\pi}(\bfb_A,\bfs_I,A) = \hat{f}_{\bfU}(\bfH(\bfb_A,\bfs_I,A;\ckbmbeta))|\det\bfD(A)|.
\end{equation}

Since $\E(\bfU)=\bfzr$ and $\Var(\sqn\bfU)=\sigma^2\bfC$, estimating $f_{\bfU}$
reduces to estimating $\sigma^2$ when $\bmeps$ is normally distributed or when the sample size $n$ is large.
A consistent estimator of $\sigma^2$ can be constructed given a consistent estimator of $\bmbeta$. 
For example, when $p<n$ one may use
\begin{equation}\label{eq:estsigma2}
\hat{\sigma}^2 = \frac{\| \bfY-\bfX \ckbmbeta\|_2^2}{n-p},
\end{equation}
provided that $\ckbmbeta$ is consistent for $\bmbeta$.
If $\bmeps$ does not follow a normal distribution,
one can apply other parametric or nonparametric methods to estimate $f_{\bfU}$.
Here, we propose a bootstrap-based approach under the assumption that $\bfU$ is elliptically symmetric.
That is, $\bftdU=\bfC^{-1/2}\bfU$ is spherically symmetric: For $\bfv_1,\bfv_2 \in \R^p$,
if $\|\bfv_1\|_2=\|\bfv_2\|_2$ then $f_{\bftdU}(\bfv_1)=f_{\bftdU}(\bfv_2)$, 
where $f_{\bftdU}$ is the density of $\bftdU$.
Generate bootstrap samples, $\bmeps^{(i)}=(\varepsilon^{(i)}_1,\ldots,\varepsilon^{(i)}_n)$ for
$i=1,\ldots,K$, by resampling with replacement from 
$\hat{\bmeps}=(\hat{\varepsilon}_1,\ldots,\hat{\varepsilon}_n)=\bfY-\bfX \ckbmbeta$,
and calculate $\bftdU^{(i)}=\frac{1}{n}\bfC^{-1/2}\bfX^{\trans}\bmeps^{(i)}$ for each $i$.
Given $0=h_0 < h_1 < \cdots < h_M<\infty$, let $K_m=|\{i:h_{m-1} \leq \|\bftdU^{(i)}\|_2 < h_m\}|$ for $m=1,\ldots,M$.
The density of $\bftdU$ is then estimated by
\begin{equation}
\hat{f}_{\bftdU}(\bfv) \propto \sum_{m=1}^M \frac{K_m}{h_m^p - h_{m-1}^p}  \I (h_{m-1} \leq \|\bfv\|_2 < h_m)
\end{equation}
for $\|\bfv\|_2 \in [0,h_M)$. The density for $\|\bfv\|_2 \geq h_M$ can be estimated by linear extrapolation 
of $\log\hat{f}_{\bftdU}$. Finally, set $\hat{f}_{\bfU}(\bfu) = \hat{f}_{\bftdU}(\bfC^{-1/2}\bfu)(\det\bfC)^{-1/2}$.

In general, estimating $f_{\bfU}$ is difficult when $p$ is large. One may have to assume some parametric
density for $\bfU$, which reduces the problem to the estimation of a few unknown parameters.
Besides normality, one may assume that $\bfU$ follows a multivariate $t$ distribution, which 
is motivated from a Bayesian perspective to be discussed in Section~\ref{sec:Bayesian}.

Sampling from $\pi$ or $\hat{\pi}$ can be very useful for statistical inference based on a Lasso-type estimator. We may directly draw $(\sbmbeta,\bfS^*)$ from $\hat{\pi}$ given $(\ckbmbeta,\hsigma)$ and use the conditional distribution $[(\sbmbeta-\ckbmbeta)\mid \ckbmbeta,\hsigma]$ to construct confidence regions around $\hbmbeta$. Under some assumptions, the conditional distribution $[(\sbmbeta-\ckbmbeta)\mid \ckbmbeta,\hsigma]$ provides a valid approximation to the true sampling distribution of $(\hbmbeta-\bmbeta)$. We derive nonasymptotic error bounds for this approximation in Section~\ref{sec:errorbounds} after the development of our method in the high-dimensional setting. If $\bmbeta$ is specified in the null hypothesis in a significance test, then samples from $\pi$ can be used to calculate p-values. This aspect will be explored in Section~\ref{sec:pvalue}.

\section{MCMC algorithms}\label{sec:MC}

In this section, we develop MCMC algorithms to sample from $\pi$ given $\bmbeta$
and $f_{\bfU}$ (or $\sigma^2$). Before that,
we first introduce a direct sampling approach which includes the residual bootstrap method as a special case. 
\begin{routine}[Direct sampler]\label{rt:directsample}
Assume the error distribution is $\calD(\bfzr,\sigma^2 \bfI_n)$. For $t=1,\ldots,L$
\begin{itemize}
\item[(1)]{draw $\bmeps^{(t)}\sim \calD(\bfzr,\sigma^2 \bfI_n)$ and set $\bfY^{(t)} = \bfX \bmbeta + \bmeps^{(t)}$;}
\item[(2)]{find the minimizer $\hbmbeta^{(t)}$ of \eqref{eq:lassoloss} with $\bfY^{(t)}$ in place of $\bfY$;}
\item[(3)] if needed, calculate the subgradient vector $\bfS^{(t)}=(n\lambda\bfW)^{-1}\bfX^{\trans}(\bfY^{(t)}-\bfX\hbmbeta^{(t)})$.
\end{itemize}
\end{routine}
This approach directly draws $\bfY^{(t)}$ from its sampling distribution and requires a numerical optimization
algorithm in step (2) for each sample. Moreover, step (1) will be complicated if
we cannot draw independent samples from $\calD(\bfzr,\sigma^2 \bfI_n)$. If $\bmeps^{(t)}$ is drawn by
resampling residuals, then Routine~\ref{rt:directsample} is equivalent to the bootstrap method of \cite{KnightFu00}.

As the density ${\pi}(\bfb_A,\bfs_I,A)$ \eqref{eq:jointdensity} has a closed-form expression given $\bmbeta$
and $f_{\bfU}$, MCMC and IS 
can be applied to sample from and calculate expectations with respect to the distribution.
These methods may offer much more flexible and efficient alternatives to the direct sampling approach, 
although the samples are either dependent or weighted.
In what follows, we propose a few special designs targeting at different applications to 
exemplify the use of MCMC methods. Examples of IS will be given in Section~\ref{sec:pvalue} under
the high-dimensional setting.

\subsection{Reversibility}\label{sec:reversibility}
 
Our goal is to design a reversible Markov chain on the space $\Omega$, which is composed
of a finite number of subspaces $\Omega_A$, each having the same dimension $p$. Therefore, moves with an ordinary 
Metropolis-Hastings (MH) ratio are sufficient, which can be seen as follows.
For any $(\bfb_A,\bfs_I,A)\in\Omega$, let $\bmtheta=(\theta_1,\ldots,\theta_p)$ with components given by
\begin{equation}\label{eq:definetheta}
\theta_j=\left\{\begin{array}{ll}
b_j & \mbox{ if } j\in A \\
s_j & \mbox { otherwise},
\end{array}\right.
\end{equation}
i.e., $\bmtheta_A=\bfb_A$ and $\bmtheta_I=\bfs_I$.  
Then our target distribution is
$\pi(\bmtheta_A,\bmtheta_I,A)\xi_p(d\bmtheta)$.
Suppose that $(\bmtheta,A)$ is the current state and we 
have a proposal for a new state $(\bmtheta^{\dag},A^{\dag})$.
In general, the proposal may only change some components of $\bmtheta$,
say $\theta_j$ for $j \in B\subseteq \{1,\ldots,p\}$, such that $\bmtheta^{\dag}_{-B}=\bmtheta_{-B}$. Let 
$q((\bmtheta,A),(\bmtheta^{\dag},A^{\dag}))$ be the density of this proposal with respect to $\xi_{|B|}$ and $I^{\dag}=\{1,\ldots,p\}\setminus A^{\dag}$. The MH ratio in terms of probability measures is 
\begin{eqnarray}
& &\min\left\{1, \frac{\pi(\bmtheta^{\dag}_{A^{\dag}},\bmtheta^{\dag}_{I^{\dag}},A^{\dag}) 
\xi_p(d\bmtheta^{\dag})}{\pi(\bmtheta_A,\bmtheta_I,A)\xi_p(d\bmtheta)} \frac{q((\bmtheta^{\dag},A^{\dag}),(\bmtheta,A))
\xi_{|B|}(d\bmtheta_B)}{q((\bmtheta,A),(\bmtheta^{\dag},A^{\dag}))\xi_{|B|}(d\bmtheta^{\dag}_B)} \right\} \nonumber\\
&=& \min\left\{1, \frac{\pi(\bmtheta^{\dag}_{A^{\dag}},\bmtheta^{\dag}_{I^{\dag}},A^{\dag})}{\pi(\bmtheta_A,\bmtheta_I,A)} 
\frac{q((\bmtheta^{\dag},A^{\dag}),(\bmtheta,A))}{q((\bmtheta,A),(\bmtheta^{\dag},A^{\dag}))}
\frac{\xi_{p-|B|}(d\bmtheta^{\dag}_{-B})}{\xi_{p-|B|}(d\bmtheta_{-B})} \right\}.
\label{eq:MHratio}
\end{eqnarray}
As $\bmtheta^{\dag}_{-B}=\bmtheta_{-B}$, the dominating measures in \eqref{eq:MHratio} cancel out 
and the ratio reduces to a standard MH ratio involving only densities.

Now we see that our strategy of estimator augmentation plays two roles in MCMC sampling. First,
$\bfS_\calI$ plays the role of an auxiliary variable: The target distribution $\pi$ for $(\hbmbeta_{\calA},\bfS_\calI,\calA)$
has a closed-form density which allows one to design an MCMC algorithm, while the distribution of interest,
that for $(\hbmbeta_{\calA},\calA)$, is a marginal distribution of $\pi$ without a closed-form density. 
Second, $\bfS_\calI$ also plays the role
of dimension matching so that the continuous components $(\hbmbeta_{\calA},\bfS_\calI)$ always have the
same dimension in any subspace. This eliminates the need for reversible jump MCMC \citep{Green95}. On the contrary,
if we were to sample $(\hbmbeta_{\calA},\calA)$ (assuming a closed-form approximation to its density),
moves between two subspaces of different dimensions
would require reversible jumps, which are usually much harder to design.

\subsection{The MH Lasso sampler}\label{sec:MHLasso}

We develop an MH algorithm, called the MH Lasso sampler (MLS), with coordinate-wise update. That is,
to sequentially update each $\theta_j$, $j=1,\ldots,p$, while holding other components fixed.
Suppose the current state is $(\bmtheta,A)$. 
We design four moves to propose a new state $(\bmtheta^{\dag},A^{\dag})$, which are grouped into two types
according to whether $A^{\dag}=A$ or not. In the following proposals,
$\theta_j^{\dag}=b_j^{\dag}$ if $j\in A^{\dag}$ and $\theta_j^{\dag}=s_j^{\dag}$ otherwise.
\begin{definition} \label{def:proposals}
Proposals in the MLS for a given $j\in\{1,\ldots,p\}$.
\begin{itemize}
\item{Parameter-update proposals: 
(P1) If $j\in A$, draw $b_j^{\dag}\sim \dnorm(b_j,\tau_j^2)$. 
(P2) If $j \notin A$, draw $s_j^{\dag}\sim\text{Unif}(-1,1)$. Set $A^{\dag}=A$ in both (P1) and (P2).
}
\item{Model-update proposals:
(P3) If $j\in A$, set $A^{\dag}=A\setminus \{j\}$ and draw $s_j^{\dag}\sim\text{Unif}(-1,1)$. 
(P4) If $j \notin A$, set $A^{\dag}=A\cup \{j\}$ and draw $b_j^{\dag}\sim \dnorm(0,\tau_j^2)$. 
}
\end{itemize}
\end{definition}

The two parameter-update proposals, (P1) and (P2), are symmetric. They only change the value of $\theta_j$ and leave
$A^{\dag} =A$ so that $\det \bfD(A^{\dag})=\det \bfD(A)$. From \eqref{eq:jointdensity}, 
one sees that the MH ratio is simply
\begin{equation*} 
\min\left\{1,\frac{f_{\bfU}(\bfH(\bmtheta^{\dag}_{A},\bmtheta^{\dag}_{I},A;\bmbeta))}
{f_{\bfU}(\bfH(\bmtheta_A,\bmtheta_I,A;\bmbeta))}\right\},
\end{equation*}
which can be computed very efficiently, especially for a normal error distribution. 
The proposal (P3) removes a variable from the active set and (P4) adds a variable to the active set. 
Both propose moves between two subspaces. The two proposals are the
reverse of each other and have a simple one-dimensional density. 
To be concrete, the MH ratio for proposal (P3) is
\begin{equation*}
\min\left\{1, \frac{\pi(\bmtheta^{\dag}_{A^{\dag}},\bmtheta^{\dag}_{I^{\dag}},A^{\dag})}
{\pi(\bmtheta_A,\bmtheta_I,A)}\cdot\frac{\phi(b_j; 0, \tau^2_j)}{1/2} \right\},
\end{equation*}
and analogously for proposal (P4).
One needs to calculate the ratio between two determinants for these MH ratios, 
\begin{equation}\label{eq:detDratio}
\frac{|\det \bfD(A^\dag)|}{|\det \bfD(A)|} = \frac{\det\bfC_{A^{\dag}A^{\dag}}}{\det\bfC_{AA}} (w_j\lambda)^{|A|-|A^{\dag}|}
\end{equation} 
by \eqref{eq:defD}. As the two sets $A$ and $A^{\dag}$ differ by only one element, the ratio on the right-hand side can be calculated efficiently. When $|A|$ is large, we use the sweep operator to dynamically update $\bfC_{AA}^{-1}$ (the inverse of $\bfC_{AA}$) and obtain the ratio. See Appendix for further details. In general, however, a model-update proposal is more time-consuming than a parameter-update proposal. 

This computational efficiency consideration motivates
the following scheme in the MLS which uses both types of proposals.
Let $K$ be an integer between 1 and $p$ and $\bmalpha=(\alpha_j)_{1:p}$ be a vector with every
$\alpha_j>0$.
\begin{routine}[MLS]\label{rt:MHLS}
Suppose the current state is $(\bmtheta^{(t)},A^{(t)})$. 
\begin{itemize}
\item[(1)]{Draw $K$ elements without replacement from $\{1,\ldots,p\}$ with the probability of drawing 
$j$ proportional to $\alpha_j$ for each $j$. Let $M^{(t)}$ be the set of the $K$ elements.}
\item[(2)]{For $j \in M^{(t)}$, sequentially update each $\theta_j$ and the active set $A$ 
by an MH step with a model-update proposal.}
\item[(3)]{For $j \notin M^{(t)}$, sequentially update each $\theta_j$ by an MH step with a parameter-update proposal.}
\end{itemize}
After the above $p$ MH steps in an iteration, the state is updated to $(\bmtheta^{(t+1)},A^{(t+1)})$.
\end{routine}

The MLS has three input parameters, $K$, $\bmalpha$, and $(\tau^2_j)_{1:p}$. 
Specification of these parameters that gives good empirical performance will be provided in the numerical
examples (Section~\ref{sec:numerical}).

\subsection{The Gibbs Lasso sampler}

Let $a_j=\I(j \in A)$ and $\bfa=(a_j)_{1:p}$. 
Conditional densities $\pi(\theta_j, a_j \mid \bmtheta_{-j}, \bfa_{-j})$ can be derived from the joint density $\pi$,
which allows for the development of a Gibbs sampler. However, as each conditional
sampling step involves calculation of one-dimensional integrals and sampling from truncated distributions,
the Gibbs sampler is more time-consuming and less efficient than the MLS for all examples on which
we have tested these algorithms. 

\subsection{Conditioning on active set}

Suppose that we have constructed a Lasso-type estimate $\hbmbeta^*$ from an observed dataset
and the set of selected variables is $A^*$, which defines an estimated model.
One may want to study the sampling distribution of the estimator given the estimated model, i.e.,
$[\hbmbeta_{A^*} \mid \calA = A^*]$. Confidence intervals of penalized estimators
have been constructed by approximating this distribution via local expansion of the $\ell_1$ norm \citep{FanLi01,Zou06}. 
Since local approximation 
may not be accurate for a finite sample, Monte Carlo sampling from this conditional distribution
may provide more accurate results. However, the direct sampling approach is not applicable
in practice, because $\calA = A^*$ is often a rare event unless $p$ is very small. On the contrary, it is very efficient
to draw samples by an MH algorithm from the conditional distribution
\begin{equation}\label{eq:condAdensity}
{\pi}(\bfb_{A^*},\bfs_{I^*}\mid A^*) \propto {f}_{\bfU}(\bfH(\bfb_{A^*},\bfs_{I^*},A^*;\bmbeta)),
\end{equation}
where $I^*=\{1,\ldots,p\}\setminus A^*$, according to \eqref{eq:condforbs}.
The distribution of interest, $[\hbmbeta_{A^*} \mid \calA = A^*]$, is a marginal
distribution of \eqref{eq:condAdensity}.
Since evaluation of this density does not involve calculation of determinants, 
each MH step is very fast.
\begin{routine}[MLS given active set]\label{rt:MLSfixedA}
Given the current state $(\bfb_{A^*}^{(t)},\bfs_{I^*}^{(t)})$, 
sequentially draw $b_j^{(t+1)}$ for each $j\in A^*$ by an MH step with proposal (P1) and 
$s_j^{(t+1)}$ for each $j \notin A^*$ with proposal (P2) in one iteration.
\end{routine}

\subsection{Reparameterization view}\label{sec:viewofMCMC}

To ease notation, write $\bmtheta=(\bmtheta_A,\bmtheta_I)$ and $\bfH_A(\bmtheta)=\bfH(\bmtheta,A)=\bfH(\bmtheta_A,\bmtheta_I,A;\bmbeta)$ for $\bmtheta$ defined in \eqref{eq:definetheta}. Suppose that $(\bmtheta^{(t)},A^{(t)})$ are simulated by the MLS (Routine~\ref{rt:MHLS}) and let $\bfu^{(t)}=\bfH(\bmtheta^{(t)},A^{(t)})$. Since $\bfH$ is a bijection, $\bfu^{(t)}$ is a Markov chain that leaves $f_{\bfU}$ invariant. Therefore, the MLS can be understood as an MH algorithm targeting at $f_{\bfU}$ with moves designed under local reparameterization, $\bmtheta=\bfH_{A}^{-1}(\bfu)$ for $\bfu \in U_A=\bfH_A(\Omega_A;\bmbeta)$ \eqref{eq:subspace}. The Jacobian of this reparameterization is $[\bfD(A)]^{-1}$. Under this view, the MH ratio for a proposal $\bfu^{\dag}=\bfH(\bmtheta^{\dag},{A^\dag})$ given the current $\bfu$ is 
\begin{equation*}
\min\left\{1, \frac{f_{\bfU}(\bfu^{\dag}) }{f_{\bfU}(\bfu)} 
\frac{ q((\bmtheta^{\dag},A^{\dag}),(\bmtheta,A))|\det \bfD(A)|^{-1}}{q((\bmtheta,A),(\bmtheta^{\dag},A^{\dag}))|\det \bfD(A^\dag)|^{-1}} \right\},
\end{equation*}
which of course coincides with \eqref{eq:MHratio}. Moreover, when $A=A^{\dag}$ such as in proposals (P1) and (P2), the Jacobian determinants cancel out as we are using the same reparameterization $\bfH_A^{-1}$ for both the proposal and the current state. Otherwise, the ratio of the Jacobian determinants accounts for the use of different reparameterizations. Clearly, if $\bfu^{(t)}\sim f_{\bfU}$ then $(\bmtheta^{(t)},A^{(t)})\sim \pi$. 

This view provides an insight into the computational efficiency of the MLS. Under normal error assumption, $f_{\bfU}$ is the density of a multivariate normal distribution, for which a simple MH algorithm is computationally tractable and can be quite efficient. We make a comparison with the direct sampler at a conceptual level. The direct sampler draws $\bfU$ via a linear transformation of $\bmeps\sim\dnorm_n(\bfzr,\sigma^2\bfI_n)$, which costs $n$ draws from a univariate normal distribution followed by a multiplication with a size $p\times n$ matrix. After that, we find $(\bmtheta,A)=\bfH^{-1}(\bfU)$ by numerical minimization due to the lack of a closed-form inverse of the mapping $\bfH$. The MLS draws $p$ ($<n$) univariate proposals in one iteration, and does not need any numerical procedure to map $\bfu^{(t)}$ back to the space $\Omega$ of $(\bmtheta,A)$ since the moves are by design in that space already. The mapping $\bfH$ from $(\bmtheta^{(t)},A^{(t)})$ to $\bfu^{(t)}$ is simple and can be calculated analytically. This is fundamentally different from direct sampling which replies on a numerical procedure to find the image of each draw of $\bfU$ under the mapping $\bfH^{-1}$. The relatively time-consuming step in the MLS is calculating the ratio \eqref{eq:detDratio} when a model-update proposal is used, which can be done by at most sweeping a $|A|\times |A|$ matrix on a single position (Appendix). Owing to sparsity, $|A|$ is usually much smaller than $p$, which greatly speeds up this step. Since the target distribution in the space of $\bfU$ has a nice unimodal density, the chain $\bfu^{(t)}$ often converges fast and has low autocorrelation. Consequently, we expect to see efficiency gain over direct sampling for the same amount of computing time, which will be confirmed numerically in the next subsection.

As in the following routine, by a special initialization such that $(\bmtheta^{(1)},A^{(1)})\sim\pi$, the MLS can reach equilibrium in one step, which totally removes the need for burn-in iterations. This will make our method suitable for parallel computing. See Section~\ref{sec:limitation} for a more detailed discussion. However, to demonstrate the efficiency of the MLS as an independent method, we did not use Routine~\ref{rt:dirmcmc} in the numerical results. 

\begin{routine}\label{rt:dirmcmc}
Draw $(\bfb^{(1)},\bfs^{(1)})$ from the direct sampler and let $(\bmtheta^{(1)},A^{(1)})$ be its equivalent representation. With $(\bmtheta^{(1)},A^{(1)})$ as the initial state, generate $(\bmtheta^{(t)},A^{(t)})$ for $t=2,\ldots,N$ by an MCMC algorithm targeting at $\pi$.
\end{routine}

\subsection{Numerical examples}\label{sec:numerical}

We demonstrate with numerical examples the effectiveness of the above MCMC
algorithms by comparing against the direct sampling approach. To this end,
we simulated four datasets with different combinations of $n$, $p$, 
and $\sigma^2$ (Table~\ref{tab:data}). The vector of true coefficients $\bmbeta_0$ has 10 nonzero components, 
$\beta_{0j}=1$ for $j=1,\ldots,5$ and $\beta_{0j}=-1$ for $j=6,\ldots,10$. Each row of $\bfX$ was generated independently from $\dnorm_p(\bfzr,\bmSigma_{\bfX})$, where the diagonal and the off-diagonal elements of $\bmSigma_{\bfX}$
are 1 and $0.25$, respectively. Given the design matrix $\bfX$, the response vector $\bfY$ was drawn
from $\dnorm_n(\bfX \bmbeta_0, \sigma^2 \bfI_n)$.

\begin{table}[ht]
\centering
\caption{Simulated datasets for MCMC\label{tab:data}}
\vspace{0.05in}
   \begin{tabular}{c|cccc} 
   \hline
     Dataset	  	&  A & B & C &D  \\
     \hline
   $(n,p,\sigma^2)$ &$(500,100,1)$ & $(500,200,1)$ & $(300,100,4)$ & $(300,200,4)$ \\
   $|A^*|$ & 23 & 22 & 25 & 57 \\
  \hline
  \end{tabular}
\end{table}

The weights $w_j$ \eqref{eq:lassoloss} were set to 1 for all the following numerical results.
The Lars package by Hastie and Efron was applied to find the solution path for each dataset. The value of
$\lambda$ was chosen by minimizing the $C_p$ criterion implemented in the package,
which determined the estimated coefficients, $\hbmbeta^*=(\hbeta^*_1,\ldots,\hbeta^*_p)$, of a dataset.
The number of selected variables, $|A^*|$, for each dataset is given in Table~\ref{tab:data}.
We considered two types of error distributions, 
the normal distribution and the elliptically symmetric distribution.
Correspondingly, we calculated $\hat{\sigma}^2$ by \eqref{eq:estsigma2} with $\ckbmbeta=\hbmbeta^*$
or constructed $\hat{f}_{\bfU}$ by the approach in Section~\ref{sec:asymest}. 
For all the results, step (2) of the direct sampler (Routine~\ref{rt:directsample}) was implemented with the Lars package.

We first examined the performance of the MLS on sampling from the joint distribution
\eqref{eq:jointdensity} given $\hbmbeta^*$ and $\hat{\sigma}^2$ or 
$\hat{f}_{\bfU}$. Let $\omega_j = \Phi(-|\hat{\beta}^*_{j}|/\zeta_j)$ for $j=1,\ldots,p$, where $\zeta_j$ is the
standard error of $\hbeta^{\text{OLS}}_j$ and $\Phi$ is the cumulative distribution function of $\dnorm(0,1)$.
We set $K=p/5$ and $\alpha_j \propto \omega_j + \omega_0$, where $\omega_0=\sum_j \omega_j/(5p)$
serves as a baseline weight so that each variable has a reasonable chance to be selected for model-update proposals.
See Routine~\ref{rt:MHLS} for notations. Under this setting, if the estimate $\hat{\beta}^*_{j}$
is close to zero relative to $\zeta_j$, it will have a higher chance for model-update proposals.
The $\tau_j$ used in the proposals (Definition~\ref{def:proposals}) 
was set to $2\zeta_j$.
The MLS was applied to each dataset 10 times independently. 
Each run consisted of $L=5,500$ iterations with the first 500 as the burn-in period. 
In what follows, the sampler is abbreviated as MLSn and MLSe under the normal and the
elliptically symmetric error distributions, respectively.

Figure~\ref{fig:diagnostic}(a) is the scatter plot of the samples of $\hbeta_1$ and $\hbeta_{50}$,
and illustrates that the distributions of some $\hbeta_j$ indeed have a point mass at zero. The histogram of the subgradient $S_{50}$ is shown in Figure~\ref{fig:diagnostic}(b) with two point masses on $\pm 1$ and otherwise continuous.
Mixing of the MLS was fast, as demonstrated with two chains in Figure~\ref{fig:diagnostic}(c),
where the initial values were chosen to be about 20 standard deviations away
from each other. Figure~\ref{fig:diagnostic}(d) shows the fast decay of the autocorrelation among the samples of a $\hbeta_j$, decreasing to below 0.05 in 10 to 15 iterations. The acceptance rate of the model-update proposals was generally between 0.2 and 0.4.
For the parameter-update proposals, the acceptance rate was between 0.2 and 0.4 for (P1)
and was higher than 0.6 for (P2), which is an independent proposal.

From the MCMC samples, we estimated the selection
probability $P_{s,j}=P(\hbeta_j \ne 0)$, the 2.5\% and the 97.5\% quantiles
of $\hbeta_j$, and the mean and the standard deviation of the conditional distribution $[\hbeta_j \mid \hbeta_j \ne 0]$
for each $j$. Since theoretical values are not available, we applied
the direct sampling approach to simulate 5,000 independent samples for each dataset under the normal error distribution.
These independent samples were used to estimate the above quantities as the ground truth.  
The MSEs across 10 independent runs of the MLS were calculated, 
and reported in Table~\ref{tab:MSEjoint} are the average MSEs over all $j$ for estimating the above five quantities.
One clearly sees that all the estimates were very accurate.
The MSE of the MLSe was greater than, but on the same order as, that of the MLSn for most estimates, 
which is expected due to the loss of efficiency without assuming a normal error distribution.

\begin{table}[t]
\caption{MSE comparison for simulation from the joint sampling distribution \label{tab:MSEjoint}}
\begin{center}
   \begin{tabular}{ccccccc} 
   \hline
      & Method	 & $P_{s}$	& 2.5\%	& 97.5\% & mean & SD \\
     \hline
       & MLSn & $3.38\times 10^{-4}$ &	$1.82\times 10^{-5}$	& $1.79\times 10^{-5}$ &	$4.36\times 10^{-6}$	& $2.78\times 10^{-6}$ \\
    A & MLSe & 1.29&1.20&1.19&	0.97&	1.38\\
\vspace{0.05in}  & DSn &1.11 &	2.28&	2.45& 2.23 &	2.53\\
       & MLSn & $2.13\times 10^{-4}$&	$2.89\times 10^{-5}$	&$1.74\times 10^{-5}$	& $1.22\times 10^{-5}$	& $8.44\times 10^{-6}$\\
    B & MLSe & 1.20	&	1.07 &	1.10 &	1.08 &	1.30 \\
\vspace{0.05in} & DSn & 1.26 &	1.97 &	1.89 &	2.29 &	2.74 \\
        & MLSn &  $4.14\times 10^{-4}$ &	$1.23\times 10^{-4}$&$1.24\times 10^{-4}$&	$3.20\times 10^{-5}$	& $2.28\times 10^{-5}$  \\
     C  & MLSe & 1.55 &	2.09 &	1.78 &	1.03	&2.46\\
 \vspace{0.05in}& DSn & 0.47 &	1.18 &	1.33 &	1.24 &	1.39 \\   
       & MLSn & $4.34\times 10^{-4}$ & $2.96\times 10^{-4}$	& $2.85\times 10^{-4}$ &$6.37\times 10^{-5}$	& $5.02\times 10^{-5}$\\
    D & MLSe & 2.74 &		3.81 &	3.61 &	1.17 &	5.70 \\
         & DSn & 0.69 &	1.52 &	1.34 &	1.21 &	1.57 \\
     \hline
  \end{tabular}
 \end{center} 
Note:  For the MLSe and the DSn, reported is the ratio of MSE to that of the MLSn. The sweep operator was used in the MLS to calculate determinant ratios for dataset D.
\end{table}

We compared the efficiency of the MLS against the direct sampler (DSn)
under the same amount of running time and under the same normal error distribution. 
The DSn generated around 500 samples
in the same amount of time for 5,500 iterations of the MLSn. 
The ratio of the MSE of the DSn 
to that of the MLSn was calculated for each estimate (Table~\ref{tab:MSEjoint}).
For most estimates, the MLSn seems to be more efficient and may reduce the MSE
by 10\% to 60\%. The improvement was more significant for datasets A and B where the sample size $n=500$.
For the other two datasets, the MLSn showed a higher MSE in estimating selection probabilities
but was more accurate for all other estimates. Furthermore, if the error distribution is more complicated 
such that one cannot simulate samples independently from the distribution, the efficiency of the direct sampler may be even lower.
These results clearly confirm the notion that the MLS can serve as an efficient alternative to the direct sampling method
for simulating from the sampling distribution of a Lasso-type estimator.

Next, we implemented Routine~\ref{rt:MLSfixedA} to sample from the conditional distribution of $\hbmbeta$ given the model
selected according to the $C_p$ criterion, i.e., $[\hbmbeta_{A^*} \mid \calA =A^*]$ with $|A^*|$ given in Table~\ref{tab:data}.
The same parameter setting as that in the previous example was used to run the MLSn and the MLSe.  
We estimated the 2.5\% and the 97.5\% quantiles, the mean, and the standard deviation of $\hbeta_j$
for $j \in A^*$. The model space is composed of
$2^p$ models, and the probability of the model $A^*$, $P(\calA =A^*)$, is practically zero 
for the datasets used here. Therefore, the direct sampling approach is not applicable.
This shows the advantage and flexibility of the Monte Carlo algorithms.
Since we cannot construct ground truth for this example, the accuracy of an estimate is measured by its variance across 10 independent runs of the MLS, averaging over $j\in A^*$ (Table~\ref{tab:MSEcond}).
The variance of every estimate was on the order of $10^{-5}$ or smaller for datasets A, B and C and was
on the order of $10^{-4}$ or smaller for dataset D under both error models. This highlights the stability of
the MLS in approximating sampling distributions across different runs. There were cases in which
the variance of the MLSe was smaller. This does not necessarily suggest that the MLSe provided a more accurate
estimate, as the loss of efficiency without the normal error assumption is likely to result in a higher bias. 

\begin{table}[t]
\centering
\caption{Variance comparison for conditional sampling given active set\label{tab:MSEcond}}
\vspace{0.05in}
   \begin{tabular}{cccccc} 
   \hline
      & Method	 & 2.5\%	& 97.5\% & mean & SD \\
     \hline
   				A    & MLSn & $1.21\times 10^{-5}$& $1.28\times 10^{-5}$	& $2.21\times 10^{-6}$	& $1.03\times 10^{-6}$ \\
 \vspace{0.05in}    & MLSe &0.90 &	1.02 &	0.92 &	1.05 \\

   				B    & MLSn & $1.47\times 10^{-5}$ &	$1.19\times 10^{-5}$	& $3.19\times 10^{-6}$ &	$9.60\times 10^{-7}$ \\
 \vspace{0.05in}    & MLSe &1.22 &	1.15 &	1.02 &	1.23 \\
 
 				C   & MLSn & $7.66\times 10^{-5}$	 & $8.65\times 10^{-5}$	 &$1.59\times 10^{-5}$& $7.08\times 10^{-6}$ \\
 \vspace{0.05in}    & MLSe &1.07 &	0.95 &	1.01 &	1.00 \\
 
 				D & MLSn & $1.67\times 10^{-4}$	&$1.78\times 10^{-4}$ &	$2.55\times 10^{-5}$	 &$1.28\times 10^{-5}$\\
   & MLSe &	0.77 &	0.97 &	0.77 &	0.79 \\		
     \hline
  \end{tabular}
  
\vspace{0.05in}Note: Variance of the MLSe is reported as the ratio to that of the MLSn. 
\end{table}

\section{High-dimensional setting}\label{sec:highdim}

Recent efforts have established theoretical properties of $\ell_1$-penalized linear regression
in high dimension with $p>n$ \citep{Meinshausen06,ZhaoYu06,ZhangHuang08,Bickel09}. 
Under this setting, we assume $\rank(\bfX)=n<p$. Consequently, $\row(\bfX)$ is an $n$-dimensional subspace of $\R^p$ and the Gram matrix $\bfC$ has $n$ positive eigenvalues, denoted by $\Lambda_j>0$, $j=1,\ldots,n$. The associated orthonormal eigenvectors $\bfv_j \in \R^p$, $j=1,\ldots,n$, form a basis for $\row(\bfX)$. 
Choose orthonormal vectors $\bfv_{p+1},\ldots,\bfv_p$ to form a basis for the null space of $\bfX$, $\nul(\bfX)$, and let $\bfV=(\bfv_1 | \ldots |\bfv_p)$. Then $R=\{1,\ldots,n\}$ and $N=\{n+1,\ldots,p\}$ index the columns of $\bfV$ that form respective bases for $\row(\bfX)$ and $\nul(\bfX)$.

\begin{assumption}\label{as:X}
Every $n$ columns of $\bfX$ are linearly independent and every $(p-n)$ rows of $\bfV_N$ are linearly independent. 
\end{assumption}
The first part of this assumption is sufficient for the columns of $\bfX$ being in general position, which guarantees that $(\hbmbeta,\bfS)$ is unique for any $\bfY$ and $\lambda>0$ (Lemma~\ref{lm:prelim}). The second part will ease our derivation of the joint density of the augmented estimator. Note that Assumption~\ref{as:X} holds with probability one if the entries of $\bfX$ are drawn from a continuous distribution on $\R^{n\times p}$.

\subsection{The bijection}

Although $\bfU=\frac{1}{n}\bfX^{\trans}\bmeps$ is a $p$-vector, by definition it always lies in $\row(\bfX)$. 
Therefore, $\bfV_N^{\trans} \bfU=\bfzr$ and the $n$-vector $\bfR=\bfV_R^{\trans}\bfU$ 
gives the coordinates of $\bfU$ with respect to the basis $\bfV_R$.
If $\bmeps$ follows a continuous distribution on $\R^n$, then $\bfR$ has
a proper density with respect to $\xi_n$. For example, if $\bmeps\sim \dnorm_n(\bfzr,\sigma^2 \bfI_n)$,
then $\bfR\sim\dnorm_n(\bfzr,\frac{\sigma^2}{n} \bmLmd)$ with $\bmLmd=\diag(\Lambda_1,\ldots,\Lambda_n)$. Now $\bfR$ plays the same role
as $\bfU$ does in the low-dimensional case. We will use the known distribution of $\bfR$
to derive the distribution of the augmented estimator $(\hbmbeta_\calA,\bfS_\calI,\calA)$. 

However, a technical difficulty is that when $p>n$, the map $\bfH$ defined in \eqref{eq:lassojoint}
is not a mapping from $\Omega$
to $\row(\bfX)$ as $\bfH(\bfb_A,\bfs_I,A;\bmbeta)\in\R^p$ is not necessarily in $\row(\bfX)$
for every $(\bfb_A,\bfs_I,A)\in \Omega$. We thus need to remove those ``illegal"
points in $\Omega$ so that the image of $\bfH$ always lies in the row space of $\bfX$. This is achieved by
imposing the constraint that 
\begin{equation}\label{eq:constraintinU}
\bfV_N^{\trans} \bfH(\hbmbeta_{\calA},\bfS_\calI,\calA;\bmbeta)=\bfV_N^{\trans} \bfU=\bfzr,
\end{equation} 
i.e., the image of $\bfH$ must be orthogonal to $\nul(\bfX)$. It is more convenient to use the equivalent definition of $\bfH$ 
in \eqref{eq:lassoKKTinU}, i.e.,
\begin{equation}\label{eq:equivH}
\bfH(\hbmbeta_{\calA},\bfS_\calI,\calA;\bmbeta)=\bfC \hbmbeta + \lambda \bfW \bfS - \bfC \bmbeta.
\end{equation}
Because $\bfC(\hbmbeta-\bmbeta)=\frac{1}{n}\bfX^{\trans}\bfX(\hbmbeta-\bmbeta)\in\row(\bfX)$, 
constraint \eqref{eq:constraintinU} is equivalent to
\begin{equation}\label{eq:constraintonS}
 \bfV_N^{\trans} \bfW \bfS = \bfV_{\calA N}^{\trans} \bfW_{\calA\calA} \bfS_\calA 
 + \bfV_{\calI N}^{\trans} \bfW_{\calI\calI} \bfS_\calI = \bfzr.
\end{equation}
In words, the constraint is that
the vector $\bfW\bfS$ must lie in $\row(\bfX)$. Therefore, we have a more 
restricted space for the augmented estimator $(\hbmbeta_{\calA},\bfS_\calI,\calA)$ in the high-dimensional case,
\begin{equation}\label{eq:spacehigh}
\Omega_r=\{(\bfb_A,\bfs_I,A)\in \Omega: \bfV_{AN}^{\trans} \bfW_{AA} \sgn(\bfb_A) 
 + \bfV_{IN}^{\trans} \bfW_{II} \bfs_I =\bfzr \}.
\end{equation}
Restricted to this space, $\bfH$ is a bijection.
\begin{lemma}\label{lm:bijectioninhighdim}
If $p>n$ and Assumption~\ref{as:X} holds, then for any $\bmbeta$ and $\lambda>0$, the restriction of the mapping $\bfH$ \eqref{eq:lassojoint} to $\Omega_r$, denoted by $\bfH\mid_{\Omega_r}$, 
is a bijection that maps $\Omega_r$ onto $\row(\bfX)$.
\end{lemma}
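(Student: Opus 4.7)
The plan is to verify the three standard ingredients of a bijection onto $\row(\bfX)$ in turn: well-definedness of the codomain, surjectivity, and injectivity. The decisive preliminary observation is that the constraint defining $\Omega_r$ in \eqref{eq:spacehigh} was engineered to be exactly the condition $\bfV_N^{\trans}\bfH(\bfb_A,\bfs_I,A;\bmbeta)=\bfzr$, via the equivalence between \eqref{eq:constraintinU} and \eqref{eq:constraintonS} together with \eqref{eq:equivH}. Since $\bfV_N$ is a basis of $\nul(\bfX)=\row(\bfX)^{\perp}$, this orthogonality condition is nothing other than membership in $\row(\bfX)$, so $\bfH\mid_{\Omega_r}$ automatically takes values in $\row(\bfX)$. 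No further work is needed for well-definedness.

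For surjectivity I would exhibit a preimage of an arbitrary $\bfu\in\row(\bfX)$ by reducing to the original linear model. Since $\rank(\bfX)=n$, the map $\bmeps\mapsto n^{-1}\bfX^{\trans}\bmeps$ from $\R^n$ onto $\row(\bfX)$ is surjective, so pick any $\bmeps$ with $n^{-1}\bfX^{\trans}\bmeps=\bfu$ and set $\bfY=\bfX\bmbeta+\bmeps$. The Lasso problem \eqref{eq:lassoloss} with response $\bfY$ admits a minimizer with subgradient $(\hbmbeta,\bfS)$; translating into its equivalent triple $(\hbmbeta_\calA,\bfS_\calI,\calA)$ and reading off \eqref{eq:lassoKKTinU} gives $\bfH(\hbmbeta_\calA,\bfS_\calI,\calA;\bmbeta)=\bfu$. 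Because $\bfu\in\row(\bfX)$, the constraint \eqref{eq:constraintonS} is automatically satisfied, so the preimage lies in $\Omega_r$.

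For injectivity, suppose two triples in $\Omega_r$ share a common image $\bfu\in\row(\bfX)$. I would run the same construction of $\bfY$ from $\bfu$, under which each triple yields, via its equivalent $(\hbmbeta,\bfS)$-form, a minimizer–subgradient pair for the very same Lasso problem. The first half of Assumption~\ref{as:X} guarantees the columns of $\bfX$ are in general position, so Lemma~\ref{lm:prelim} forces $(\hbmbeta,\bfS)$ to be unique, and hence the two triples, as equivalent representations of the same pair, must coincide.

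The main obstacle is conceptual rather than computational: one has to recognize that on the full space $\Omega$ the map $\bfH$ does not respect the row-space constraint implicit in $\bfU=n^{-1}\bfX^{\trans}\bmeps$, and that the constraint carving out $\Omega_r$ is precisely the pullback of this requirement to parameter space. Once this is clear, the uniqueness argument from the low-dimensional case (Lemma~\ref{lm:bijectioninlowdim}) carries over by parameterizing the noise through $\bmeps$ rather than through $\bfY$. The second half of Assumption~\ref{as:X} plays no role in this lemma; it will only enter when the joint density on $\Omega_r$ is subsequently derived, so I would flag but not invoke it here.
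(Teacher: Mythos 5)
Your proposal is correct and follows essentially the same route as the paper: the image lies in $\row(\bfX)$ precisely because the constraint defining $\Omega_r$ is the pullback of $\bfV_N^{\trans}\bfU=\bfzr$, and existence/uniqueness of the preimage for each $\bfu\in\row(\bfX)$ comes from realizing $\bfu$ as $n^{-1}\bfX^{\trans}\bmeps$ for some $\bfY=\bfX\bmbeta+\bmeps$ and invoking the uniqueness of $(\hbmbeta,\bfS)$ from Lemma~\ref{lm:prelim} under general position. Your observation that only the first half of Assumption~\ref{as:X} is needed here, with the second half reserved for the density derivation, is also consistent with the paper.
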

\begin{proof}
For any $\bfU \in \row(\bfX)$, there is a unique $(\hbmbeta_\calA,\bfS_\calI,\calA)$ 
such that $\bfU=\bfH(\hbmbeta_{\calA},\bfS_\calI,\calA;\bmbeta)\in\row(\bfX)$ by Lemma~\ref{lm:prelim}. 
Thus, $(\hbmbeta_\calA,\bfS_\calI,\calA)$ satisfies the constraint
\eqref{eq:constraintinU} and lies in $\Omega_r$.
For any $(\hbmbeta_\calA,\bfS_\calI,\calA) \in \Omega_r$, 
$\bfV_N^{\trans} \bfH(\hbmbeta_{\calA},\bfS_\calI,\calA;\bmbeta)=\bfzr$
and $\bfH$ maps it into $\row(\bfX)$.
\end{proof}
\begin{remark}\label{rm:dimofH}
Fixing $\calA=A$, \eqref{eq:constraintonS} specifies $|N|=p-n$ constraints, 
and thus, the continuous components  $(\hbmbeta_A,\bfS_I)\in\R^p$
lie in an $n$-dimensional subspace of $\Omega_A$ \eqref{eq:subspace}. 
The bijection $\bfH\mid_{\Omega_r}$ 
maps a finite number of $n$-dimensional subspaces onto $\row(\bfX)$ which is an $\R^n$.
\end{remark}
Now we represent the bijection $\bfH\mid_{\Omega_r}$ in terms of its coordinates with respect to $\bfV_R$ 
and equate it with $\bfR=\bfV_R^{\trans}\bfU$:
\begin{equation}\label{eq:HinR}
\bfR=\bfV_R^{\trans} \bfH(\hbmbeta_\calA,\bfS_\calI,\calA;\bmbeta) \defi \bfH_r(\hbmbeta_\calA,\bfS_\calI,\calA;\bmbeta).
\end{equation}

\subsection{Joint sampling distribution}

The distribution for $(\hbmbeta_{\calA},\bfS_{\calI},\calA) \in \Omega_r$
is completely given by the distribution of $\bfR$ via the bijective map $\bfH_r:\Omega_r \to \R^n$. 
The only task left is to determine the Jacobian of $\bfH_r$, taking into account the constraint \eqref{eq:constraintonS}.
Left Multiplying by $\bfV_R^{\trans}$ both sides of Equation~\eqref{eq:lassojoint_a},
with the simple facts that $\bfV_R^{\trans}\bfW_{\calA}=\bfV_{\calA R}^{\trans}\bfW_{\calA\calA}$ and 
$\bfV_R^{\trans}\bfW_{\calI}=\bfV_{\calI R}^{\trans} \bfW_{\calI\calI}$, gives
\begin{equation}\label{eq:Hinhigh}
\bfR =  \bfV_R^{\trans} \bfC_{\calA} \hbmbeta_{\calA} + \lambda \bfV_{\calA R}^{\trans}\bfW_{\calA\calA} \bfS_{\calA} 
+\lambda \bfV_{\calI R}^{\trans} \bfW_{\calI\calI} \bfS_{\calI} -\bfV_R^{\trans}\bfC\bmbeta.
\end{equation}
For any fixed value of $\calA$, differentiating $\bfR$ and both sides of the constraint \eqref{eq:constraintonS} 
with respect to $(\hbmbeta_{\calA},\bfS_\calI)$ give, respectively,
\begin{eqnarray}
& d\bfR =  \bfV_R^{\trans} \bfC_{\calA} d \hbmbeta_{\calA}  
+\lambda \bfV_{\calI R}^{\trans} \bfW_{\calI\calI} d \bfS_{\calI}, \label{eq:dR}\\
&\bfV_{\calI N}^{\trans} \bfW_{\calI\calI} d \bfS_{\calI}=\bfzr. \label{eq:dS}
\end{eqnarray}
Therefore, the constraint implies that $d \bfS_{\calI}$ is in $\nul (\bfV_{\calI N}^{\trans} \bfW_{\calI\calI})$.
\begin{lemma}\label{lm:rank}
If $p>n$ and Assumption~\ref{as:X} is satisfied, then the dimension of $\nul (\bfV_{\calI N}^{\trans} \bfW_{\calI\calI})$ is $n-|\calA|\geq 0$.
\end{lemma}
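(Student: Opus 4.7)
The plan is to compute the null space dimension via the rank-nullity theorem applied to the $(p-n)\times |\calI|$ matrix $\bfV_{\calI N}^{\trans} \bfW_{\calI\calI}$. The key observations are that (i) $\bfW_{\calI\calI}$ is a diagonal matrix with strictly positive entries $w_j$ and hence invertible, so right-multiplication by it preserves rank, giving
\begin{equation*}
\rank(\bfV_{\calI N}^{\trans} \bfW_{\calI\calI}) = \rank(\bfV_{\calI N}^{\trans}) = \rank(\bfV_{\calI N});
\end{equation*}
and (ii) all that remains is to determine $\rank(\bfV_{\calI N})$ and then subtract from $|\calI|$.

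First I would verify the bound $|\calA|\leq n$, so that $|\calI|=p-|\calA|\geq p-n$. This is where the first half of Assumption~\ref{as:X} enters indirectly: combined with Lemma~\ref{lm:prelim}, the uniqueness of $\hbmbeta$ together with the KKT stationarity on $\calA$ implies that the columns $\bfX_{\calA}$ must be linearly independent (otherwise one could perturb $\hbmbeta_{\calA}$ in the kernel of $\bfX_{\calA}$ while preserving both the fit and the $\ell_1$ norm, contradicting uniqueness). Since $\bfX_{\calA}$ has only $n$ rows, this forces $|\calA|\leq n$.

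Next, I would invoke the second half of Assumption~\ref{as:X}: every $p-n$ rows of $\bfV_N$ are linearly independent. Since $|\calI|\geq p-n$, any $p-n$ rows of the row-subselection $\bfV_{\calI N}$ still form a linearly independent set. As $\bfV_{\calI N}$ has exactly $p-n$ columns, this yields $\rank(\bfV_{\calI N})=p-n$, and therefore $\rank(\bfV_{\calI N}^{\trans}\bfW_{\calI\calI})=p-n$.

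Finally, rank-nullity applied to the $(p-n)\times |\calI|$ matrix $\bfV_{\calI N}^{\trans}\bfW_{\calI\calI}$ gives
\begin{equation*}
\dim \nul(\bfV_{\calI N}^{\trans}\bfW_{\calI\calI}) \;=\; |\calI| - (p-n) \;=\; (p-|\calA|)-(p-n) \;=\; n-|\calA|,
\end{equation*}
which is nonnegative by the bound $|\calA|\leq n$. The only substantive step is the bound $|\calA|\leq n$; everything else is a direct consequence of Assumption~\ref{as:X} together with invertibility of $\bfW_{\calI\calI}$.
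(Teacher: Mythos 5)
Your proof is correct and follows essentially the same route as the paper: bound $|\calA|\leq n$, use the linear independence of any $p-n$ rows of $\bfV_N$ to conclude $\bfV_{\calI N}^{\trans}\bfW_{\calI\calI}$ has rank $p-n$, and apply rank--nullity. The only differences are cosmetic: you make the invertibility of $\bfW_{\calI\calI}$ explicit and sketch the perturbation argument for $|\calA|\leq n$ directly, whereas the paper cites Lemma 14 of Tibshirani (2013) for that bound.
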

\begin{proof}
Under the assumption, the minimizer $\hbmbeta$ of \eqref{eq:lassoloss} is unique and always has an active set with size 
$|\calA|\leq \min\{n,p\}=n$. See Lemma 14 in \cite{Tibshirani13}. 
If Assumption~\ref{as:X} is satisfied, any $|N|$ rows of $\bfV_N$ are linearly independent. Since $|\calI|=p-|\calA|\geq p-n=|N|$,
the rank of the $|N| \times |\calI|$ matrix, $\bfV_{\calI N}^{\trans} \bfW_{\calI\calI}$, is $p-n$.
Then it follows that the dimension of $\nul (\bfV_{\calI N}^{\trans} \bfW_{\calI\calI})$ is $|\calI|-(p-n)=n-|\calA|\geq 0$.
\end{proof}
Let $\bfB(\calI)\in\R^{|\calI|\times(n-|\calA|)}$ be an orthonormal basis for $\nul (\bfV_{\calI N}^{\trans} \bfW_{\calI\calI})$. 
Let $d\tdbfS$ be the coordinates of $d\bfS_{\calI}$ with respect to the basis $\bfB(\calI)$, i.e.,
$d\bfS_{\calI}=\bfB(\calI)d\tilde{\bfS}$. Note that $d\tdbfS \in \R^{n-|\calA|}$ according to 
the above lemma. Then \eqref{eq:dR} becomes
\begin{eqnarray}
d\bfR & =  & \bfV_R^{\trans} \bfC_{\calA} d \hbmbeta_{\calA}  
+\lambda \bfV_{\calI R}^{\trans} \bfW_{\calI\calI} \bfB(\calI) d\tdbfS  \nonumber \\
& = & \bfT(\calA) \left(\begin{array}{c} d \hbmbeta_{\calA} \\ d\tdbfS \end{array}\right), \label{eq:Jacob}
\end{eqnarray} 
where $\bfT(\calA)=(\bfV_R^{\trans} \bfC_{\calA}\mid \lambda \bfV_{\calI R}^{\trans} \bfW_{\calI\calI} \bfB(\calI))$,
an $n\times n$ matrix, is the Jacobian of the map $\bfH_r$. 
The dimension of $(d \hbmbeta_{\calA},d\tdbfS)$ is always $n$ for any $\calA$. 
This confirms the notion in Remark~\ref{rm:dimofH} that the continuous components 
$(\hbmbeta_{\calA},\bfS_{\calI})$ lie in an $n$-dimensional subspace when $\calA$ is fixed.

Now we are ready to
derive the density for $(\hbmbeta_\calA,\bfS_{\calI},\calA)$ in high dimension. For $(\bfb_A,\bfs_I,A)\in\Omega_r$, 
$d\bfs_I=\bfB(I)d\tdbfs$ for some $d\tdbfs \in \R^{n-|A|}$ and $\xi_{n-|A|}(d\tdbfs)$
gives the infinitesimal volume at $\bfs_I$ subject to constraint \eqref{eq:spacehigh}.

\begin{theorem}\label{thm:jointhigh}
Assume that $p>n$ and Assumption~\ref{as:X} holds. Let $f_{\bfR}$ be the probability density
of $\bfR$ with respect to $\xi_n$. For $(\bfb_A,\bfs_I,A)\in\Omega_r$, 
the joint distribution of $(\hbmbeta_\calA,\bfS_\calI, \calA)$ is given by
\begin{eqnarray}\label{eq:jointdensityhigh}
 P(\hbmbeta_A\in d\bfb_A,\bfS_I \in d\bfs_I,\calA=A) 
&=&f_{\bfR}(\bfH_r(\bfb_A,\bfs_I,A;\bmbeta))|\det\bfT(A)| \xi_n(d\bfb_A d\tdbfs) \nonumber \\
& \defi & \pi_r(\bfb_A,\bfs_I,A) \xi_n(d\bfb_A d\tdbfs).
\end{eqnarray}
Particularly, if $\bmeps\sim \dnorm_n(\bfzr,\sigma^2 \bfI_n)$, then 
\begin{equation}\label{eq:densitynormalhigh}
\pi_r(\bfb_A,\bfs_I,A)=\phi_n\left(\bfH_r(\bfb_A,\bfs_I,A;\bmbeta);\bfzr,n^{-1}{\sigma^2} \bmLmd\right)|\det\bfT(A)|.
\end{equation}
\end{theorem}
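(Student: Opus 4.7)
The plan is to follow the same recipe as in Theorem~\ref{thm:joint}, but on the restricted bijection $\bfH_r\colon \Omega_r \to \R^n$ supplied by Lemma~\ref{lm:bijectioninhighdim}. The extra complication is that, once we fix $\calA = A$, the continuous coordinates $(\hbmbeta_A,\bfS_I)$ no longer fill an open subset of $\R^p$ but are instead confined to the $n$-dimensional affine slice cut out by the $|N|=p-n$ linear constraints in \eqref{eq:constraintonS}. So the change-of-variables step must be carried out on that slice with respect to a compatible Lebesgue measure.

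First I would fix an interior point $(\bfb_A,\bfs_I,A)$ of $\Omega_r$ (so $\sgn(\bfb_A)=\bfs_A$ is locally constant and no component of $\bfs_I$ equals $\pm 1$), and construct a local chart on the constrained slice. By Lemma~\ref{lm:rank} the kernel of $\bfV_{IN}^{\trans}\bfW_{II}$ has dimension $n-|A|$, and I would use the orthonormal basis $\bfB(I)$ introduced before \eqref{eq:Jacob} to parametrize admissible perturbations as $d\bfs_I = \bfB(I)\, d\tdbfs$ with $d\tdbfs \in \R^{n-|A|}$. Together with $d\bfb_A \in \R^{|A|}$, this yields the local chart $(d\bfb_A, d\tdbfs)$ of total dimension $n$, and $\xi_n(d\bfb_A\, d\tdbfs)$ is the natural volume element on the slice (orthonormality of $\bfB(I)$ makes this the Hausdorff-type measure consistent with $\xi_{|I|}$ restricted to the constrained manifold).

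Next I would push this forward by $\bfH_r$. The differential computation is already essentially in \eqref{eq:dR}--\eqref{eq:Jacob}: writing $d\bfR = \bfT(A)\, (d\hbmbeta_A, d\tdbfS)^{\trans}$, so the Jacobian of the local chart representation of $\bfH_r\mid_{\Omega_r\cap(\cdot\times\{A\})}\colon \R^n \to \R^n$ is exactly $\bfT(A)$. Its nonvanishing determinant follows from $\bfH_r$ being a bijection onto $\R^n$ (Lemma~\ref{lm:bijectioninhighdim}). Hence $\xi_n(d\bfr) = |\det \bfT(A)|\,\xi_n(d\bfb_A\,d\tdbfs)$, and the standard change-of-variables formula combined with $\bfR = \bfH_r(\hbmbeta_\calA,\bfS_\calI,\calA;\bmbeta)$ gives
\begin{equation*}
P(\hbmbeta_A\in d\bfb_A,\bfS_I\in d\bfs_I,\calA=A) = f_{\bfR}(\bfH_r(\bfb_A,\bfs_I,A;\bmbeta))\,|\det \bfT(A)|\,\xi_n(d\bfb_A\,d\tdbfs),
\end{equation*}
which is \eqref{eq:jointdensityhigh}. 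Boundary points with some $|s_j|=1$ for $j\in I$ form a set of $\xi_n$-measure zero on the slice, so extending $\pi_r$ to all of $\Omega_r$ causes no issue. For the normal case I would just plug in the explicit density of $\bfR$: since $\bfU = \tfrac{1}{n}\bfX^{\trans}\bmeps$ and $\bfR = \bfV_R^{\trans}\bfU$ with $\bfV_R$ orthonormal eigenvectors of $\bfC$ with eigenvalues in $\bmLmd$, a direct covariance calculation gives $\bfR\sim \dnorm_n(\bfzr,n^{-1}\sigma^2\bmLmd)$, yielding \eqref{eq:densitynormalhigh}.

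The main obstacle I anticipate is a bookkeeping one rather than a conceptual one: justifying that $\xi_n(d\bfb_A\,d\tdbfs)$ is the correct dominating measure, i.e., that using the \emph{orthonormal} basis $\bfB(I)$ produces the intrinsic Lebesgue measure on the constrained affine slice so that $\pi_r$ is a bona fide density. Choosing $\bfB(I)$ orthonormal avoids an extra Gram-determinant factor that would otherwise appear; any other basis would give the same measure on $\Omega_r$ only after absorbing $\sqrt{\det(\bfB(I)^{\trans}\bfB(I))}$ into the Jacobian. Once this is pinned down, verifying $|\det\bfT(A)|\neq 0$ via Lemma~\ref{lm:bijectioninhighdim} and applying the standard change-of-variables theorem completes the argument.
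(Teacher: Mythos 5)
Your proposal is correct and follows essentially the same route as the paper's proof: differentiate the restricted map to identify $\bfT(A)$ as the Jacobian in the chart $(d\bfb_A,d\tdbfs)$, invoke the bijectivity of $\bfH_r$ for the change of variables, and substitute the $\dnorm_n(\bfzr,n^{-1}\sigma^2\bmLmd)$ density of $\bfR$ for the normal case. Your additional remarks on the orthonormality of $\bfB(I)$ fixing the dominating measure correspond to the paper's Remark~\ref{rm:welldefine} and are a welcome clarification, not a deviation.
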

\begin{proof}
The proof is analogous to that of Theorem~\ref{thm:joint}.
Let $\bfr=\bfH_r(\bfb_A,\bfs_I,A;\bmbeta) \in \R^n$. For any fixed $A$, 
\begin{equation*}
d\bfr = \bfT(A) \left(\begin{array}{c} d \bfb_A \\ d\tdbfs \end{array}\right)
\end{equation*} 
from \eqref{eq:Jacob} and thus $\xi_n(d\bfr)=|\det\bfT(A)| \xi_n(d\bfb_A d\tdbfs)$.
With the bijective nature of $\bfH_r$ and its restriction to any $A$, a change of variable gives
\begin{align*}
& P(\hbmbeta_A\in d\bfb_A,\bfS_I \in d\bfs_I,\calA=A) =  P(\bfR \in d\bfr) \\
&\quad\quad = f_{\bfR}(\bfH_r(\bfb_A,\bfs_I,A;\bmbeta))|\det\bfT(A)| \xi_n(d\bfb_A d\tdbfs).
\end{align*}
If $\bmeps\sim \dnorm_n(\bfzr,\sigma^2 \bfI_n)$
then $\bfR\sim\dnorm_n(\bfzr,\frac{\sigma^2}{n} \bmLmd)$, which leads to \eqref{eq:densitynormalhigh} immediately.
\end{proof}
\begin{remark}\label{rm:welldefine}
The density $\pi_r(\bfb_A,\bfs_I,A)$ does not depend on which orthonormal basis we choose 
for $\nul(\bfV_{I N}^{\trans} \bfW_{II})$. If $\bfB'(I)$ is another orthonormal basis, then $\bfB'(I)=\bfB(I)\bfO$,
where $\bfO$ is an $(n-|A|) \times (n-|A|)$ orthogonal matrix and $|\det\bfO|=1$. 
Correspondingly, 
\begin{equation*}
\bfT'(A)=(\bfV_R^{\trans} \bfC_{A}\mid \lambda \bfV_{I R}^{\trans} \bfW_{II} \bfB'(I))=\bfT(A)\diag(\bfI_{|A|},\bfO),
\end{equation*}
and thus $|\det \bfT'(A)|=|\det \bfT(A)|$.
\end{remark}
\begin{remark}\label{rm:unified}
One may unify Theorems~\ref{thm:joint} and \ref{thm:jointhigh} with the use of cumbersome notations, but the idea is simple.
Note that $\bfT(A)$ and $\bfD(A)$ in \eqref{eq:lassojoint} are connected by
\begin{equation*}
\bfT(A)=\bfV_R^{\trans} (\bfC_{A}\mid \lambda \bfW_{I}\bfB(I))=\bfV_R^{\trans}\bfD(A) \diag(\bfI_{|A|},\bfB(I)).
\end{equation*}
If $\rank(\bfX)=p\leq n$, the set $N$ reduces to the empty set and $\bfV_R=\bfV$.  
Hence, the constraint \eqref{eq:constraintonS} no long exists,
the space $\Omega_r$ is the same as $\Omega$,
and $\nul(\bfV_{I N}^{\trans} \bfW_{II})$ is simply $\R^{|I|}$ for any $I=\{1,\ldots,p\}\setminus A$. 
Choosing $\bfB(I)=\bfI_{|I|}$ leads to $d\bfs_I=d\tdbfs$, 
which shows that the probability in \eqref{eq:jointdensityhigh} reduces to
that in \eqref{eq:jointdensity}. In this case, 
$\bfT(A)=\bfV^{\trans}\bfD(A)$,
i.e., a column of $\bfT(A)$ gives the coordinates of the corresponding column of $\bfD(A)$ 
with respect to the basis $\bfV$.
\end{remark}

In principle, one can develop MCMC algorithms to sample from the joint distribution \eqref{eq:jointdensityhigh}.
Development of such an algorithm is a little tedious because of the constraints in the definition of $\Omega_r$ 
and the use of different bases $\bfB(I)$ in different subspaces.
However, the explicit density given in Theorem~\ref{thm:jointhigh} allows us to 
develop very efficient IS algorithms for approximating tail probabilities with respect
to the sampling distribution of a Lasso-type estimator.

\section{P-value calculation by IS}\label{sec:pvalue}

To simplify description, we focus on the high-dimensional setting
with normal errors so that the distribution of interest is $\pi_r$ \eqref{eq:densitynormalhigh}. 
For a fixed $\bfX$, the density $\pi_r$, the bijection $\bfH_r$ \eqref{eq:HinR} and the matrix $\bfT$ \eqref{eq:Jacob} are written
as $\pi_r(\bfb_A,\bfs_I,A;\bmbeta,\sigma^2,\lambda)$, $\bfH_r(\bfb_A,\bfs_I,A; \bmbeta, \lambda)$,
and $\bfT(A; \lambda)$, respectively, to explicitly indicate their dependency on different parameters. 
Suppose we are given a Lasso-type estimate $\hbmbeta^{*}$ for an
observed dataset with a tuning parameter $\lambda^{*}$. Under the null model 
$\calH: \bmbeta=\bmbeta_0, \sigma^2=\sigma^2_0$, we want to
calculate the p-value of some test statistic $T(\hbmbeta) \in \R$ constructed from
the Lasso-type estimator $\hbmbeta$ for $\lambda=\lambda^*$. Precisely, the desired p-value is
\begin{equation}\label{eq:pvalue}
q^*=P(|T(\hbmbeta)|\geq T^*; \calH, \lambda^*)=\int_{\Omega_r^*} \pi_r(\bfb_A,\bfs_I,A;\bmbeta_0,\sigma_0^2,\lambda^*) 
\xi_n(d\bfb_A d\tdbfs),
\end{equation}
where $T^*=|T(\hbmbeta^*)|$ and $\Omega_r^*=\{(\bfb_A,\bfs_I,A)\in \Omega_r: |T(\bfb)|\geq T^*\}$.
Even if we can directly sample from $\pi_r(\bullet;\bmbeta_0,\sigma_0^2,\lambda^*)$,
estimating $q^*$ will be extremely difficult when it is very small. With the closed-form density $\pi_r$,
we can use IS to solve this challenging problem.

\subsection{Importance sampling}

Our target distribution is $\pi_r(\bullet;\bmbeta_0,\sigma_0^2,\lambda^*)$ and
we propose to use $\pi_r(\bullet;\bmbeta_0,(\sigma^2)^{\dag},\lambda^{\dag})$ as a trial distribution
to estimate expectations with respect to the target distribution via IS.
First, note that the trial and the target distributions have the same support as the constraint in \eqref{eq:spacehigh} 
that defines the space $\Omega_r$ only depends on $\bfX$. Thus, a sample from 
the trial distribution $\pi_r(\bullet;\bmbeta_0,(\sigma^2)^{\dag},\lambda^{\dag})$ also satisfies the
constraint for the target distribution. Second, one can easily simulate from the trail distribution 
by the direct sampler (Routine~\ref{rt:directsample}).
Third, the importance weight for a sample $(\bfb_A,\bfs_I,A)$ from the trial distribution can be calculated efficiently.
Let $(r_1(\bmbeta,\lambda),\cdots,r_n(\bmbeta,\lambda))=\bfH_r(\bfb_A,\bfs_I,A;\bmbeta,\lambda)\in \R^n$ and note that $\det\bfT(A;\lambda)=\lambda^{n-|A|} \det \bfT(A;1)$.
Using the fact that $\bmLmd=\diag(\Lambda_1,\ldots,\Lambda_n)$, the importance weight
\begin{eqnarray}\label{eq:ISweight}
& &\frac{\phi_n\left(\bfH_r(\bfb_A,\bfs_I,A;\bmbeta_0, \lambda^*);\bfzr,n^{-1}{\sigma_0^2} \bmLmd\right)|\det\bfT(A;\lambda^*)|}
{\phi_n\left(\bfH_r(\bfb_A,\bfs_I,A;\bmbeta_0, \lambda^{\dag});\bfzr,n^{-1}{(\sigma^2)^{\dag}} \bmLmd\right)|\det\bfT(A;\lambda^{\dag})|} \nonumber\\
&\propto & 
\exp\left[\frac{n}{2(\sigma^2)^{\dag}}\sum_{i=1}^n\frac{r_i^2(\bmbeta_0,\lambda^{\dag})}{\Lambda_i}
-\frac{n}{2 \sigma_0^2}\sum_{i=1}^n\frac{r_i^2(\bmbeta_0,\lambda^{*})}{\Lambda_i} \right]
\left(\frac{\lambda^*}{\lambda^{\dag}}\right)^{n-|A|}  \nonumber \\
&\defi & w(\bfb_A,\bfs_I,A;\sigma_0^2,\lambda^*).
\end{eqnarray}
Essentially, for each sample, we only need to compute the image of the map $\bfH_r$ and two sums of squares.
\begin{routine}\label{rt:IS}
Draw $(\bfb_A,\bfs_I,A)^{(t)}$, $t=1,\ldots,L$, from the trial distribution 
$\pi_r(\bullet;\bmbeta_0,(\sigma^2)^{\dag},\lambda^{\dag})$ by Routine~\ref{rt:directsample}. 
Then the IS estimate for the p-value $q^*$ is given by
\begin{equation}\label{eq:ISest}
\hat{q}^{\text{(IS)}}=\frac{\sum_{t=1}^L w((\bfb_A,\bfs_I,A)^{(t)};\sigma_0^2,\lambda^*)\I(|T(\bfb^{(t)})|\geq T^*)}
{\sum_{t=1}^L w((\bfb_A,\bfs_I,A)^{(t)};\sigma_0^2,\lambda^*)}.
\end{equation}
\end{routine}

The key is to choose the parameters $(\sigma^2)^{\dag}$ and $\lambda^{\dag}$ in the trial distribution so that we have
a substantial fraction of samples for which $|T(\bfb^{(t)})|\geq T^*$. Next we discuss some guidance on
tuning these parameters.

\subsection{Tuning trial distributions}

We illustrate our procedure for tuning the trial distribution assuming $\bmbeta_0=\bfzr$, i.e., the null hypothesis is 
$\calH_0: \bmbeta=\bfzr, \sigma^2=\sigma^2_0$. In this case, the problem is difficult when 
$P(\hbmbeta=\bfzr)$ is close to one under the target distribution $\pi_r(\bullet; \bfzr, \sigma_0^2, \lambda^*)$.
In other words, $\lambda^*$ is too big to obtain any nonzero estimate of the coefficients and consequently
the p-value $P(|T(\hbmbeta)|\geq T^*; \calH_0, \lambda^*)$ becomes a tail probability.
Thus, one may want to choose the trial distribution
$\pi_r(\bullet; \bfzr, (\sigma^2)^{\dag}, \lambda^{\dag})$ under which there is a higher probability for 
nonzero $\hbmbeta$. In general, we achieve this by choosing $(\sigma^2)^{\dag}=M^{\dag} \sigma_0^2$ ($M^{\dag}>1$) 
and then tuning $\lambda^{\dag}$ accordingly. When we increase $\sigma^2$, the variance of $\bfU$
increases and thus $\bfU$ will have a wider spread in $\row(\bfX)$. This will increase
the variance of the augmented estimator $(\hbmbeta_\calA,\bfS_\calI,\calA)$.
As illustrated in Figure~\ref{fig:bijection}, a larger variance in $\bfU$ will lead to a more uniform distribution over
different subspaces $\{\Omega_A\}$.

The following simple procedure is used to determine $\lambda^{\dag}$ given $(\sigma^2)^{\dag}$, which
works very well based on our empirical study. 
\begin{routine}\label{rt:tuning}
Draw $\bfY^{(t)}$ from $\dnorm_n(\bfzr,(\sigma^2)^{\dag} \bfI_n)$
and calculate 
$\lambda^{(t)}=n^{-1}\|\bfW^{-1}\bfX^{\trans}\bfY^{(t)}\|_{\infty}$
for $t=1,\ldots,L_{\text{pilot}}$. Then set $\lambda^{\dag}$ to the first quartile of
$\{\lambda^{(t)}: t=1,\ldots,L_{\text{pilot}}\}$. 
\end{routine}
Setting $\hbmbeta=\bfzr$ in \eqref{eq:lassograd}, we have
\begin{equation*}
n^{-1}\|\bfW^{-1}\bfX^{\trans}\bfY\|_{\infty}=\lambda \|\bfS\|_{\infty}\leq \lambda,
\end{equation*}
which shows that the $\lambda^{(t)}$ calculated in Routine~\ref{rt:tuning} is the minimum value of $\lambda$
with which $\hbmbeta=\bfzr$ for $\bfY^{(t)}$. Therefore, 
under the trial distribution $\pi_r(\bullet;\bfzr,(\sigma^2)^{\dag},\lambda^{\dag})$, 
$P(\hbmbeta=\bfzr)$ is around 25\% and there is a 75\% of chance
for $\hbmbeta$ to have some nonzero components. This often results in a good balance between the
dominating region of the target distribution ($\hbmbeta=\bfzr$) and the region of interest $\Omega_r^*$
for p-value calculation \eqref{eq:pvalue}.
For all numerical examples in this article, we choose $M^{\dag}=5$ and $L_{\text{pilot}}=100$.

\subsection{Multiple tests}

Consider multiple linear models with the same set of predictors,
\begin{equation}\label{eq:multilinear}
\bfY_k=\bfX \bmbeta_k + \bmeps_k, \;\; k=1,\ldots,m,
\end{equation} 
where $\bfY_k\in \R^n$, $\bmbeta_k \in \R^p$, and $\bmeps_k \sim \dnorm_n(\bfzr, \sigma_k^2 \bfI_n)$.
After proper rescaling of $\bfY_k$ and $\bmbeta_k$, we may assume that all $\sigma_k^2$ are identical,
i.e., $\sigma_k^2=\sigma^2$.
Suppose we are interested in testing against $m$ null hypotheses $\calH_k:$ 
$\bmbeta_k=\bfzr$ and $\sigma^2=\sigma_0^2$, given Lasso-type estimates $\hbmbeta_k^*$ with $\lambda=\lambda_k^*$
for $k=1,\ldots,m$. There are $m$ p-values to calculate,
\begin{equation}\label{eq:multipvalue}
q^*_k=P(|T(\hbmbeta)|\geq T_k^*; \calH_k, \lambda_k^*),
\end{equation}
where $T_k^*=|T(\hbmbeta_k^*)|$ 
for $k=1,\ldots,m$. This problem occurs in various genomics applications. To give an example, $\bfY_k$
may be the expression level of gene $k$ and $\bfX$ the expression levels of
$p$ transcription factors across $n$ individuals. The transcription factors may
potentially regulate the expression of a gene
through the linear model \eqref{eq:multilinear}. Rejection of $\calH_k$ indicates that gene $k$ 
is regulated by at least one of the $p$ transcription factors.

To estimate all $q^*_k$, we only need to draw $(\bfb_A,\bfs_I,A)^{(t)}$, $t=1,\ldots,L$, from one trial distribution 
$\pi_r(\bullet;\bfzr,(\sigma^2)^{\dag},\lambda^{\dag})$, in which $(\sigma^2)^{\dag}=M^{\dag} \sigma_0^2$ and
$\lambda^{\dag}$ is obtained by applying the same tuning procedure (Routine~\ref{rt:tuning}) once.
Then we calculate the importance weights by \eqref{eq:ISweight} for all target distributions, 
$\{w((\bfb_A,\bfs_I,A)^{(t)};\sigma_0^2,\lambda_k^*)\}_{1\leq t \leq L}$, $k=1,\ldots,m$, 
and construct estimates for all $q^*_k$ by \eqref{eq:ISest}.

\begin{remark}\label{rm:pathpvalue}
Alternatively, one may apply the Lars algorithm in the direct sampler to draw from the sampling
distribution of $\hbmbeta$ given $\bmbeta=\bfzr$ and $\sigma^2=\sigma_0^2$ for all $\lambda_k^*$,
as the Lars algorithm provides the whole solution path. The computing time of both methods
is dominated by drawing samples and thus is comparable. 
However, when $q^*_k$ is small, the IS method will be
orders of magnitude more efficient than direct sampling, and when $q^*_k$ is not too small, the accuracy
of the two methods is on the same order. We will see this in the numerical examples. In addition, we do not have 
to use the Lars algorithm to draw from the trial distribution since there is no need to compute the solution path
for importance sampling. One thus has the freedom to choose other algorithms, 
such as coordinate descent \citep{Friedman07,WuLange08},
which may be more efficient when both $n$ and $p$ are large.
\end{remark}

\subsection{Numerical examples}

We first simulated two datasets to demonstrate the effectiveness in p-value calculation 
by the IS method for individual tests. 
Each row of $\bfX$ was generated from 
$\dnorm_p(\bfzr,\bmSigma_{\bfX})$, where the diagonal and the off-diagonal elements of $\bmSigma_{\bfX}$
are 1 and $0.05$, respectively. Given the predictors $\bfX$, the response vector $\bfY$ was drawn
from $\dnorm_n(\bfX \bmbeta_0, \sigma_0^2 \bfI_n)$. We set all weights $w_j=1$. 
Table~\ref{tab:dataindividual} reports the
values of $n$, $p$, $\sigma_0^2$, and $\bmbeta_0$ for the two datasets.
We applied the Lars algorithm on the two datasets and 
chose $\lambda^*$ as the first $\lambda$ along the solution path such that
the Lasso estimate $\hbmbeta^*$ gave the correct number of active coefficients (Table~\ref{tab:dataindividual}). 
It turned out that $\hbmbeta^*$ only included one true active coefficient for both datasets.
Let $A^*=\supp(\hbmbeta^*)$ be the active set of $\hbmbeta^*$. 
We designed the following test statistics, $T_1=\|\hbmbeta\|_1$, $T_2=\|\hbmbeta\|_{\infty}$,
and $\tilde{T}_j=|\hbeta_j|$ for $j \in A^*$, and aimed to calculate p-values under the null hypothesis 
$\calH_0: \bmbeta=\bfzr$ and $\sigma^2=\sigma_0^2$.

\begin{table}[ht]
\centering
\caption{Simulated datasets for individual tests \label{tab:dataindividual}}
\vspace{0.05in}
   \begin{tabular}{c|ccclcc} 
   \hline
     Dataset	& $n$ & $p$ & $\sigma_0^2$ & $\bmbeta_0$ & $\lambda^*$  & $\lambda^{\dag}$\\
     \hline
     E & 5 & 10 & 1/4 & $(2,-2, 0, \ldots, 0)$ & 1.65 & 0.60\\
     F & 10 & 20 & 1/4 & $(1,1,-1,-1, 0, \ldots,0)$ & 0.315 & 0.57\\
    \hline
  \end{tabular}
\end{table}

We chose $(\sigma^2)^{\dag}=5 \sigma_0^2$ and used Routine~\ref{rt:tuning} to choose $\lambda^{\dag}$
for the trial distributions. The values of $\lambda^{\dag}$ for the two datasets are given in Table~\ref{tab:dataindividual}.
When $(\sigma^2)^{\dag}$ is sufficiently large for a dataset, the $\lambda^{\dag}$ tuned by Routine~\ref{rt:tuning}
can be greater then $\lambda^*$ (dataset F).
The IS method (Routine~\ref{rt:IS}) was applied with $L=5,000$ to estimate p-values for all the above tests.
This estimation procedure was repeated 10 times independently to obtain the standard deviation of an
estimated p-value. We quantify the efficiency of an estimated p-value, $\hat{q}$, by its coefficient of variation
$\text{cv}(\hat{q})=\SD(\hat{q})/\E(\hat{q})$,
where the standard deviation and the mean are calculated across multiple runs. Table~\ref{tab:indpval}
summarizes the results, where $A^*=\{2,3\}$ for dataset E and $A^*=\{4,9,15,17\}$ for dataset F.
One sees that the IS estimates were very accurate: Even for a tail probability as small as $10^{-21}$,
the coefficient of variation was less than or around 2. To benchmark the performance, we approximated
the coefficient of variation of the estimate $\hat{q}^{\text{(DS)}}$ constructed by direct sampling from the target distribution, 
$\text{cv}(\hat{q}^{\text{(DS)}})=\sqrt{(1-\bar{q})/(L\bar{q}})$ with $\bar{q}=\E(\hat{q}^{\text{(IS)}})$.
As reported in the table, for estimating an extremely small p-value, $\text{cv}(\hat{q}^{\text{(DS)}})$ can
be orders of magnitude greater than that of an IS estimate, and for a moderate p-value (around $10^{-2}$), 
the two methods showed comparable performance.

\begin{table}[t]
\centering
\caption{Estimation of p-values for datasets E and F\label{tab:indpval}}
\vspace{0.05in}
   \begin{tabular}{cc|cccc} 
   \hline
  &  & $\E(\hat{q}^{\text{(IS)}})$ & $\SD(\hat{q}^{\text{(IS)}})$ & $\text{cv}(\hat{q}^{\text{(IS)}})$ & $\text{cv}(\hat{q}^{\text{(DS)}})$\\
     \hline
 & $T_1$ & $3.7 \times 10^{-19}$ & $8.7\times 10^{-19}$ & 2.37 & $2.32\times 10^7$ \\   
E & $T_2$ & $5.7 \times 10^{-15}$ & $6.2\times 10^{-19}$ & 1.09 & $1.88\times 10^5$ \\   
& $\tilde{T}_2$ & $6.3 \times 10^{-21}$ & $1.1\times 10^{-20}$ & 1.81 & $1.79\times 10^8$ \\   
& $\tilde{T}_3$ & $8.4 \times 10^{-15}$ & $9.8\times 10^{-15}$ & 1.16 & $1.54\times 10^5$ \\   
   \hline
& $T_1$ & $1.5 \times 10^{-6}$ & $4.5\times 10^{-7}$ & 0.30 & 11.5 \\   
& $T_2$ & $1.2 \times 10^{-3}$ & $1.2\times 10^{-4}$ & 0.11 & 0.41 \\   
F & $\tilde{T}_4$ & $5.7 \times 10^{-5}$ & $2.2\times 10^{-5}$ & 0.38 & 1.87 \\   
& $\tilde{T}_9$ & $1.1 \times 10^{-2}$ & $1.3\times 10^{-3}$ & 0.12 & 0.14 \\   
& $\tilde{T}_{15}$ & $2.5 \times 10^{-2}$ & $1.9\times 10^{-3}$ & 0.08 & 0.09 \\   
& $\tilde{T}_{17}$ & $4.8 \times 10^{-5}$ & $1.5\times 10^{-5}$ & 0.31 & 2.04\\  
\hline 
  \end{tabular}
\end{table}

Next, we simulated $m=50$ datasets to test our p-value calculation for the multiple testing problem.
We used the design matrix $\bfX$ in dataset F and $\sigma_0^2=1/4$. The response vector $\bfY_k$
was drawn from $\dnorm_n(\bfX \bmbeta_k, \sigma_0^2 \bfI_n)$, where the true coefficient vector $\bmbeta_k$
is given in Table~\ref{tab:datamultiple} for $k=1,\ldots,50$. 
For 10 datasets, $\bmbeta_k=\bfzr$ and the null hypothesis is true. For 20 datasets, there are
two large coefficients, which represents the case that the true model is sparse. The other 20 datasets
mimic the scenario in which the true model has many relatively small coefficients.
We chose $\lambda_k^*$ as the first $\lambda$
that gave two active coefficients along the solution path 
and used $T(\hbmbeta)=\|\hbmbeta\|_1$ as the test statistic. Summaries of $\lambda_k^*$ and $T_k^*=\|\hbmbeta_k^*\|_1$
for the 50 datasets are provided in Table~\ref{tab:datamultiple} as well, from which we see that these datasets
cover a wide range of $\lambda_k^*$ and $T_k^*$.
We chose $(\sigma^2)^{\dag}=5\sigma_0^2$. As seen from Routine~\ref{rt:tuning}, for identical $\bfX$
and $(\sigma^2)^{\dag}$, the tuning procedure is the same. Therefore, we simply set $\lambda^\dag=0.57$,
the value we used for dataset F (Table~\ref{tab:dataindividual}).

\begin{table}[ht]
\centering
\caption{Simulated datasets for multiple tests \label{tab:datamultiple}}
\vspace{0.05in}
   \begin{tabular}{c|lccc} 
   \hline
     Dataset	& $\bmbeta_k$ & range of $\lambda_k^*$  & range of $T_k^*$\\
     \hline
     1-10 & $(0, \ldots, 0)$ & (0.16, 0.34) & (0.08, 0.23)\\
     11-30 & $(2,-2, 0, \ldots, 0)$ & (0.88, 1.31) & (0.27, 0.84)\\
     31-50 & $(1/4,\ldots,1/4)$ & (0.70, 1.13) & (0.04, 0.51)\\
    \hline
  \end{tabular}
\end{table}

We simulated $L=5,000$ samples from the trial distribution and estimated the p-values for all the 50 datasets.
This procedure was repeated 10 times independently. The average over 10 runs of the estimated p-value,
$\E(\hat{q}_k^{\text{(IS)}})$, is shown in Figure~\ref{fig:multipval}(a) for $k=1,\ldots,50$. 
As expected, most of the p-values for the first 10 datasets
were not significant, while those for the other 40 datasets ranged from $10^{-4}$ to $10^{-30}$,
which confirms that $T(\hbmbeta)=\|\hbmbeta\|_1$ is a reasonable test statistic.
Again, we see that even for p-values on the order of $10^{-30}$,
the coefficient of variation of an IS estimate was at most around 3 (Figure~\ref{fig:multipval}(b)). 
This provides huge gain in accuracy compared to direct sampling. Figure~\ref{fig:multipval}(c) plots
$\log_{10}[\text{cv}(\hat{q}_k^{\text{(DS)}})/\text{cv}(\hat{q}_k^{\text{(IS)}})]$ for the 50 datasets, where  
$\text{cv}(\hat{q}_k^{\text{(DS)}})$ was approximated in the same way as in the previous example.
It is comforting to see that while the IS estimates $\hat{q}_k^{\text{(IS)}}$ 
showed huge improvement over the DS estimates in estimating a tail probability, they were only slightly
worse than the DS estimates for an insignificant p-value. For the first 10 datasets, the coefficient of variation
of $\hat{q}_k^{\text{(IS)}}$ was at most 7.9 times that of $\hat{q}_k^{\text{(DS)}}$. For majority of the other 40 datasets,
the ratio of $\text{cv}(\hat{q}_k^{\text{(DS)}})$ over $\text{cv}(\hat{q}_k^{\text{(IS)}})$ was between
$100$ and $10^{10}$.

\begin{figure}[t]
\centering
   \includegraphics[width=0.55\linewidth,trim=0in 0in 0in 0in,clip]{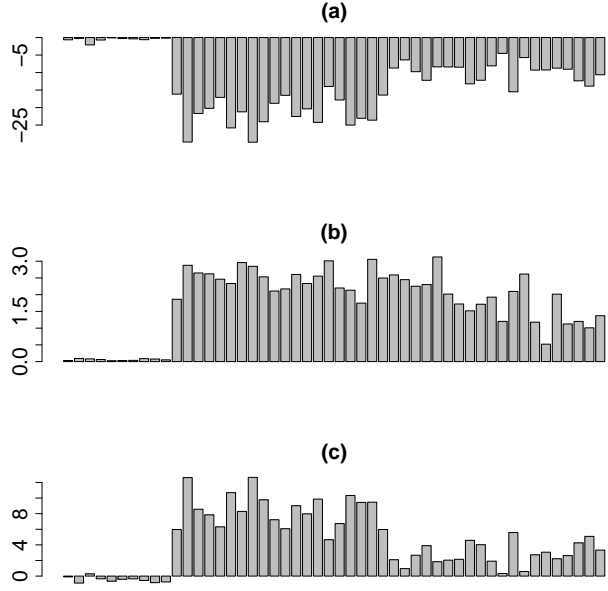} \\
   \caption{Estimation of p-values for 50 simulated datasets by IS with a single trial distribution:
    (a) $\log_{10}\E(\hat{q}_k^{\text{(IS)}})$,
   (b) $\text{cv}(\hat{q}_k^{\text{(IS)}})$, and (c) $\log_{10}[\text{cv}(\hat{q}_k^{\text{(DS)}})/\text{cv}(\hat{q}_k^{\text{(IS)}})]$ for $k=1,\ldots,50$. Each bar in a plot gives the result for one dataset.
    \label{fig:multipval}}
\end{figure}

\section{Estimating sampling distributions}\label{sec:errorbounds}

For a vector $\bfv=(v_j)$, denote its active set by $\calA(\bfv)=\{j:v_j\ne 0\}$. Let $\bmbeta_0=(\beta_{0j})_{1:p}$ be the true coefficient vector and $A_0=\calA(\bmbeta_0)$. Consider model \eqref{eq:linearmodel} with $\bmbeta=\bmbeta_0$ and $\bmeps \sim \dnorm_n(\bfzr,\sigma^2\bfI_n)$. The penalized loss in \eqref{eq:lassoloss} is, up to an additive constant, 
\begin{align}
 &\, \frac{1}{2} \|\bfY-\bfX \bmbeta \|_2^2 -\frac{1}{2} \|\bfY-\bfX \bmbeta_0 \|_2^2 
 + n \lambda \sum_{j=1}^p w_j (| \beta_{j}|-|\beta_{0j}|)  \nonumber\\
= &\, \frac{n}{2r_n^2} \bmdelta^{\trans} \bfC \bmdelta - \frac{n}{r_n}\bmdelta^{\trans} \bfU 
+ n\lambda \sum_{j \in A_0} w_j \left(| \beta_{0j}+ r_n^{-1}\delta_{j}|- | \beta_{0j}|\right) 
+ \frac{n\lambda}{r_n} \sum_{j \notin A_0} w_j |\delta_{j} | \nonumber\\
\defi &\, V(\bmdelta; \bmbeta_0, \bfU), \label{eq:defVn}
\end{align}
where $r_n>0$, $\bmdelta = (\delta_j)=r_n (\bmbeta - \bmbeta_0)$, and $\bfU=\bfX^{\trans}\bmeps/n$. 
Assuming $V$ has a unique minimizer, 
\begin{equation}\label{eq:defdelta}
\arg\min_{\bmdelta} V(\bmdelta;\bmbeta_0,\bfU)=r_n(\hbmbeta-\bmbeta_0)\defi \hbmdelta,
\end{equation} 
where $\hbmbeta$ is the Lasso-type estimator that minimizes \eqref{eq:lassoloss}. Suppose that $\ckbmbeta$ and $\hsigma$ are estimators of $\bmbeta_0$ and $\sigma$, following their respective sampling distributions. Let
\begin{equation}\label{eq:defdeltastar}
\sbmdelta=r_n(\sbmbeta-\ckbmbeta)=\arg\min_{\bmdelta} V(\bmdelta;\ckbmbeta,\sbfU), 
\end{equation}
where $\sbfU=\bfX^{\trans}\bmeps^*/n$ and $\bmeps^*\mid \hsigma \sim \dnorm_n(\bfzr,\hsigma^2\bfI_n)$. For random vectors $\bfZ_1$ and $\bfZ_2$, we use $\nu[\bfZ_1]$ to denote the distribution of $\bfZ_1$ and $\nu[\bfZ_1\mid\bfZ_2]$ to denote the conditional distribution of $\bfZ_1$ given $\bfZ_2$. In general, $\nu[\bfZ_1\mid\bfZ_2]$ is a random probability measure. The goal of this section is to derive nonasymptotic bounds on the difference between $\nu[\hbmdelta]$ and $\nu[\sbmdelta\mid \ckbmbeta,\hsigma]$, with all proofs relegated to Section~\ref{sec:proofs}. Our results provide theoretical justifications for any method that estimates the uncertainty in $\hbmbeta$ via simulation from $\nu[\sbmdelta\mid \ckbmbeta,\hsigma]$. In particular, for estimator augmentation, we draw $(\sbmbeta,\bfS^*)|\ckbmbeta, \hsigma^2$ from an estimated sampling distribution whose density is given by \eqref{eq:jointdensitynormal} or \eqref{eq:densitynormalhigh} with $\bmbeta=\ckbmbeta$ and $\sigma^2=\hsigma^2$. 

\subsection{Relevant existing results}

We compile relevant published results here. The following restricted eigenvalue (RE) assumption is a special case of the one used in \cite{Lounici11}, which extends the original definition in \cite{Bickel09}.

\begin{assumption}[RE$(m,c_0)$]
For some positive integer $m \leq p$ and a positive number $c_0$, the following condition holds: 
\begin{equation*}
\kappa(m,c_0) \defi \min_{|A| \leq m} \min_{\bmdelta \ne \bfzr} 
\left\{\frac{\|\bfX \bmdelta\|_2}{\sqn \|\bmdelta_{A} \|_2}:  
\sum_{j\in A^c} w_j |\delta_{j}|  \leq c_0 \sum_{j\in A} w_j | \delta_{j}|\right\} >0.
\end{equation*}
\end{assumption}

Lemma~\ref{lm:lassobounds} is from Theorem 3.1 and its proof in \cite{Lounici11}, regarding the Lasso as a special case of the group Lasso \citep{YuanLin06} with the size of every group being one. 
\begin{lemma}\label{lm:lassobounds}
Consider the model \eqref{eq:linearmodel} with $\bmbeta=\bmbeta_0$ and $\bmeps \sim \dnorm_n(\bfzr,\sigma^2\bfI_n)$, $\sigma^2>0$, and let $p\geq 2$, $n \geq 1$. Suppose that all the diagonal elements of $\bfC$ are $1$ and $\bfU=(U_j)_{1:p}=\frac{1}{n}\bfX^{\trans}\bmeps$. Let $q_0= |A_0|\leq q$, where $1\leq q \leq p$, and Assumption RE$(q,3)$ be satisfied. Choose $u>1$ and 
\begin{align}\label{eq:condlambda}
\lambda \geq  \lambda_0 \defi\frac{2\sigma}{w_{\min}} \sqrt{(2+5u\log p)/n},
\end{align}
where $w_{\min}=\inf\{w_j: j=1,\ldots,p\}$.
Then on the event $\calE=\cap_{j=1}^p\{|U_j|\leq w_j \lambda/2\}$, which happens with probability at least $1-2p^{1-u}$, for any minimizer $\hbmbeta$ of \eqref{eq:lassoloss} we have
\begin{align}
|\calA(\hbmbeta)| & \leq \frac{64\phi_{\max}}{\kappa^2(q,3)} \sum_{A_0} \frac{w_j^2}{w_{\min}^2},  \label{eq:boundsize}\\
\|\hbmbeta-\bmbeta_0 \|_1 & \leq \frac{16\lambda}{\kappa^2(q,3)} \sum_{A_0} \frac{w_j^2}{w_{\min}}, 
\label{eq:boundL1}
\end{align}
where $\phi_{\max}$ is the maximum eigenvalue of $\bfC$. If Assumption RE$(2q,3)$ is satisfied, then on the same event,
\begin{align}
\|\hbmbeta-\bmbeta_0 \|_2 & \leq \frac{4\sqrt{10}}{\kappa^2(2q,3)} \frac{\lambda \sum_{A_0}w_j^2}{w_{\min}\sqrt{q}}\defi \tau.
\label{eq:boundL2}
\end{align}
\end{lemma}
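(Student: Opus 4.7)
The plan is to follow the standard Lasso oracle-inequality machinery on the favorable event $\calE$ and then control $P(\calE^\cmp)$ by a Gaussian tail bound, refined for the weighted setting.

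First I would establish the \emph{basic inequality}. Since $\hbmbeta$ minimizes \eqref{eq:lassoloss}, plugging $\hbmbeta$ versus $\bmbeta_0$ gives
\begin{equation*}
\tfrac{1}{2n}\|\bfX(\hbmbeta-\bmbeta_0)\|_2^2 \leq \bfU^{\trans}(\hbmbeta-\bmbeta_0)+\lambda\sum_{j=1}^p w_j\bigl(|\beta_{0j}|-|\hbeta_j|\bigr).
\end{equation*}
On $\calE$ I can bound $|\bfU^{\trans}(\hbmbeta-\bmbeta_0)|\leq (\lambda/2)\sum_j w_j|\hbeta_j-\beta_{0j}|$. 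Splitting the sums over $A_0$ and $A_0^\cmp$ and using $|\beta_{0j}|-|\hbeta_j|\leq |\hbeta_j-\beta_{0j}|$ for $j\in A_0$ and $|\beta_{0j}|-|\hbeta_j|=-|\hbeta_j|$ for $j\notin A_0$ yields the cone membership
\begin{equation*}
\sum_{j\notin A_0}w_j|\hbeta_j-\beta_{0j}|\leq 3\sum_{j\in A_0}w_j|\hbeta_j-\beta_{0j}|,
\end{equation*}
so $\hbmbeta-\bmbeta_0$ lies in the cone to which Assumption RE$(q,3)$ applies. Combining the cone inequality with $\|\bfX(\hbmbeta-\bmbeta_0)\|_2^2 \geq n\kappa^2(q,3)\|\hbmbeta_{A_0}-\bmbeta_{0,A_0}\|_2^2$ and Cauchy--Schwarz gives an upper bound on $\|\hbmbeta_{A_0}-\bmbeta_{0,A_0}\|_2$ of order $\lambda\sqrt{\sum_{A_0}w_j^2}/\kappa^2(q,3)$; plugging this back into the cone bound produces the $\ell_1$ estimate \eqref{eq:boundL1}, and upgrading the restricted eigenvalue to RE$(2q,3)$ and bounding the top $q$ off-support coordinates via the cone inequality gives \eqref{eq:boundL2}.

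For the active-set size bound \eqref{eq:boundsize}, which I expect to be the technically most delicate step, I would use the KKT condition: for every $j\in\calA(\hbmbeta)$, $|\bfX_j^{\trans}(\bfY-\bfX\hbmbeta)|/n = w_j\lambda$, so $w_j^2 \lambda^2 |\calA(\hbmbeta)| \leq \lambda^2\sum_{j\in\calA(\hbmbeta)}w_j^2$ relates to $\|\bfX^{\trans}_{\calA(\hbmbeta)}(\bfY-\bfX\hbmbeta)\|_2^2/n^2$. Bound the latter by $2\|\bfX^{\trans}_{\calA(\hbmbeta)}\bfX(\hbmbeta-\bmbeta_0)\|_2^2/n^2 + 2\|\bfU_{\calA(\hbmbeta)}\|_2^2$, use the operator norm inequality $\|\bfX^{\trans}_{\calA(\hbmbeta)}\bfX(\hbmbeta-\bmbeta_0)\|_2^2/n^2\leq \phi_{\max}\|\bfX(\hbmbeta-\bmbeta_0)\|_2^2/n$ and the prediction-error bound from the basic inequality, and finally use $|U_j|\leq w_j\lambda/2$ on $\calE$; rearranging yields \eqref{eq:boundsize} with the constant $64\phi_{\max}/\kappa^2(q,3)$.

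Finally, for $P(\calE)\geq 1-2p^{1-u}$ I would exploit that since $\diag(\bfC)=\bfI$, each $U_j\sim\dnorm(0,\sigma^2/n)$, so the standard Gaussian tail bound gives $P(|U_j|>w_j\lambda/2)\leq 2\exp(-n w_j^2\lambda^2/(8\sigma^2))$. Taking $w_j\geq w_{\min}$ and using $\lambda\geq \lambda_0$ as in \eqref{eq:condlambda} makes each exponent at least $(2+5u\log p)/2$; a union bound over $j=1,\ldots,p$ then produces $P(\calE^{\cmp})\leq 2p\cdot p^{-(2+5u\log p)/(2\log p)}\leq 2p^{1-u}$ after a mild simplification of the constants (the $2+5u\log p$ factor is chosen precisely to absorb the union bound plus the factor of $2$ inside the Gaussian tail). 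The main obstacle is carrying the weights $w_j$ cleanly through both the cone inequality and the KKT-based sparsity argument; the rest is routine once the cone membership and the favorable event $\calE$ are set up.
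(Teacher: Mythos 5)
The paper does not actually prove this lemma: it is imported wholesale from Theorem 3.1 of Lounici et al.\ (2011) (the group Lasso with every group of size one), so there is no in-paper argument to compare against. Your reconstruction is the standard argument underlying that cited result and is essentially correct: the basic inequality on $\calE$, weighted cone membership with constant $3$ matching Assumption RE$(q,3)$, Cauchy--Schwarz plus the restricted eigenvalue for the prediction and $\ell_1$ bounds, the Bickel--Ritov--Tsybakov device of adjoining the $q$ largest off-support coordinates under RE$(2q,3)$ for the $\ell_2$ bound, the KKT/operator-norm argument for the sparsity bound, and a Gaussian tail plus union bound for $P(\calE)$ (which is exactly where $\diag(\bfC)=\bfI_p$ and the specific form $2+5u\log p$ of $\lambda_0$ enter). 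Two small remarks. First, the constants your particular chain of inequalities yields need not match those stated --- e.g.\ your route gives $12$ rather than $16$ in \eqref{eq:boundL1} and a constant below $64$ in \eqref{eq:boundsize}; the stated values reflect Lounici et al.'s bookkeeping, and since your bounds are at least as tight they still imply the lemma as written. Second, your tail computation checks out: with $C_{jj}=1$ each $U_j\sim\dnorm(0,\sigma^2/n)$, the exponent at $\lambda=\lambda_0$ equals $1+\tfrac{5}{2}u\log p$, and $2p\,e^{-1}p^{-5u/2}\leq 2p^{1-u}$ for all $p\geq 2$ and $u>1$.
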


As an immediate consequence of this lemma, the Lasso-type estimator has the screening property assuming a suitable beta-min condition.
\begin{lemma}\label{lm:screen}
Let the assumptions in Lemma~\ref{lm:lassobounds} be satisfied. If $\inf_{A_0} | \beta_{0j}|>\tau$,
then on the event $\calE$, $A_0\subseteq \calA(\hbmbeta)$ for any $\hbmbeta$. If $\inf_{A_0} | \beta_{0j}| > 2\tau$,
then on the same event, 
\begin{equation*}
\left\{j: |\hbeta_j|>\tau\right\} = A_0.
\end{equation*}
\end{lemma}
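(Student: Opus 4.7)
The plan is to leverage the $\ell_2$ bound \eqref{eq:boundL2} from Lemma~\ref{lm:lassobounds}, which states that $\|\hbmbeta-\bmbeta_0\|_2 \leq \tau$ on the event $\calE$. Both claims in Lemma~\ref{lm:screen} will follow from this bound by controlling individual coordinates via $|\hbeta_j-\beta_{0j}|\leq \|\hbmbeta-\bmbeta_0\|_2 \leq \tau$.

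For the first claim, I would argue by contradiction. Suppose $j\in A_0$ but $j \notin \calA(\hbmbeta)$, i.e., $\hbeta_j=0$. Then $|\hbeta_j-\beta_{0j}|=|\beta_{0j}| > \tau$ by the assumption $\inf_{A_0}|\beta_{0j}|>\tau$, but this contradicts $\|\hbmbeta-\bmbeta_0\|_2 \leq \tau$. Hence $\hbeta_j\ne 0$ for every $j\in A_0$, which gives $A_0\subseteq \calA(\hbmbeta)$.

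For the second claim, I would show the two set inclusions separately. For $A_0 \subseteq \{j:|\hbeta_j|>\tau\}$: if $j\in A_0$ then $|\beta_{0j}|>2\tau$ and by the reverse triangle inequality $|\hbeta_j|\geq |\beta_{0j}| - |\hbeta_j-\beta_{0j}| > 2\tau-\tau=\tau$. For the reverse inclusion $\{j:|\hbeta_j|>\tau\}\subseteq A_0$: if $j\notin A_0$ then $\beta_{0j}=0$, so $|\hbeta_j| = |\hbeta_j-\beta_{0j}|\leq \tau$, which means $j\notin\{j:|\hbeta_j|>\tau\}$. Combining both inclusions yields the equality.

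Neither step presents a substantive obstacle: the argument is essentially the standard deduction of a screening property from an $\ell_\infty$-type control (here obtained trivially from the $\ell_2$ bound) combined with a beta-min condition. The only subtlety worth noting in the write-up is that we are on the event $\calE$ throughout, which has probability at least $1-2p^{1-u}$ by Lemma~\ref{lm:lassobounds}, and that the argument does not require the minimizer $\hbmbeta$ to be unique since \eqref{eq:boundL2} holds for any minimizer.
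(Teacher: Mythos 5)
Your proof is correct and follows exactly the route the paper intends: the paper states Lemma~\ref{lm:screen} as an immediate consequence of the $\ell_2$ bound \eqref{eq:boundL2} and omits the details, which are precisely the coordinatewise deductions you spell out. No gaps; your remarks about staying on the event $\calE$ and not needing uniqueness of the minimizer are both accurate.
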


\subsection{Known variance}\label{sec:errorboundsknownvar}

In this subsection we assume that the noise variance $\sigma^2$ is known and fix $\hsigma=\sigma$ in the definition of $\sbmdelta$ \eqref{eq:defdeltastar}. We first regard $\ckbmbeta=(\ckbeta_j)_{1:p}$ as a fixed vector and find conditions which are sufficient for the distribution of $\sbmdelta$ to be close to that of $\hbmdelta$. Then we construct an estimator that satisfies these conditions with high probability. Let 
\begin{equation}\label{eq:difftdbeta}
\eta  \defi \sup_{j \in A_0} \frac{|\ckbeta_{j}-\beta_{0j}|}{| \beta_{0j}|}.
\end{equation}

\begin{lemma}\label{lm:fixknownvar}
Assume that the columns of $\bfX$ are in general position. Fix $\hsigma=\sigma$. Suppose that $\ckbmbeta$ is a fixed vector in $\R^p$ so that (i) $\ckbeta_{j}=0$ for all $j \notin A_0$ and (ii) $\eta\in[0,1)$ as defined in \eqref{eq:difftdbeta}. Let $M_1>0$ and assume  
\begin{align}\label{eq:assbeta0j}
\inf_{j\in A_0}|\beta_{0j}|>  \frac{M_1}{r_n(1-\eta)}.
\end{align} 
Then we have
\begin{equation*}
\nu[\sbmdelta\mid \|\sbmdelta \|_{\infty}< M_1]=\nu[\hbmdelta\mid \|\hbmdelta\|_{\infty}<M_1].
\end{equation*}
\end{lemma}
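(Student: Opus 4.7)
The plan is to reduce both $\hbmdelta$ and $\sbmdelta$, on the respective events, to the unique point of the same linearized convex optimization problem driven by a noise vector whose distribution is shared between the two problems, and then conclude by a direct distributional transfer.

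First, I linearize the penalty on the ball $B\defi\{\bmdelta\in\R^p:\|\bmdelta\|_\infty<M_1\}$. Condition (ii) with $\eta<1$ forces $\calA(\ckbmbeta)=A_0$ and $\sgn(\ckbeta_j)=\sgn(\beta_{0j})$ for $j\in A_0$, while assumption \eqref{eq:assbeta0j}, together with $|\ckbeta_j|\geq(1-\eta)|\beta_{0j}|$, gives $|\beta_{0j}|\wedge|\ckbeta_j|>M_1/r_n$ on $A_0$. Hence for every $\bmdelta\in B$ and every $j\in A_0$, both $|\beta_{0j}+r_n^{-1}\delta_j|-|\beta_{0j}|$ and $|\ckbeta_j+r_n^{-1}\delta_j|-|\ckbeta_j|$ collapse to $r_n^{-1}\sgn(\beta_{0j})\delta_j$. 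Consequently $V(\bmdelta;\bmbeta_0,\bfU)$ and $V(\bmdelta;\ckbmbeta,\sbfU)$ agree on $B$ with the common convex template
$$\widetilde V(\bmdelta;\bfu)\defi\frac{n}{2r_n^2}\bmdelta^{\trans}\bfC\bmdelta-\frac{n}{r_n}\bmdelta^{\trans}\bfu+\frac{n\lambda}{r_n}\Bigl(\sum_{j\in A_0}w_j\sgn(\beta_{0j})\delta_j+\sum_{j\notin A_0}w_j|\delta_j|\Bigr),$$
evaluated at $\bfu=\bfU$ and $\bfu=\sbfU$, respectively. The template depends on $\bmbeta_0$ and $\ckbmbeta$ only through $A_0$ and $\sgn(\bmbeta_{0,A_0})$, which are shared, so it really is the same function of $(\bmdelta,\bfu)$ in both problems.

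Next, let $\Phi(\bfu)\defi\arg\min_{\bmdelta\in\R^p}\widetilde V(\bmdelta;\bfu)$, a convex set since $\widetilde V$ is convex in $\bmdelta$. I claim that on the event $\{\|\hbmdelta\|_\infty<M_1\}$ one has $\{\hbmdelta\}=\Phi(\bfU)\cap B$, and $\{\|\hbmdelta\|_\infty<M_1\}=\{\Phi(\bfU)\cap B\ne\varnothing\}$, with the analogous identities for $(\sbmdelta,\sbfU)$. For the $\subseteq$ direction: if $\hbmdelta\in B$, then in a neighborhood of $\hbmdelta$ the functions $V(\cdot;\bmbeta_0,\bfU)$ and $\widetilde V(\cdot;\bfU)$ coincide, so $\hbmdelta$ is a local minimizer of $\widetilde V(\cdot;\bfU)$, hence a global one by convexity. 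For the $\supseteq$ direction: any $\bmdelta'\in\Phi(\bfU)\cap B$ lies in the interior of $B$ and so is a local minimizer of $V(\cdot;\bmbeta_0,\bfU)$ by the same neighborhood argument; by convexity it is a global minimizer, and by Lemma~\ref{lm:prelim} (columns of $\bfX$ in general position) $\bmdelta'=\hbmdelta\in B$. The same reasoning shows $\Phi(\bfU)\cap B$ is a singleton when nonempty: two distinct elements would, by convexity of $\Phi(\bfU)$ and $B$, yield an entire segment of global minimizers of $V(\cdot;\bmbeta_0,\bfU)$ inside $B$, contradicting uniqueness of $\hbmdelta$.

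Finally, because $\hsigma=\sigma$, $\bfU\eqinL\sbfU\sim\dnorm_p(\bfzr,n^{-1}\sigma^2\bfC)$. Since on their respective events $\hbmdelta$ and $\sbmdelta$ are obtained by applying the same measurable map $\bfu\mapsto$ unique element of $\Phi(\bfu)\cap B$ to equidistributed inputs, conditioning on those events yields
$$\nu[\hbmdelta\mid\|\hbmdelta\|_\infty<M_1]=\nu[\sbmdelta\mid\|\sbmdelta\|_\infty<M_1],$$
as claimed. The main obstacle is the convexity bookkeeping in the middle step: one must verify that $\widetilde V$ really is identical for the two problems (otherwise the conclusion would reduce to the tautology $\bfU\eqinL\sbfU$), and must rely on uniqueness of the original Lasso minimizer to pin down the selection inside $\Phi(\bfU)\cap B$, since $\widetilde V$ need not be strictly convex when $p>n$.
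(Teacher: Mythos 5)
Your proof is correct and follows essentially the same route as the paper: the sign-preservation argument showing that both penalized losses coincide on the ball $\{\|\bmdelta\|_\infty<M_1\}$ is exactly the paper's Lemma~\ref{lm:Veq}, and the local-to-global convexity step plus uniqueness from general position is the paper's mechanism for identifying the minimizers on that event. The only cosmetic difference is that the paper couples via the intermediate variable $\tbmdelta=\arg\min V(\bmdelta;\bmbeta_0,\sbfU)$ on the same probability space, whereas you transfer the distribution through a common measurable selection map applied to $\bfU\eqinL\sbfU$; both are valid.
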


One possible way to construct $\ckbmbeta$ that satisfies conditions (i) and (ii) in Lemma~\ref{lm:fixknownvar} is to threshold the Lasso-type estimator $\hbmbeta$ by a constant $b_{\supth}>0$, i.e.,
\begin{equation}\label{eq:thresholddef}
\ckbeta_{j}=\hbeta_{j}\I(|\hbeta_{j}|>b_{\supth}), \quad j=1,\ldots,p.
\end{equation}

\begin{theorem}\label{thm:threshold}
Let the assumptions in Lemma~\ref{lm:lassobounds} be satisfied and assume that the columns of $\bfX$ are in general position. Choose $r_n$ such that $\|\hbmdelta\|_{\infty} < M_1$ with probability at least $1-\alpha_1$. Fix $\hsigma=\sigma$, define $\ckbmbeta$ by \eqref{eq:thresholddef} with $b_{\supth}=\tau$ \eqref{eq:boundL2}, and assume 
\begin{align}\label{eq:betamin}
\inf_{j\in A_0}|\beta_{0j}|>  \max\left\{2\tau, \frac{M_1}{r_n(1-\eta)}\right\}.
\end{align}
Then with probability at least $1-2p^{1-u}$, we have 
\begin{equation}
\sup_{B\in \scrR^p} |P(\sbmdelta \in B \mid \ckbmbeta) - P(\hbmdelta \in B)| \leq 2\alpha_1, \label{eq:supPrdiff}
\end{equation}
where $\scrR^p$ is the $\sigma$-field of $p$-dimensional Borel sets.
\end{theorem}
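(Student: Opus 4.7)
The plan is to condition on the high-probability event $\calE$ from Lemma~\ref{lm:lassobounds}, verify on this event that the thresholded estimator $\ckbmbeta$ satisfies the hypotheses of Lemma~\ref{lm:fixknownvar}, and then use the conditional distributional equality supplied by that lemma together with a standard truncation decomposition to obtain the $2\alpha_1$ bound in total variation.

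\emph{Step 1: Verify the hypotheses of Lemma~\ref{lm:fixknownvar} on $\calE$.} By Lemma~\ref{lm:lassobounds}, $P(\calE)\geq 1-2p^{1-u}$ and on $\calE$ we have $\|\hbmbeta-\bmbeta_0\|_\infty \leq \|\hbmbeta-\bmbeta_0\|_2 \leq \tau$. Since $\inf_{j\in A_0}|\beta_{0j}|>2\tau$ by \eqref{eq:betamin}, Lemma~\ref{lm:screen} gives $\{j:|\hbeta_j|>\tau\}=A_0$, so with $b_{\supth}=\tau$ the thresholded estimator \eqref{eq:thresholddef} satisfies $\ckbeta_j=\hbeta_j$ for $j\in A_0$ and $\ckbeta_j=0$ otherwise, confirming condition (i) of Lemma~\ref{lm:fixknownvar}. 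For condition (ii), on $\calE$,
\begin{equation*}
\eta = \sup_{j\in A_0}\frac{|\ckbeta_j-\beta_{0j}|}{|\beta_{0j}|} \leq \frac{\tau}{\inf_{j\in A_0}|\beta_{0j}|} < \frac{1}{2},
\end{equation*}
so $\eta\in[0,1)$, while the remaining requirement $\inf_{j\in A_0}|\beta_{0j}|>M_1/[r_n(1-\eta)]$ is supplied directly by \eqref{eq:betamin}. Hence Lemma~\ref{lm:fixknownvar} applies with $\ckbmbeta$ set to its realized value on $\calE$.

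\emph{Step 2: Truncation decomposition.} Set $E=\{\|\hbmdelta\|_\infty<M_1\}$ and $E^*=\{\|\sbmdelta\|_\infty<M_1\}$. Lemma~\ref{lm:fixknownvar} yields $\nu[\sbmdelta\mid E^*,\ckbmbeta]=\nu[\hbmdelta\mid E]$ on $\calE$; moreover, because that lemma identifies the two truncated optimization problems up to replacing $\bfU$ by the independent copy $\sbfU\eqinL \bfU$, the truncation mass is preserved, i.e. $P(E^*\mid \ckbmbeta)=P(E)$. Then for any $B\in\scrR^p$,
\begin{align*}
|P(\sbmdelta\in B\mid\ckbmbeta)-P(\hbmdelta\in B)|
&\leq |P(\sbmdelta\in B\cap E^*\mid\ckbmbeta)-P(\hbmdelta\in B\cap E)| \\
&\quad + P((E^*)^c\mid\ckbmbeta)+P(E^c) \\
&\leq 0 + 2\,P(E^c)\leq 2\alpha_1,
\end{align*}
where the first right-hand term vanishes by the conditional equality combined with $P(E^*\mid\ckbmbeta)=P(E)$, and the last inequality uses the hypothesis $P(E)\geq 1-\alpha_1$. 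This bound holds pointwise on $\calE$, hence with probability at least $1-2p^{1-u}$, which is \eqref{eq:supPrdiff}.

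\emph{Main obstacle.} The delicate step is promoting Lemma~\ref{lm:fixknownvar}, which treats $\ckbmbeta$ as deterministic, to the random-$\ckbmbeta$ setting here. This is handled by the independence of the resampling noise $\bmeps^*$ from $(\bfY,\bfX)$, which lets us interpret $\nu[\sbmdelta\mid\ckbmbeta]$ as the unconditional law of the perturbed estimator with parameters frozen at the realized $\ckbmbeta$. The other point to confirm carefully is the equality $P(E^*\mid\ckbmbeta)=P(E)$ rather than a mere approximate equality; this should come from inspecting the explicit local bijection underlying Lemma~\ref{lm:fixknownvar}, which sends the truncated regions in $\bmdelta$-space to one another while pushing $\sbfU$ forward to a law identical to that of $\bfU$.
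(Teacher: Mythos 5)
Your proposal is correct and follows essentially the same route as the paper: condition on the event $\calE$ from Lemma~\ref{lm:lassobounds} to verify via Lemma~\ref{lm:screen} that the thresholded $\ckbmbeta$ meets the hypotheses of Lemma~\ref{lm:fixknownvar}, then combine the conditional law equality on $\{\|\cdot\|_\infty<M_1\}$ with the equality of truncation masses $P(\|\sbmdelta\|_\infty<M_1\mid\ckbmbeta)=P(\|\hbmdelta\|_\infty<M_1)\geq 1-\alpha_1$ to get the $2\alpha_1$ bound. The only cosmetic difference is that you carry out the truncation decomposition inline rather than invoking the paper's Lemma~\ref{lm:truncated}, and the mass-equality step you flag as delicate is resolved exactly as you suggest, via the pathwise event equivalence $\{\|\sbmdelta\|_\infty<M_1\}=\{\|\tbmdelta\|_\infty<M_1\}$ established inside the proof of Lemma~\ref{lm:fixknownvar} together with $\tbmdelta\eqinL\hbmdelta$.
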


\begin{remark}
Depending on the estimator $\ckbmbeta$, the conditional probability $P(\sbmdelta \in B \mid \ckbmbeta)$ is a random variable. The probability $1-2p^{1-u}$ is with respect to the sampling distribution of $\bfy$. This theorem gives an explicit nonasymptotic bound \eqref{eq:supPrdiff} on the difference between the two probability measures, $\nu[\sbmdelta \mid \ckbmbeta]$ and $\nu[\hbmdelta]$, and justifies simulation from the estimated sampling distribution with $\bmbeta=\ckbmbeta$. 
\end{remark}

\begin{remark}\label{rmk:asymknownvar}
Consider the asymptotic implications of Theorem~\ref{thm:threshold} by allowing $q_0,p\to \infty$ as $n\to \infty$. Suppose that $w_{\min}$ and $w_{\max}=\sup_j w_j$ stay bounded away from 0 and $\infty$. For the Lasso, $w_{\min}=w_{\max}=1$. Choose $r_n$ so that $\|\hbmdelta\|_{\infty}\asymp_P 1$, i.e., the convergence rate of $\|\hbmbeta-\bmbeta_0\|_{\infty}$ is $1/r_n$. For any $\alpha_1>0$, there is $M_1<\infty$ so that $P(\|\hbmdelta\|_{\infty}<M_1)\geq 1-\alpha_1$. Let $q=q_0$ in Lemma~\ref{lm:lassobounds} and assume that $\liminf_n\kappa(m,3)>0$ when $m=O(q_0)=o(n)$. With a suitable choice of $\lambda \asymp \sqrt{(\log p)/n}$,
\begin{equation*}
\tau \asymp \sqrt{q_0} \lambda \asymp \sqrt{q_0 (\log p)/n}.
\end{equation*}
Because $1 \asymp_P r_n\|\hbmbeta-\bmbeta_0\|_{\infty}=O_P(r_n\tau)$, we have $r_n^{-1}=O(\tau)$. Thus, the order of $r_n$ satisfies 
\begin{equation}\label{eq:orderofr_n}
r_n^{-1}=O(\sqrt{q_0 (\log p)/n}) \text{ and } r_n = O(\sqn).
\end{equation}
Consequently, a sufficient condition for \eqref{eq:betamin} is
\begin{equation}\label{eq:orderofbetamin}
\inf_{A_0}|\beta_{0j}|\gg  \sqrt{q_0 (\log p)/n}
\end{equation}
with the order of $b_{\supth}$ in between. Theorem~\ref{thm:threshold} then implies that \eqref{eq:supPrdiff} holds with probability at least $1-2p^{1-u}\to 1$ as $p\to \infty$. Choosing $\alpha_1$ arbitrarily close to zero, this demonstrates that 
\begin{equation}\label{eq:supPrdifftinP}
\sup_{B\in \scrR^p} |P(\sbmdelta \in B \mid \ckbmbeta) - P(\hbmdelta\in B )|\toinP 0.
\end{equation}
As $\inf_{A_0}|\beta_{0j}|$ may decay to zero at a rate slower than $\sqrt{{q_0 (\log p)}/{n}}\to 0$, Theorem~\ref{thm:threshold} applies in the high-dimensional setting ($p\gg n\to\infty$). 
\end{remark}

\subsection{Unknown variance}\label{sec:theoryunknownvar}

When $\sigma^2$ is unknown, recall that $\bmeps^*\mid \hsigma \sim \dnorm_n(\bfzr,\hsigma^2\bfI_n)$ and $\sbfU=\bfX^{\trans}\bmeps^*/n$. Denote the components of $\sbfU$ by $U_j^*$, $j=1,\ldots,p$. Define 
\begin{align}\label{eq:defdeltastar0}
\sbmdelta_0=\arg\min_{\bmdelta} V(\bmdelta;\ckbmbeta,(\sigma/\hsigma)\sbfU), 
\end{align}
whose distribution does not depend on $\hsigma$ and is identical to that of $\sbmdelta$ when $\hsigma$ is fixed to $\sigma$. Thus, results in Section~\ref{sec:errorboundsknownvar} show that $\nu[\sbmdelta_0\mid \ckbmbeta]$ is close to $\nu[\hbmdelta]$. We will further bound the difference, $\sbmdelta-\sbmdelta_0$, in this subsection. In other words, $\sbmdelta_0$ serves as an intermediate variable between $\hbmdelta$ and $\sbmdelta$ to help quantify the difference between their distributions.

\begin{lemma}\label{lm:fixunknownvar}
Assume that the columns of $\bfX$ are in general position. Let $\ckbmbeta$ and $\hsigma$ be fixed such that $|\calA(\ckbmbeta)|\leq q$ with $1\leq q \leq p$ and
\begin{equation}\label{eq:diffhsigma}
\max\{\left|{\hsigma}/{\sigma}-1\right|, \left|{\sigma}/{\hsigma}-1\right|\} \leq \zeta \in [0,1).
\end{equation}
If Assumption RE$(q,3)$ is satisfied, then on the event $\cap_{j=1}^p\{|U_j^*| \leq (1-\zeta)w_j\lambda/2\}$, 
\begin{equation}\label{eq:boundquadratic}
\frac{1}{n}\|\bfX (\sbmdelta-\sbmdelta_0) \|^2_2 \leq \frac{32 w_{\max}r_n^2\lambda^2 \zeta}{w_{\min}\kappa^2(q,3)}  \sum_{A_0} w_j^2.
\end{equation}
\end{lemma}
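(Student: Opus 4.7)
The plan is to start from the KKT conditions for $\sbmbeta=\ckbmbeta+r_n^{-1}\sbmdelta$ and $\sbmbeta_0=\ckbmbeta+r_n^{-1}\sbmdelta_0$, subtract them, and then extract a bound on the quadratic form $\bmgamma^{\trans}\bfC\bmgamma$ where $\bmgamma=\sbmbeta-\sbmbeta_0=r_n^{-1}(\sbmdelta-\sbmdelta_0)$. Since $n^{-1}\|\bfX(\sbmdelta-\sbmdelta_0)\|_2^2=r_n^2\,\bmgamma^{\trans}\bfC\bmgamma$, such a bound translates immediately into the claim. The smallness factor $\zeta$ will enter through the noise-scaling discrepancy $1-\sigma/\hsigma$, while the remaining structural factors will come from standard cone/RE machinery for the Lasso.

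Concretely, the two KKT conditions (in the form of \eqref{eq:lassoKKTinU}) read
\[
\sbfU=\bfC(\sbmbeta-\ckbmbeta)+\lambda\bfW\bfS^{*},\qquad (\sigma/\hsigma)\sbfU=\bfC(\sbmbeta_0-\ckbmbeta)+\lambda\bfW\bfS^{*}_0,
\]
with $\bfS^{*}$ and $\bfS^{*}_0$ the respective $\ell_1$ subgradients. Subtracting and taking an inner product with $\bmgamma$, the subgradient contribution $\lambda\sum_j w_j\gamma_j(S^{*}_j-S^{*}_{0,j})$ is nonnegative by monotonicity of $\partial|\cdot|$, so it can be dropped to give
\[
\bmgamma^{\trans}\bfC\bmgamma\;\leq\;(1-\sigma/\hsigma)\,\sbfU^{\trans}\bmgamma\;\leq\;\zeta\,\frac{(1-\zeta)\lambda}{2}\sum_{j=1}^p w_j|\gamma_j|,
\]
where the last step uses $|1-\sigma/\hsigma|\leq\zeta$ together with the assumed event $|U^{*}_j|\leq(1-\zeta)w_j\lambda/2$.

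The remaining task is to convert $\sum_j w_j|\gamma_j|$ into something controllable. I would use the triangle inequality $\sum_j w_j|\gamma_j|\leq\sum_j w_j|\beta^{*}_j-\ckbeta_j|+\sum_j w_j|\beta^{*}_{0,j}-\ckbeta_j|$ and apply the standard Lasso basic inequality to each summand with $A=\calA(\ckbmbeta)$: the event forces both $|U^{*}_j|\leq w_j\lambda/2$ and $|(\sigma/\hsigma)U^{*}_j|\leq(1-\zeta^2)w_j\lambda/2\leq w_j\lambda/2$, so $\sbmbeta-\ckbmbeta$ and $\sbmbeta_0-\ckbmbeta$ each lie in the $c_0=3$ cone on $A$. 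Invoking Assumption RE$(q,3)$ with $|A|\leq q$ and Cauchy-Schwarz ($\sum_A w_j|v_j|\leq\sqrt{\sum_A w_j^2}\,\|\bfv_A\|_2$) yields prediction-error estimates of the form $n^{-1/2}\|\bfX\bfv_i\|_2\leq c\,\lambda\sqrt{\sum_A w_j^2}/\kappa(q,3)$, which when combined with the cone condition to bound $\sum_j w_j|v_{i,j}|\leq 4\sum_A w_j|v_{i,j}|$ yields $\sum_j w_j|\gamma_j|\leq C\,\lambda\sum_A w_j^2/\kappa^2(q,3)$. Substituting into the earlier display and multiplying by $r_n^2$ gives the claimed bound.

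The main obstacle is bookkeeping of constants so that only one factor of $\zeta$ survives on the right (absorbing $(1-\zeta)$ rather than letting it reappear in denominators), and, more subtly, matching the stated form which carries $\sum_{A_0}w_j^2$ and the ratio $w_{\max}/w_{\min}$ in place of $\sum_A w_j^2$. This last piece presumably uses crude comparisons such as $|A|w_{\min}^2\leq \sum_A w_j^2\leq |A|w_{\max}^2$ to trade the active set of $\ckbmbeta$ for $A_0$ under $|A|\leq q$ together with the sparsity hypothesis, producing the factor $w_{\max}/w_{\min}$.
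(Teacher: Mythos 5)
Your proof is correct, but it takes a genuinely different route from the paper's. The paper works at the level of the objective $V$: it first proves a quadratic growth inequality at any minimizer, $V(\bmgamma+\bmDelta;\ckbmbeta,\bfu)\geq V(\bmgamma;\ckbmbeta,\bfu)+\tfrac{n}{2r_n^2}\bmDelta^{\trans}\bfC\bmDelta$ (Lemma~\ref{lm:devVmin}), then bounds the perturbation $|V(\bmgamma;\ckbmbeta,\hat{c}\sbfU)-V(\bmgamma;\ckbmbeta,\sbfU)|\leq h$ caused by rescaling the noise by $\hat c=\sigma/\hsigma$ (Lemma~\ref{lm:scaleu}, which invokes the $\ell_1$-error bound \eqref{eq:boundL1} with $\ckbmbeta$ regarded as the truth), and finally chains four inequalities to get $\bmDelta^{\trans}\bfC\bmDelta\leq 4r_n^2h/n$. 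You instead subtract the two KKT conditions for $\sbmdelta$ and $\sbmdelta_0$, pair with $\bmgamma$, and kill the penalty term by monotonicity of the subdifferential of $|\cdot|$, which isolates $\bmgamma^{\trans}\bfC\bmgamma\leq(1-\sigma/\hsigma)\sbfU^{\trans}\bmgamma$ in one step; the remaining control of $\sum_j w_j|\gamma_j|$ via the triangle inequality, the cone condition, and RE$(q,3)$ is the same $\ell_1$-error machinery the paper uses inside Lemma~\ref{lm:scaleu}. Both arguments need exactly the same event (so that both $|U_j^*|$ and $|(\sigma/\hsigma)U_j^*|$ are at most $w_j\lambda/2$) and the same structural hypotheses; your version is arguably more direct and, if the constants are tracked carefully, yields a slightly sharper bound. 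On your final worry about $\sum_{A_0}w_j^2$ versus $\sum_{\calA(\ckbmbeta)}w_j^2$: the paper's own Lemma~\ref{lm:scaleu} has the identical slippage (its statement writes $A_0$ while its proof treats $\ckbmbeta$ as the truth), and in the intended application within Theorem~\ref{thm:thresholdsigmaest} one has $\calA(\ckbmbeta)=A_0$ on the relevant event, so no crude $w_{\max}/w_{\min}$ comparison is actually needed there.
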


Based on \eqref{eq:boundquadratic}, we can obtain an upper bound on $\|\sbmdelta-\sbmdelta_0\|_2$ via the restricted eigenvalues of $\bfC$. Define the minimum restricted eigenvalue of $\bfC$ for an integer $m\leq p$ by
\begin{equation}\label{eq:remin}
\phi_{\min}(m)=\min_{1\leq |\calA(\bfv)|\leq m} \frac{\bfv^{\trans}\bfC \bfv}{\|\bfv\|_2^2},
\end{equation}
and let
\begin{equation*}
M_2=\frac{128\phi_{\max}}{\kappa^2(q,3)} \sum_{A_0} \frac{w_j^2}{w_{\min}^2},
\end{equation*}
which is twice the upper bound on $|\calA(\hbmbeta)|$ in \eqref{eq:boundsize}.

\begin{theorem}\label{thm:thresholdsigmaest}
Let the assumptions in Theorem~\ref{thm:threshold} be satisfied but without fixing $\hsigma$ to $\sigma$. In addition, assume that \eqref{eq:diffhsigma} holds with probability at least $1-\alpha_2$ and $\phi_{\min}(M_2)>0$. Choose
\begin{equation}\label{eq:condlambdasigmaest}
\lambda \geq \frac{1+\zeta}{1-\zeta} \lambda_0,
\end{equation}
where $\lambda_0$ is defined in \eqref{eq:condlambda}.
Then with probability at least $1-(\alpha_2+2p^{1-u})$, we have  
\begin{equation}\label{eq:supPrdiffdelta0}
\sup_{B\in \scrR^p} |P(\sbmdelta_0 \in B \mid \ckbmbeta) - P(\hbmdelta \in B)| \leq 2\alpha_1
\end{equation}
and 
\begin{equation}\label{eq:L2Deltabound}
P\left\{\left.\|\sbmdelta-\sbmdelta_0 \|^2_2 \leq \frac{32 w_{\max}r_n^2\lambda^2 \zeta \sum_{A_0} w_j^2}{w_{\min}\kappa^2(q,3)\phi_{\min}(M_2)}\right| \ckbmbeta, \hsigma\right\}\geq 1- 2p^{1-u}. 
\end{equation}
\end{theorem}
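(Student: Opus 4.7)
The plan is to split the theorem into its two conclusions and leverage the previously established lemmas, handling the randomness in the original data $\bmeps$ and the augmentation noise $\bmeps^*$ separately.

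For \eqref{eq:supPrdiffdelta0}, the key observation is that the conditional distribution of $\sbmdelta_0$ given $\ckbmbeta$ is invariant to $\hsigma$: since $(\sigma/\hsigma)\bmeps^* \mid \hsigma \sim \dnorm_n(\bfzr,\sigma^2\bfI_n)$, the rescaled $(\sigma/\hsigma)\sbfU$ that defines $\sbmdelta_0$ has the same law as the $\sbfU$ in the known-variance analysis with $\hsigma$ frozen at $\sigma$. Hence $\nu[\sbmdelta_0\mid\ckbmbeta]$ coincides with the known-variance $\nu[\sbmdelta\mid\ckbmbeta]$, and Theorem~\ref{thm:threshold} applies verbatim to give \eqref{eq:supPrdiffdelta0} on the event $\calE$ from Lemma~\ref{lm:lassobounds}, which has probability at least $1-2p^{1-u}$. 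On the same event, Lemma~\ref{lm:screen} gives $\calA(\ckbmbeta)=A_0$, validating the hypothesis $|\calA(\ckbmbeta)|\leq q$ needed below.

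For \eqref{eq:L2Deltabound}, I would condition on $(\ckbmbeta,\hsigma)$ satisfying \eqref{eq:diffhsigma} and handle the randomness in $\bmeps^*$ through a single event
\begin{equation*}
\calE^* = \bigcap_{j=1}^{p}\{|U_j^*|\leq (1-\zeta)w_j\lambda/2\}.
\end{equation*}
Because $U_j^*/\hsigma$ has the same law as $U_j/\sigma$, combining $\hsigma\leq(1+\zeta)\sigma$ from \eqref{eq:diffhsigma} with the slackness $(1-\zeta)\lambda\geq(1+\zeta)\lambda_0$ from \eqref{eq:condlambdasigmaest} shows that each per-coordinate event is implied by $\{|U_j|\leq w_j\lambda_0/2\}$, so the Gaussian tail bound behind Lemma~\ref{lm:lassobounds} gives $P(\calE^*\mid\hsigma)\geq 1-2p^{1-u}$. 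On $\calE^*$, Lemma~\ref{lm:fixunknownvar} applies and yields the prediction-norm bound \eqref{eq:boundquadratic}. Crucially, since $1-\zeta\leq 1$ and $(\sigma/\hsigma)(1-\zeta)\leq(1+\zeta)(1-\zeta)\leq 1$, the same event $\calE^*$ automatically implies both $\bigcap_j\{|U_j^*|\leq w_j\lambda/2\}$ and $\bigcap_j\{|(\sigma/\hsigma)U_j^*|\leq w_j\lambda/2\}$, so Lemma~\ref{lm:lassobounds} can be applied to $\sbmbeta$ and $\sbmbeta_0$ (viewed as Lasso-type estimators of $\ckbmbeta$ with noise variance $\hsigma^2$ and $\sigma^2$, respectively, and $\calA(\ckbmbeta)\subseteq A_0$) to yield $|\calA(\sbmbeta)|,|\calA(\sbmbeta_0)|\leq M_2/2$, and hence $|\calA(\sbmdelta-\sbmdelta_0)|\leq M_2$.

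Combining the sparsity bound with the definition \eqref{eq:remin} of $\phi_{\min}(M_2)$ gives $\phi_{\min}(M_2)\|\sbmdelta-\sbmdelta_0\|_2^2 \leq (\sbmdelta-\sbmdelta_0)^{\trans}\bfC(\sbmdelta-\sbmdelta_0) = \|\bfX(\sbmdelta-\sbmdelta_0)\|_2^2/n$, which together with \eqref{eq:boundquadratic} delivers \eqref{eq:L2Deltabound}. A final union bound over the data event $\calE$ (probability $\geq 1-2p^{1-u}$, supporting both the screening property for \eqref{eq:supPrdiffdelta0} and the hypothesis $|\calA(\ckbmbeta)|\leq q$) and the event where \eqref{eq:diffhsigma} holds (probability $\geq 1-\alpha_2$) produces the overall $1-(\alpha_2+2p^{1-u})$. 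The main obstacle in executing this plan is arranging for a single $\bmeps^*$-event $\calE^*$ of probability $\geq 1-2p^{1-u}$ to simultaneously support three Lasso-type arguments — the prediction-norm bound for $\sbmdelta-\sbmdelta_0$ and the two support-size bounds for $\sbmbeta$ and $\sbmbeta_0$ — so that the probability does not inflate; this pivots on the tight identities $\hsigma/\sigma\in[1/(1+\zeta),1+\zeta]$ dovetailing precisely with the slackness factor $(1+\zeta)/(1-\zeta)$ built into \eqref{eq:condlambdasigmaest}.
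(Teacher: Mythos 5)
Your proposal is correct and follows essentially the same route as the paper: the invariance of $\nu[\sbmdelta_0\mid\ckbmbeta]$ to $\hsigma$ reduces \eqref{eq:supPrdiffdelta0} to Theorem~\ref{thm:threshold}, the event $\calE^*=\cap_j\{|U_j^*|\leq(1-\zeta)w_j\lambda/2\}$ with the scaling $\hsigma/\sigma\in[1-\zeta,1+\zeta]$ and the slack in \eqref{eq:condlambdasigmaest} drives both Lemma~\ref{lm:fixunknownvar} and the two applications of \eqref{eq:boundsize}, and the $M_2$-sparsity of $\sbmdelta-\sbmdelta_0$ converts \eqref{eq:boundquadratic} into \eqref{eq:L2Deltabound} via $\phi_{\min}(M_2)$. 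The only (harmless) deviation is that you bound $|\calA(\sbmdelta-\sbmdelta_0)|$ directly by $|\calA(\sbmbeta)|+|\calA(\sbmbeta_0)|$ using $\sbmdelta-\sbmdelta_0=r_n(\sbmbeta-\sbmbeta_0)$, whereas the paper routes through the screening inclusion $\calA(\ckbmbeta)\subseteq\calA(\sbmbeta)$ to bound $|\calA(\sbmdelta)|$ and $|\calA(\sbmdelta_0)|$ separately.
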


\begin{remark}
Assume that $\max\{\left|{\hsigma}/{\sigma}-1\right|, \left|{\sigma}/{\hsigma}-1\right|\} =O_P(\zeta_n)$ with $\zeta_n\to0$. Following the asymptotic framework in Remark~\ref{rmk:asymknownvar}, we can establish by the same reasoning that
\begin{equation}\label{eq:uniformdelta0}
\sup_{B\in \scrR^p} |P(\sbmdelta_0 \in B \mid \ckbmbeta) - P(\hbmdelta\in B )|\toinP 0.
\end{equation}
Suppose that $\liminf_n\phi_{\min}(m)>0$ for $m=o(n)$ and that $\phi_{\max}$ is bounded from above by a constant (or at least does not diverge too fast). Then $M_2=O(\phi_{\max}q_0)=o(n)$ and $\phi_{\min}(M_2)$ is bounded from below by a positive constant. The upper bound in \eqref{eq:L2Deltabound} becomes $O(r_n^2 \lambda^2 \zeta_n q_0)=O(\zeta_n q_0 \log p)$ as $r_n=O(\sqn)$ \eqref{eq:orderofr_n} and $\lambda^2\asymp \log(p)/n$. If $\zeta_n =o(1/(q_0 \log p))$, then \eqref{eq:L2Deltabound} implies that
\begin{equation*}
P(\|\sbmdelta-\sbmdelta_0\|_2>\epsilon\mid\ckbmbeta,\hsigma)\to 0
\end{equation*}
for any $\epsilon>0$. Combing with \eqref{eq:uniformdelta0} this shows that, with probability tending to one, $\nu[\sbmdelta_J\mid \ckbmbeta,\hsigma]$ converges weakly to $\nu[\hbmdelta_J]$ for any fixed index set $J\subseteq\{1,\ldots,p\}$. If $\sigma$ is estimated by the scaled Lasso \citep{SunZhang12}, one may reach $\zeta_n=O(q_0(\log p)/n)$ under certain conditions by their Theorem 2 and it is then sufficient to have $q_0\log p\ll \sqn$. If $p$ is fixed, we only need $\zeta_n =o(1)$ and hence any consistent estimator of $\sigma$ will be sufficient. In this case, $r_n\asymp \sqn$ \eqref{eq:orderofr_n} and with probability tending to one, $\nu[\sbmdelta\mid \ckbmbeta,\hsigma]$ converges weakly to $\nu[\hbmdelta]$.
\end{remark}

The key assumptions on the underlying model are the RE assumption on the Gram matrix $\bfC$ and beta-min and sparsity assumptions on the true coefficients $\bmbeta_0$, which are comparable to those in Lemma~\ref{lm:lassobounds} and Lemma~\ref{lm:screen}. There is an extra assumption that $\phi_{\min}(M_2)>0$ in Theorem~\ref{thm:thresholdsigmaest}, which is again imposed on the restricted eigenvalues of $\bfC$. If $\bfX$ is drawn from a continuous distribution on $\R^{n\times p}$, then with probability one $\phi_{\min}(m)>0$ for any $m\leq n$. It should be noted that we do not assume the irrepresentable condition \citep{ZhaoYu06,Meinshausen06,Zou06}, which is much stronger than the assumptions on the restricted eigenvalues of $\bfC$.

We compare our results to residual bootstrap for approximating the sampling distribution of the Lasso. To be precise, a residual bootstrap is equivalent to Routine~\ref{rt:directsample} with $\bmbeta$ estimated by $\ckbmbeta$ and $\bmeps^{(t)}$ drawn by resampling residuals. Assuming $p$ is fixed, \cite{KnightFu00} argue that the residual bootstrap may be consistent if $\ckbmbeta$ is model selection consistent and \cite{Chatterjee11} establish such fixed-dimensional consistency when $\ckbmbeta$ is constructed by thresholding the Lasso, in the same spirit as \eqref{eq:thresholddef}. As discussed above, Theorem~\ref{thm:thresholdsigmaest} applied to a fixed $p$ is clearly in line with these previous works. However, our results are much more general by providing explicit nonasymptotic bounds that imply consistency when $p\gg n\to \infty$. More recently, \cite{Chatterjee13} have shown that the residual bootstrap is consistent for the adaptive Lasso \citep{Zou06} when $p>n\to \infty$ under a number of conditions. A fundamental difference is that the weights $w_j$ are specified by an initial $\sqn$-consistent estimator in their work and will not stay bounded as $n\to\infty$. Therefore, their results do not apply to the Lasso. On the contrary, the results in this section are derived assuming the weights $w_j$ are constants without being specified by any initial estimator. In addition, \cite{Chatterjee13} impose in their Theorem 5.1 that $\inf_{A_0}|\beta_{0j}|\geq K$ for some $K\in(0,\infty)$ when $p>n$, which disallows the decay of the magnitudes of nonzero coefficients as $n$ grows. This is considerably stronger than our assumption \eqref{eq:orderofbetamin}.

\section{Generalizations and discussions}\label{sec:generalization}

\subsection{Random design}

We generalize the Monte Carlo methods to a random design, 
assuming that $\bfX$ is drawn from a distribution $f_{\bfX}$. 
The distribution of the augmented estimator $(\hbmbeta_{\calA},\bfS_{\calI},\calA)$, \eqref{eq:jointdensity} 
and \eqref{eq:jointdensityhigh}, becomes a conditional distribution given $\bfX=\bfx$, written as
$\pi(\bfb_A,\bfs_I,A\mid \bfx)$ and ${\pi}_r(\bfb_A,\bfs_I,A\mid \bfx)$, respectively. 

In the low-dimensional setting, 
we may generalize the MLS (Routine~\ref{rt:MHLS}) to draw samples from 
${\pi}(\bfb_A,\bfs_I,A,\bfx)={\pi}(\bfb_A,\bfs_I,A\mid \bfx){f}_{\bfX}(\bfx)$
and approximate the sampling distribution of $\hbmbeta$.
It may be difficult to assume or estimate a reliable density for $\bfX$, 
but it is sufficient for the development of an MH sampler under a random design (rdMLS) if we can draw from ${f}_{\bfX}(\bfx)$.
As seen below, we do not need an
explicit form of $f_{\bfX}(\bfx)$ for computing the MH ratio \eqref{eq:rMHration}
and thus may draw $\bfx^{\dag}$ by the bootstrap.

\begin{routine}[rdMLS]\label{rt:rMHLS}
Suppose the current sample is $(\bfb_A,\bfs_I,A,\bfx)^{(t)}$.
\begin{itemize}
\item[(1)]{Draw $\bfx^{\dag}$ from ${f}_{\bfX}$, and accept it as $\bfx^{(t+1)}$ with probability
\begin{equation}\label{eq:rMHration}
\min\left\{1, \frac{\pi((\bfb_A,\bfs_I,A)^{(t)}\mid \bfx^{\dag})}
{\pi((\bfb_A,\bfs_I,A)^{(t)}\mid \bfx^{(t)})} \right\};
\end{equation}
otherwise, set $\bfx^{(t+1)}=\bfx^{(t)}$.}
\item[(2)]{Regarding ${\pi}(\bfb_A,\bfs_I,A\mid \bfx^{(t+1)})$ as the target density,
apply one iteration of the MLS (Routine~\ref{rt:MHLS}) to obtain $(\bfb_A,\bfs_I,A)^{(t+1)}$.}
\end{itemize}
\end{routine}

Generalization of the IS algorithm (Routine~\ref{rt:IS}) is also straightforward.
Draw $\bfx^{(t)}$ from $f_{\bfX}$ and draw $(\bfb_A,\bfs_I,A)^{(t)}$ from the trial distribution
given $\bfX=\bfx^{(t)}$. Calculate importance weights by \eqref{eq:ISweight} with $\bfX=\bfx^{(t)}$,
and apply the same estimation \eqref{eq:ISest}. Again, an explicit expression for $f_{\bfX}$ is unnecessary.
But bootstrap sampling from $\bfX$ is not a choice for the high-dimensional setting, 
because a bootstrap sample from $\bfX$ violates Assumption~\ref{as:X}.

\subsection{Model selection consistency}

The distribution of the augmented estimator may help 
establish asymptotic properties of a Lasso-type
estimator. Here, we demonstrate this point by studying the model selection consistency of
the Lasso. Our goal is not to establish new asymptotic results, but to provide an intuitive
and geometric understanding of the technical conditions in existing work.
Recall that $\calA$ and $A_0$ are the respective active sets of $\hbmbeta$ and $\bmbeta_0$. 
Let $q_0=|A_0|$ and $\bfs_{0}=\sgn(\bmbeta_{0A_0})$. 
Without loss of generality, assume $A_0=\{1,\ldots,q_0\}$ and $I_0=\{q_0+1,\ldots,p\}$.
We allow both $p$ and $q_0$ to grow with $n$.

\begin{definition}[sign consistency \citep{MeinshausenYu09}]\label{def:signconsistency}
We say that $\hbmbeta$ is sign consistent for $\bmbeta_0$ if
\begin{equation}\label{eq:signconsistency}
P(\calA=A_0, \sgn(\hbmbeta_{A_0})=\bfs_0) \to 1, \text{ as } n\to \infty.
\end{equation}
\end{definition}

If $\hbmbeta$ is unique, the size of its active set $|\calA|\leq n$ (Lemma~\ref{lm:rank}), 
and thus $\bfD(\calA)$ is invertible from \eqref{eq:defD}. 
Therefore, the definitions of $\bmmu(A,\bfs_A;\bmbeta)$ and $\bmSigma(A;\sigma^2)$ in \eqref{eq:munorm}
and \eqref{eq:varnorm} are also valid for any $(\bfb_A,\bfs_I,A)\in \Omega_r$ when $p>n$. 
Rewrite the KKT condition \eqref{eq:lassojoint} as
\begin{equation}\label{eq:defTheta}
\bmTheta=[\bfD(\calA)]^{-1} \bfU + \bmmu(\calA,\bfS_{\calA};\bmbeta_0),
\end{equation}
where $\bmTheta=(\hbmbeta_{\calA},\bfS_{\calI}) \in \R^p$.
Fixing $\calA=A_0$ and $\bfS_{A_0}=\bfs_0$ in \eqref{eq:defTheta}, we define a random vector
\begin{equation}\label{eq:defZ}
\bfZ=[\bfD(A_0)]^{-1} \bfU + \bmmu(A_0,\bfs_0;\bmbeta_0)
\end{equation}
via an affine map of $\bfU$. 
Note that we always have $\E(\bfU)=\bfzr$ and $\Var(\bfU)=\frac{\sigma^2}{n}\bfC \geq 0$, regardless of the sizes of $n$ and $p$.
When $p>n$, $\Var(\bfU)$ is semipositive definite, 
meaning that components of $\bfU$ are linearly dependent of each other, since $\bfU$ only lies in $\row(\bfX)$,
a proper subspace of $\R^p$. Consequently, $\E(\bfZ)=\bmmu(A_0,\bfs_{0};\bmbeta_0)\defi \bmmu^0$ and 
$\Var(\bfZ)=\bmSigma(A_0;\sigma^2) \defi \bmSigma^0\geq 0$.
Simple calculation from \eqref{eq:munorm} and \eqref{eq:varnorm} gives
\begin{eqnarray}
&\left(\begin{array}{c}
\bmmu^0_{A_0} \\ \bmmu^0_{I_0} \end{array}\right)  =
\left(
\begin{array}{c}
\bmbeta_{0A_0}-\lambda\bfC_{A_0A_0}^{-1}\bfW_{A_0A_0} \bfs_{0}  \\
\bfW_{I_0I_0}^{-1}\bfC_{I_0A_0}\bfC_{A_0A_0}^{-1}\bfW_{A_0A_0}\bfs_{0}
\end{array}
\right), \label{eq:muA0} \\
&\bmSigma^0 = 
\frac{\sigma^2}{n}\left( \begin{array}{cc}
\bfC_{A_0A_0}^{-1} &  \bfzr \\
\bfzr & \lambda^{-2} \bfW_{I_0I_0}^{-1}\bfC_{I_0\mid A_0} \bfW_{I_0I_0}^{-1}
 \end{array}\right), \label{eq:sigmaA0}
\end{eqnarray}
where $\bfC_{I_0\mid A_0} = \bfC_{I_0I_0}-\bfC_{I_0A_0}\bfC_{A_0A_0}^{-1}\bfC_{A_0I_0}$. 

\begin{lemma}\label{lm:signtoZ}
If the columns of $\bfX$ are in general position and $\bmeps$ is i.i.d. with mean zero and variance $\sigma^2$, then for any $p\geq 1$ and $n \geq 1$,
\begin{equation}\label{eq:normsign}
P(\calA=A_0, \sgn(\hbmbeta_{A_0})=\bfs_0) = P(\bfZ \in \Omega_{A_0,\bfs_0}),
\end{equation}
where $\Omega_{A_0,\bfs_0}$ is defined by \eqref{eq:subsetAs}.
\end{lemma}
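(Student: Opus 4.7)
The plan is to convert the probability identity into a deterministic pointwise set equality by exploiting the bijective nature of the KKT map $\bfH$ together with its particularly simple affine form \eqref{eq:affineKKT} on the stratum with fixed active set and fixed sign pattern.

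First, I would restate the target event in terms of the augmented estimator: by the definition of $\Omega_{A_0,\bfs_0}$ in \eqref{eq:subsetAs}, the event $\{\calA=A_0,\sgn(\hbmbeta_{A_0})=\bfs_0\}$ is the same as $\{(\hbmbeta_{A_0},\bfS_{I_0})\in\Omega_{A_0,\bfs_0},\calA=A_0\}$. On this stratum, the KKT identity \eqref{eq:lassoKKTinU} collapses to the affine map $\bfU=\bfD(A_0)[(\hbmbeta_{A_0},\bfS_{I_0})-\bmmu^0]$, which is invertible provided $\bfD(A_0)$ is invertible. The latter is clear in the low-dimensional case where $\bfC>0$; in the high-dimensional case it follows from general position of the columns of $\bfX$ when $|A_0|\leq n$, a condition implicit for sign consistency to be meaningful. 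Solving this affine equation returns exactly the vector $\bfZ=[\bfD(A_0)]^{-1}\bfU+\bmmu^0$ defined in \eqref{eq:defZ}.

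Next I would establish the pointwise set identity $\{\calA=A_0,\sgn(\hbmbeta_{A_0})=\bfs_0\}=\{\bfZ\in\Omega_{A_0,\bfs_0}\}$. The forward inclusion is immediate from the previous paragraph. For the reverse, suppose $\bfZ(\omega)\in\Omega_{A_0,\bfs_0}$, partition $\bfZ(\omega)=(\bfz_{A_0},\bfz_{I_0})$ with $\sgn(\bfz_{A_0})=\bfs_0$ and $\bfz_{I_0}\in[-1,1]^{|I_0|}$, and build a candidate pair $(\tilde{\bmbeta},\tilde{\bfS})$ via $\tilde{\bmbeta}_{A_0}=\bfz_{A_0}$, $\tilde{\bmbeta}_{I_0}=\bfzr$, $\tilde{\bfS}_{A_0}=\bfs_0$, $\tilde{\bfS}_{I_0}=\bfz_{I_0}$. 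A direct computation using $\bfD(A_0)=(\bfC_{A_0}\mid\lambda\bfW_{I_0})$ and the expression for $\bmmu^0$ in \eqref{eq:muA0} shows that $(\tilde{\bmbeta},\tilde{\bfS})$ satisfies \eqref{eq:lassoKKTinU} for the realized $\bfU(\omega)$, and the subgradient constraint \eqref{eq:defS} holds by construction. Uniqueness of $(\hbmbeta,\bfS)$ from Lemma~\ref{lm:prelim} then forces $(\hbmbeta,\bfS)(\omega)=(\tilde{\bmbeta},\tilde{\bfS})$, which gives the event.

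Taking probabilities of this set equality yields \eqref{eq:normsign}. The argument is distribution-free in $\bmeps$ and applies uniformly in the low- and high-dimensional regimes, since both \eqref{eq:lassoKKTinU} and the uniqueness of $(\hbmbeta,\bfS)$ are insensitive to whether $p\leq n$ or $p>n$. The only real work is the arithmetic verification that the candidate pair satisfies the KKT identity, but this is forced by the very definition of $\bmmu^0$, so there is no serious obstacle; the choice of $\bmmu^0$ in \eqref{eq:munorm} was engineered precisely to make the restricted $\bfH$ an affine map with zero intercept after the shift.
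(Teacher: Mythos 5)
Your proposal is correct and follows essentially the same route as the paper: both directions rest on observing that on the stratum $\{\calA=A_0,\sgn(\hbmbeta_{A_0})=\bfs_0\}$ the KKT condition becomes the affine identity defining $\bfZ$, and on invoking the uniqueness of $(\hbmbeta,\bfS)$ from Lemma~\ref{lm:prelim} to convert $\bfZ\in\Omega_{A_0,\bfs_0}$ back into the sign-recovery event, yielding a deterministic set equality whose probabilities coincide. The only difference is presentational: you spell out the construction and verification of the candidate pair in the reverse direction, which the paper compresses into the single sentence that $(\bfZ,A_0)$ solves the KKT condition.
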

\begin{proof}
If $\calA=A_0$ and $\bfS_{A_0}=\bfs_0$, then $\bmTheta\in\Omega_{A_0,\bfs_0}$ by definition. In this case, \eqref{eq:defTheta} reduces to \eqref{eq:defZ} and we have $\bfZ=\bmTheta \in \Omega_{A_0,\bfs_0}$. Reversely, if $\bfZ \in \Omega_{A_0,\bfs_0}$, then $(\bfZ,A_0)$ is a solution to the KKT condition \eqref{eq:defTheta}. By uniqueness, $(\bmTheta,\calA)=(\bfZ,A_0)$ and therefore $\calA=A_0$ and $\bfS_{A_0}=\bfs_0$.
\end{proof}

Consequently, to establish sign consistency, we only need a set of sufficient conditions for
$P(\bfZ \in \Omega_{A_0,\bfs_0})\to 1$: 
(C1) $\sgn(\bmmu^0_{A_0})= \bfs_0$. 
(C2) $\|\bmmu^0_{I_0} \|_{\infty} \leq c$ for some $c \in (0,1)$. 
(C3) Let $Z_j$ and $\mu^0_j$ be the $j^\supth$ components of $\bfZ$ and $\bmmu^0$, respectively.
As $n\to\infty$,
\begin{equation}\label{eq:C3}
P\left(|Z_j - \mu^0_j| < \delta_j, \,\forall j \right)\to 1,
\end{equation}
where $\delta_j=|\mu^0_j|$ for $j\in A_0$ and $\delta_j=1-c$ for $j\in I_0$.

The first two conditions ensure that 
$\bmmu^0=\E(\bfZ)$ lies in the interior of $\Omega_{A_0,\bfs_0}$. The third condition guarantees that $\bfZ$
always stays in a box centered at $\bmmu^0$, and the box is contained in $\Omega_{A_0,\bfs_0}$ if (C1) and (C2) hold.
These conditions have a simple and intuitive geometric interpretation illustrated in Figure~\ref{fig:consistent}.
\begin{lemma}\label{lm:signconsistency}
Assume the columns of $\bfX$ are in general position and $\bmeps$ is i.i.d. with mean zero and variance $\sigma^2$. If conditions
(C1), (C2), and (C3) hold as $n\to\infty$, then $\hbmbeta$ is sign consistent for $\bmbeta_0$, regardless of the relative
size between $p$ and $n$.
\end{lemma}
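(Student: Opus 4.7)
The plan is to reduce the assertion to a tail probability statement about the random vector $\bfZ$ via Lemma~\ref{lm:signtoZ}, and then verify that the three conditions together force $\bfZ$ to lie in the ``good'' region $\Omega_{A_0,\bfs_0}$ with probability tending to one. Since this region is exactly the product of an orthant (for the coordinates in $A_0$) and a cube $[-1,1]^{|I_0|}$ (for the coordinates in $I_0$), it can be shown to contain a box centered at $\bmmu^0$ as long as $\bmmu^0$ sits strictly in the interior, which is precisely what (C1) and (C2) guarantee.

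First, by Lemma~\ref{lm:signtoZ}, sign consistency is equivalent to $P(\bfZ \in \Omega_{A_0,\bfs_0})\to 1$, so it suffices to establish the latter. Next, I will introduce the event
\begin{equation*}
\calE_n = \left\{ |Z_j - \mu^0_j| < \delta_j \text{ for all } j=1,\ldots,p \right\}
\end{equation*}
from (C3) and argue $\calE_n \subseteq \{\bfZ \in \Omega_{A_0,\bfs_0}\}$. The verification is coordinate-wise. For $j\in A_0$, where $\delta_j=|\mu^0_j|$, the strict inequality $|Z_j-\mu^0_j|<|\mu^0_j|$ keeps $Z_j$ in the open half-line not containing zero, so $\sgn(Z_j)=\sgn(\mu^0_j)=s_{0,j}$ by (C1); in particular $Z_j\neq 0$. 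For $j\in I_0$, where $\delta_j=1-c$, the triangle inequality combined with (C2) gives
\begin{equation*}
|Z_j| \leq |\mu^0_j| + |Z_j-\mu^0_j| < c + (1-c) = 1,
\end{equation*}
so $Z_j \in (-1,1)\subset[-1,1]$. Writing out the definition of $\Omega_{A_0,\bfs_0}$ in \eqref{eq:subsetAs}, these two facts are exactly what is required for $\bfZ$ (identified with $(\hbmbeta_{A_0},\bfS_{I_0})$) to belong to $\Omega_{A_0,\bfs_0}$.

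Finally, assembling the pieces, $P(\bfZ\in\Omega_{A_0,\bfs_0})\geq P(\calE_n)\to 1$ by (C3), and the conclusion follows from Lemma~\ref{lm:signtoZ}. The argument does not use the normality of $\bmeps$ or any relation between $p$ and $n$: the only inputs are the uniqueness of $(\hbmbeta,\bfS)$ (guaranteed by the general position assumption on the columns of $\bfX$, through Lemma~\ref{lm:prelim}) and the three deterministic/probabilistic conditions. There is no real obstacle here; the content of the lemma is geometric and Lemma~\ref{lm:signtoZ} has already done the heavy lifting by translating the event $\{\calA=A_0,\sgn(\hbmbeta_{A_0})=\bfs_0\}$ into a simple inclusion for the affine image $\bfZ$ of $\bfU$. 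The mildly delicate point, worth stating cleanly in the write-up, is the distinction between ``nonzero and correctly signed'' on $A_0$ versus ``strictly inside $[-1,1]$'' on $I_0$, which is why $\delta_j$ is defined piecewise in (C3).
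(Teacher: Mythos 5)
Your proof is correct and follows exactly the route the paper intends: reduce via Lemma~\ref{lm:signtoZ} to showing $P(\bfZ\in\Omega_{A_0,\bfs_0})\to 1$, and then verify coordinate-wise that the box event in (C3) is contained in $\Omega_{A_0,\bfs_0}$ because (C1) and (C2) place $\bmmu^0$ strictly in its interior. The paper only sketches this in a paragraph (and Figure~\ref{fig:consistent}), so your write-up is simply a careful formalization of the same argument.
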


\begin{figure}[t]
\centering
   \includegraphics[width=0.35\linewidth,angle=-90,trim=1in 2in 1.5in 1in,clip]{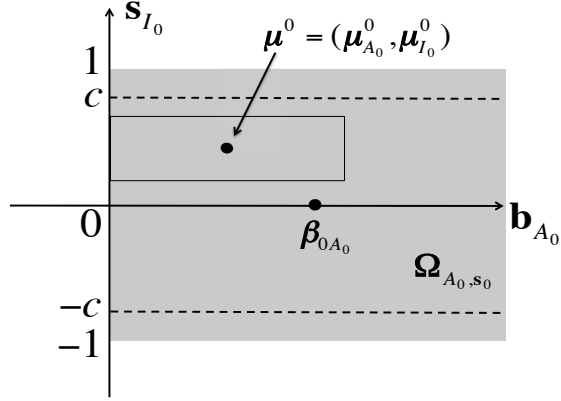} \\
   \caption{Geometric interpretation of the conditions for sign consistency. Shaded area represents $\Omega_{A_0,\bfs_0}$,
   where $\bfs_0=\sgn(\bmbeta_{0A_0})$.
    \label{fig:consistent}}
\end{figure}

Now we may recover some of the conditions for establishing consistency of the Lasso in the literature.
In what follows, let $\bfW=\bfI_p$ in \eqref{eq:muA0} and \eqref{eq:sigmaA0}. Condition (C2) is 
the strong irrepresentable condition \citep{ZhaoYu06,Meinshausen06,Zou06}:
\begin{equation*}
\|\bfC_{I_0A_0}\bfC_{A_0A_0}^{-1}\bfs_{0}\|_{\infty} \leq c \in (0,1).
\end{equation*}
Assume that the minimum eigenvalue of $\bfC_{A_0A_0}$ is bounded from below by $\phi_0>0$, which is equivalent to condition (6) in \cite{ZhaoYu06}. Condition (C1) holds if
\begin{equation}\label{eq:C1}
\frac{\lambda\|\bfC_{A_0A_0}^{-1}\bfs_{0}\|_{\infty}}{\inf_{j \in A_0} |\beta_{0j}|} 
\leq \frac{\lambda \phi_0^{-1} \|\bfs_0\|_2}{\inf_{j \in A_0} |\beta_{0j}|}
=\frac{\phi_0^{-1}\lambda \sqrt{q_0}}{\inf_{j \in A_0} |\beta_{0j}|}\to 0.
\end{equation}
This shows that some version of a beta-min condition is necessary to enforce
a lower bound for $\inf_{j \in A_0} |\beta_{0j}|$. For example, we may assume that
\begin{equation}\label{eq:betaminEA}
\lim_{n\to\infty} n^{a_1}\inf_{j \in A_0} |\beta_{0j}| \geq M_3, 
\end{equation}
for some positive constants $M_3$ and $a_1$, which is the same as (8) in \cite{ZhaoYu06}.
Then one needs to choose $\lambda=o(n^{-a_1}/\sqrt{q_0})\to 0$ for (C1) to hold.

Let $\bfd=(d_1,\ldots,d_p)$ such that $\bfd_{A_0}=\sigma^2\diag(\bfC_{A_0A_0}^{-1})$
and $\bfd_{I_0}=\sigma^2 \diag(\bfC_{I_0\mid A_0})$.
To establish condition (C3), assume that $\bmeps\sim\dnorm_n(\bfzr,\sigma^2 \bfI_n)$ and
$\diag(\bfC)$ is bounded from above.
Then all $d_j$ are bounded and let $d_*<\infty$ be an upper bound of $\{d_1,\ldots,d_p\}$.
Furthermore, $Z_j$ follows a univariate normal distribution:
For $j\in A_0$, $Z_j \sim \dnorm(\mu^0_j,n^{-1}d_j)$ and for $j\in I_0$,
$Z_j \sim \dnorm(\mu^0_j,n^{-1}\lambda^{-2} d_j)$ according to \eqref{eq:sigmaA0} with $\bfW=\bfI_p$. 
By \eqref{eq:C1} and \eqref{eq:betaminEA},
$\delta_j=|\mu^0_j|\geq \frac{1}{2}M_3 n ^{-a_1}$ for $j\in A_0$ as $n\to \infty$, and
\begin{equation}\label{eq:C3A0}
P\left(\sup_{j \in A_0}|Z_j - \mu^0_j| \geq \frac{1}{2}M_3 n ^{-a_1} \right)
\leq 2 q_0 \exp\left(-\frac{M_3^2 n^{1-2a_1}}{8 d_*}  \right) \to 0, 
\end{equation}
as long as $a_1<1/2$ and $q_0<n$. Since $\delta_j=1-c$ for all $j\in I_0$,
\begin{equation}\label{eq:C3I0}
P\left(\sup_{j \in I_0}|Z_j - \mu^0_j| \geq 1-c \right) 
\leq  2 \exp\left(-\frac{n \lambda^2(1-c)^2}{2 d_*}+ \log p\right)\to 0,
\end{equation}
if $(\log p)/(n \lambda^2) \to 0$ and $n\lambda^2 \to \infty$. Clearly, the above two inequalities imply \eqref{eq:C3}.
Therefore, $\lambda$ must satisfy
$\sqrt{(\log p)/n} \ll \lambda = o(n^{-a_1}/\sqrt{q_0})$, which implies that $\sqrt{q_0 (\log p)/n} =o(n^{-a_1})$. 
This is consistent with the beta-min condition in \cite{Meinshausen06}: $\inf_{j \in A_0} |\beta_{0j}|\gg \sqrt{q_0 (\log p)/n}$. 
In summary, choosing $a_1,a_2,a_3>0$ such that $a_2+a_3< 1-2a_1$, the Lasso can be consistent for model selection 
with $q_0 = O(n^{a_2})$ and $p=O(\exp(n^{a_3}))$, both diverging with $n$.
For more general scaling of $(n, p, q_0)$, see the work by \cite{Wainwright09}.

\begin{remark}
The term $\log p$ in \eqref{eq:C3I0} can be replaced by $\log(p-q_0)$, which will improve the bound
if $q_0/p$ does not vanish as $n\to\infty$. Moreover, both inequalities \eqref{eq:C3A0} and \eqref{eq:C3I0} are applicable
to sub-Gaussian noise. 
\end{remark}

\subsection{Bayesian interpretation}\label{sec:Bayesian}

It is well-known that the Lasso can be interpreted as the mode of the posterior distribution of $\bmbeta$
under a Laplace prior. However, the posterior distribution itself is continuous on $\R^p$. 
If we draw $\bmbeta$ from this posterior distribution, every component of $\bmbeta$ will be nonzero with
probability one. In this sense, sampling from this posterior distribution does not provide a direct solution
to model selection, which seems unsatisfactory from a Bayesian perspective. Here,
we discuss a different Bayesian interpretation of the Lasso-type estimator $\hbmbeta$ 
from a sampling distribution point of view. 

Assume that $\rank(\bfX)=p< n$ and thus $\bfC$ is invertible.
Under the noninformative prior $p(\bmbeta,\sigma^2) \propto 1/\sigma^2$ and
the assumption that $\bmeps\sim \dnorm_n (\bfzr, \sigma^2 \bfI_n)$, the conditional and marginal 
posterior distributions of $\bmbeta$ are 
\begin{eqnarray}
\bmbeta \mid \sigma^2, \bfY & \sim & \dnorm_p(\hbmbeta^{\text{OLS}},n^{-1}\sigma^2\bfC^{-1}), \label{eq:condposterior}\\
\bmbeta \mid \bfY & \sim & t_{n-p}(\hbmbeta^{\text{OLS}},n^{-1}\hat{\sigma}^2\bfC^{-1}), \label{eq:marginalposterior}
\end{eqnarray}
where $\hat{\sigma}^2$ is given by \eqref{eq:estsigma2}
with $\ckbmbeta=\hbmbeta^{\text{OLS}}$ and $t_{n-p}(\bmmu,\bmSigma)$ is the multivariate $t$ distribution 
with $(n-p)$ degrees of freedom, location $\bmmu$, and scale matrix $\bmSigma$.

Following the decision theory framework, let 
$\bmeta\in\R^p$ be a decision regarding $\bmbeta$ that incurs the loss 
\begin{equation}\label{eq:Bayesloss}
\ell_{B}(\bmeta,\bmbeta)= 
\frac{1}{2}(\bmeta-\bmbeta)^{\trans} \bfC (\bmeta-\bmbeta) + \lambda \| \bfW \bmeta\|_1.
\end{equation}
Since the covariance of $\bmbeta$ is proportional to $\bfC^{-1}$ with respect to 
the posterior distribution \eqref{eq:condposterior} or \eqref{eq:marginalposterior},
$\ell_{B}(\bmeta,\bmbeta)$ is essentially the squared Mahalanobis distance between $\bmeta$
and $\bmbeta$, plus a weighted $\ell_1$ norm of $\bmeta$ to encourage sparsity. 
Denote by $\tdbmbeta$ the optimal decision that minimizes the 
loss $\ell_B$ for a given $\bmbeta$, i.e.,
$\tdbmbeta=\arg\min_{\bmeta} \ell_{B}(\bmeta,\bmbeta)$.
Let $\tilde{\bfS}$ be the subgradient of $\|\bmeta\|_1$ at $\tdbmbeta$. 
The KKT condition for $\tdbmbeta$ is
\begin{equation}\label{eq:Bayesiangrad}
\bfC \tdbmbeta + \lambda \bfW \tilde{\bfS} =\bfC\bmbeta.
\end{equation}
Since $\bmbeta$ is a random vector in Bayesian inference, the distribution of $\bmbeta$
determines the joint distribution of $\tdbmbeta$ and $\tilde{\bfS}$ via the above KKT condition.
Represent $(\tdbmbeta,\tilde{\bfS})$ by its equivalent form $(\tdbmbeta_{\tdA},\tilde{\bfS}_{\tdI},\tdA)$ in the same way as for $(\hbmbeta,\bfS)$ in Section~\ref{sec:distribution}.

The conditional posterior distribution \eqref{eq:condposterior} implies that 
$\bfC\bmbeta \mid \sigma^2, \bfY\sim \dnorm_p(\bfC\hbmbeta^{\text{OLS}}, n^{-1}{\sigma^2}\bfC)$.
Thus, conditional on $\bfY$ and $\sigma^2$, Equation \eqref{eq:Bayesiangrad} implies that
\begin{equation}\label{eq:Bayesiangraddistr}
\bfC \tdbmbeta + \lambda \bfW \tilde{\bfS} - \bfC\hbmbeta^{\text{OLS}} \eqinL \bfU,
\end{equation}
where $\bfU\sim\dnorm_p(\bfzr,n^{-1}{\sigma^2}\bfC)$. One sees that \eqref{eq:Bayesiangraddistr}
is identical to the KKT condition \eqref{eq:lassoKKTinU} with $\hbmbeta^{\text{OLS}}$ in place of $\bmbeta$. 
Therefore, the conditional distribution $[\tdbmbeta_{\tdA},\tilde{\bfS}_{\tdI},\tdA\mid \sigma^2, \bfY]$, 
determined by \eqref{eq:Bayesiangraddistr},
is identical to the estimated sampling distribution $\hat{\pi}$ \eqref{eq:estjointdensity} under a normal error
distribution with $\bmbeta$ estimated by $\hbmbeta^{\text{OLS}}$, i.e., $\ckbmbeta=\hbmbeta^{\text{OLS}}$. 
Furthermore, $\bfC\bmbeta \mid \bfY\sim t_{n-p}(\bfC\hbmbeta^{\text{OLS}}, n^{-1}\hat{\sigma}^2\bfC)$ due to
\eqref{eq:marginalposterior}. By a similar reasoning, the conditional distribution
$[\tdbmbeta_{\tdA},\tilde{\bfS}_{\tdI},\tdA\mid \bfY]$ is the same as $\hat{\pi}$ if $\ckbmbeta=\hbmbeta^{\text{OLS}}$ 
and if $f_{\bfU}$ is estimated by the density of $t_{n-p}(\bfzr, n^{-1}\hat{\sigma}^2\bfC)$. 
This motivates our proposal to use $t_{n-p}(\bfzr, n^{-1}\sigma^2\bfC)$ 
as a parametric model for ${\bfU}$ and estimate $\sigma^2$ from data 
to construct $\hat{f}_{\bfU}$. The above discussion also provides a Bayesian justification for sampling from $\hat{\pi}$.

Under this framework, we may define
a point estimator $\hbmbeta^{\supP}=(\hbeta^{\supP}_j)_{1:p}$ by the decision that 
minimizes the posterior expectation of the loss $\ell_{B}(\bmeta,\bmbeta)$,
\begin{equation}\label{eq:defBest}
\hbmbeta^{\supP}
\defi \arg\min_{\bmeta}\int \ell_{B}(\bmeta,\bmbeta) p(\bmbeta \mid \bfY) d\bmbeta,
\end{equation}
provided that the expectation exists. Although $\hbmbeta^{\supP}$ minimizes the posterior expected loss, its Bayes risk is not well-defined due to our use of an improper prior. To avoid any potential confusion, we call $\hbmbeta^{\supP}$ a posterior point estimator instead of a Bayes estimator. Taking subderivative of $\ell_{B}(\bmeta,\bmbeta)$ with respect to $\bmeta$ leads to the following equation to solve for the minimizer $\hbmbeta^{\supP}$:
\begin{equation}\label{eq:KKTBest}
\bfC \hbmbeta^{\supP} + \lambda \bfW \bfS^{\supP} 
= \int  \bfC\bmbeta \cdot p(\bmbeta \mid \bfY) d\bmbeta=\E(\bfC\bmbeta \mid \bfY),
\end{equation}
where $\bfS^{\supP}$ is the subgradient of $\|\bmeta\|_1$ at $\hbmbeta^{\supP}$. Under the noninformative prior,
the posterior mean $\E(\bmbeta \mid \bfY)=\hbmbeta^{\text{OLS}}$. In this case,
$\E(\bfC\bmbeta \mid \bfY)=n^{-1}\bfX^{\trans}\bfY$ and Equation \eqref{eq:KKTBest}
is identical to the KKT condition \eqref{eq:lassograd} for the Lasso-type estimator $\hbmbeta$. 
Therefore, $\hbmbeta$ can be interpreted as the estimator \eqref{eq:defBest} that minimizes the posterior expected loss.

\begin{remark}\label{rm:Bayes}
These results provide a Bayesian interpretation of the Lasso-type estimator $\hbmbeta$ and its sampling distribution. 
Assume a normal error distribution with a given $\sigma^2$ and the noninformative prior.
The posterior distribution of the optimal decision, $[\tdbmbeta\mid \bfY]$, is identical to the sampling distribution of $\hbmbeta$
assuming $\hbmbeta^{\text{OLS}}$ is the true coefficient vector. Therefore, a posterior probability interval for $\tdbmbeta$, the optimal decision, constructed according to $[\tdbmbeta\mid \bfY]$ is the same as the confidence interval 
constructed according to $\hat{\pi}$ with
$\ckbmbeta=\hbmbeta^{\text{OLS}}$. Point estimation about $\bmbeta$ also coincides between the Bayesian
and the penalized least-squares methods ($\hbmbeta^{\supP}=\hbmbeta$). 
Lastly, if we set $\lambda=0$ in the loss \eqref{eq:Bayesloss}, then the optimal decision $\tdbmbeta$ is simply $\bmbeta$.
In this special case, the aforementioned coincidences become the familiar correspondence between
the posterior distribution \eqref{eq:condposterior} and the sampling distribution of $\hbmbeta^{\text{OLS}}$ 
and that between the posterior mean and $\hbmbeta^{\text{OLS}}$.
\end{remark}

It is worth mentioning that, in a loose sense, this Bayesian interpretation also applies when $p>n$.
In this case, the posterior distribution \eqref{eq:condposterior} does not exist, but $[\bfC\bmbeta \mid\sigma^2, \bfY]$
is a well-defined normal distribution in $\row(\bfX)$. From KKT conditions \eqref{eq:Bayesiangrad} and \eqref{eq:KKTBest},
we see that the posterior point estimator $\hbmbeta^{\supP}$ and the posterior distribution $[\tdbmbeta\mid \bfY]$
only depend on $\bfC\bmbeta$. Therefore, they are well-defined and have the same coincidence
with the Lasso-type estimator and its sampling distribution.

\subsection{Bootstrap versus Monte Carlo}\label{sec:limitation}

We have demonstrated that Monte Carlo sampling via estimator augmentation has substantial advantages in approximating tail probabilities and conditional distributions, say $[\hbmbeta_A \mid \calA=A]$, over direct sampling (or  bootstrap). The MH Lasso sampler also showed some improvement in efficiency when compared against direct sampling in the low-dimensional setting. Now we discuss some limitations of estimator augmentation relative to bootstrap.

The joint density of the augmented estimator is derived for a given $\lambda$, and thus does not take into account the randomness in $\lambda$ when it is chosen via a data-dependent way, say via cross-validation. Denote by $\hbmbeta(\bfy,\hat{\lambda}(\bfY))$ the Lasso-type estimator when $\lambda=\hat{\lambda}(\bfY)$, where $\hat{\lambda}(\bfY)$ is estimated from the data $\bfY$. We stress that the density in Theorem~\ref{thm:joint} or Theorem~\ref{thm:jointhigh} does not apply to the sampling distribution of $\hbmbeta(\bfY,\hat{\lambda}(\bfY))$ and it is only valid for $\hbmbeta(\bfY,\lambda)$ with $\lambda$ being fixed during the repeated sampling of $\bfY$. However, the direct sampler (or bootstrap in a similar way) can handle data-dependent $\lambda$ by adding one additional step to determine $\hat{\lambda}(\bfY^{(t)})$ after each draw of $\bfY^{(t)}$ in Routine~\ref{rt:directsample}.

Bootstrap and the direct sampler can be parallelized. The importance sampling algorithm (Routine~\ref{rt:IS}) can easily be parallelized as well, since it uses the direct sampler to generate proposals and calculates importance weights independently for each sample. An MCMC algorithm needs a certain number of burn-in iterations before the Markov chain reaches its stationary distribution. It seems that naively running multiple short chains in parallel may impair the overall efficiency due to the computational waist of multiple burn-in iterations. Initialized with one draw from the direct sampler, a Markov chain simulated by Routine~\ref{rt:dirmcmc}, however, reaches its equilibrium at the first iteration and thus is suitable for parallel computing. Its efficiency relative to direct sampling when both are parallelized can be calculated as follows.

Suppose our goal is to estimate $\E_{\pi}[g(\hbmbeta)]$ and assume that $\Var_{\pi}[g(\hbmbeta)]=1$ without loss of generality. Assume that the time to run one iteration of the direct sampler allows for running $m$ iterations of an MCMC algorithm. Suppose that we have access to $K$ computing nodes and the available computing time from each node allows for the simulation of $(1+N_1)$ samples from the direct sampler, where $N_1$ may be small. Thus, on a single node we can run $N_2=mN_1$ MCMC iterations plus an initial draw from the direct sampler in the same amount of time. In other words, we can run Routine~\ref{rt:dirmcmc} for $1+N_2$ iterations to draw $\bmbeta^{(t)}$ for $t=1,\ldots,1+N_2$. Note that this Markov chain reaches equilibrium from $t=1$. Let $\rho_t=\cor(g(\bmbeta^{(1)}),g(\bmbeta^{(t+1)}))$ and 
\begin{equation*}
\psi(N)=1+2 \sum_{t=1}^{N-1} \left(1-\frac{t}{N}\right)\rho_t
\end{equation*}
for an integer $N\geq 1$. Then we have
\begin{equation*}
\Var\left[\frac{1}{N_2+1} \sum_{t=1}^{N_2+1}g(\bmbeta^{(t)})\right]=\frac{1}{N_2+1}\psi(N_2+1)\defi V_2(N_2+1).
\end{equation*}
Denote by $V_1(N)=1/N$ the variance in estimating $g$ by the mean of an i.i.d. sample of size $N$, and let 
\begin{equation*}
\gamma=\lim_{N\to\infty}\frac{V_1(N)}{V_2(mN)}= \frac{m}{\psi(\infty)}.
\end{equation*}
The efficiency of Routine~\ref{rt:dirmcmc} relative to direct sampling is
\begin{align*}
\frac{V_1(N_1+1)}{V_2(N_2+1)} &=\frac{mN_1+1}{N_1+1}\frac{1}{\psi(N_2+1)} \\
& >\frac{N_1}{N_1+1}\frac{m}{\psi(N_2+1)} \geq \frac{N_1}{N_1+1}\gamma,
\end{align*}
where we have assumed that $\psi(N_2+1)\leq \psi(\infty)$ for the last inequality. This assumption holds if $\psi(N)$ is nondecreasing in $N$. This derivation shows that Routine~\ref{rt:dirmcmc} will be more efficient than the direct sampler on each computing node if $N_1\geq 1/(\gamma-1)$, which can be as small as 1 when $\gamma>2$.
We have observed two decay patterns of the autocorrelation $\rho_t$ of the MLS in the simulation study in Section~\ref{sec:numerical}. For some components of $\hbmbeta$, $\rho_t$ is always positive before it decays to zero, in which case $\psi(N)$ is obviously nondecreasing. For other components, $\rho_t$ first decreases monotonely to zero and then shows small fluctuations around zero. In the second case, we empirically observed that $\psi(N)$ is nondecreasing as well. The efficiency comparison in Table~\ref{tab:MSEjoint}, with $N_1$ and $N_2=mN_1$ both large, suggests that for most functions estimated there, $\gamma\in(2,3)$  for datasets A and B and $\gamma\in (1.2,1.6)$ for the other two datasets. Therefore, as long as we need to run a few iterations of the direct sampler on each node, parallelizing Routine~\ref{rt:dirmcmc} can bring computational gain. Of course, if the number of computing nodes $K$ is so large that only one draw is needed from each node, direct sampling or bootstrap will be a better choice.

\subsection{Concluding remarks}\label{sec:discuss}

Utilizing the density of an augmented estimator, this article develops
MCMC and IS methods to approximate sampling distributions
in $\ell_1$-penalized linear regression. This approach is clearly different from
existing methods based on resampling or asymptotic approximation. The numerical results have already
demonstrated the substantial gain in efficiency and the great flexibility offered by this approach. 
These results are mostly for a proof of principle, 
and there is room for further development of more efficient Monte Carlo algorithms based on
the densities derived in this article. 

In principle, the idea of estimator augmentation can be applied to the use of concave penalties in linear regression
\citep{FrankFriedman93,FanLi01,Friedman08,Zhang10} for studying the sampling distribution.
However, there are at least two additional technical 
difficulties for the high-dimensional setting. First, we need to find conditions for the uniqueness of a concave-penalized
estimator in order to construct a bijection between $\bfU$ and the augmented estimator. Second, the constraint in
\eqref{eq:spacehigh} will become nonlinear in general, even for a fixed $\bfs_A$, when a concave penalty is used,
which means that the sample space is composed of a finite number of manifolds.  
Another future direction is to investigate theoretically and empirically the finite-sample
performance in variable selection by the Lasso sampler which may take into account the uncertainty in parameter estimation in a coherent way.

\section{Proofs}\label{sec:proofs}

Let $n\geq 1$ and $p\geq 2$ throughout this section.

\subsection{Proof of Theorem~\ref{thm:threshold}}

\begin{lemma}\label{lm:truncated}
Let $\bfZ\in\R^p$ be a random vector, $K \in \scrR^p$, and $\bfZ_K$ be the truncation of $\bfZ$ to $K$ such that $P(\bfZ_K\in B)=P(\bfZ\in B \mid \bfZ \in K)$ for $B \in \scrR^p$. If $P(\bfZ \in K)\geq 1-\alpha>0$, then 
\begin{align*}
\sup_{B\in \scrR^p}|P(\bfZ_K\in B)-P(\bfZ \in B)|\leq \alpha.
\end{align*}
\end{lemma}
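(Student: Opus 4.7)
The proof is a short measure-theoretic computation; there is no real obstacle, only bookkeeping. The plan is to expand the definition of the truncated distribution, split $\{\bfZ\in B\}$ according to whether $\bfZ$ lies in $K$ or its complement, and then bound each piece separately.

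First I would write $\gamma=P(\bfZ\in K)\in[1-\alpha,1]$ and, for any $B\in\scrR^p$, decompose
\begin{equation*}
P(\bfZ\in B) = a+b, \qquad a\defi P(\bfZ\in B\cap K),\quad b\defi P(\bfZ\in B\cap K^c),
\end{equation*}
so $0\le a\le \gamma$ and $0\le b\le 1-\gamma\le \alpha$. By the definition of the truncated variable, $P(\bfZ_K\in B)=a/\gamma$. Hence
\begin{equation*}
P(\bfZ_K\in B)-P(\bfZ\in B) \;=\; \frac{a}{\gamma}-a-b \;=\; \frac{a(1-\gamma)}{\gamma}-b.
\end{equation*}

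Next I would bound this expression above and below. The upper bound follows from $a\le \gamma$, which yields $a(1-\gamma)/\gamma\le 1-\gamma\le\alpha$, and since $b\ge 0$ we get $P(\bfZ_K\in B)-P(\bfZ\in B)\le \alpha$. For the lower bound, drop the nonnegative term $a(1-\gamma)/\gamma$ and use $b\le 1-\gamma\le\alpha$ to conclude $P(\bfZ_K\in B)-P(\bfZ\in B)\ge -\alpha$. Combining the two bounds gives $|P(\bfZ_K\in B)-P(\bfZ\in B)|\le\alpha$, and since $B$ was arbitrary we may take the supremum over $\scrR^p$. The only step requiring any care is making sure the normalizing constant $\gamma$ is strictly positive so that $\bfZ_K$ is well defined, which is guaranteed by the hypothesis $1-\alpha>0$.
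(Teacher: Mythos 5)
Your proof is correct and is essentially the same argument as the paper's: writing $a=P(\bfZ\in B\cap K)$ and $b=P(\bfZ\in B\cap K^c)$, your two bounds $a(1-\gamma)/\gamma\in[0,\alpha]$ and $b\in[0,\alpha]$ are exactly the paper's two inequalities $0\le P(\bfZ_K\in B)-P(\bfZ\in B\cap K)\le\alpha$ and $0\le P(\bfZ\in B)-P(\bfZ\in B\cap K)\le\alpha$, combined in the same way. No issues.
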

\begin{proof}
For any $B\in\scrR^p$, $P(\bfZ \in B \cap K) = P(\bfZ_K\in B) P(\bfZ\in K)$ and thus
\begin{align*}
0 \leq P(\bfZ_K\in B)- P(\bfZ \in B \cap K) = P(\bfZ_K\in B) P(\bfZ\in K^c) \leq \alpha.
\end{align*}
On the other hand,
\begin{align*}
0 \leq P(\bfZ\in B)- P(\bfZ \in B \cap K) =P(\bfZ\in B \cap K^c) \leq \alpha.
\end{align*}
Therefore, $|P(\bfZ_K\in B)-P(\bfZ \in B)|\leq \alpha$ for any $B$ and the conclusion follows.
\end{proof}

\begin{lemma}\label{lm:Veq}
Assume that $\ckbmbeta$ satisfies conditions (i) and (ii) in Lemma~\ref{lm:fixknownvar}, and let \eqref{eq:assbeta0j} be satisfied.  
Then $V(\bmdelta; \ckbmbeta, \bfu)=V(\bmdelta; \bmbeta_0, \bfu)$ for any $\bfu \in \R^p$ if $\|\bmdelta\|_{\infty}\leq M_1$.
\end{lemma}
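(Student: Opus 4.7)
The plan is to show that $V(\bmdelta; \ckbmbeta, \bfu)$ and $V(\bmdelta; \bmbeta_0, \bfu)$ differ only in the penalty contributions from indices $j \in A_0$, and that under the hypotheses those contributions are actually equal for every $\bmdelta$ with $\|\bmdelta\|_\infty \leq M_1$.

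First, I would reinterpret the definition \eqref{eq:defVn} as a general functional: for any argument $\bmbeta$ in the second slot, $V(\bmdelta;\bmbeta,\bfu)$ has the same quadratic and linear-in-$\bfu$ parts, and the penalty part is $n\lambda \sum_j w_j(|\beta_j+r_n^{-1}\delta_j|-|\beta_j|)$. By condition (i), $\ckbeta_j=0$ for $j\notin A_0$, so for those indices both $V(\bmdelta;\ckbmbeta,\bfu)$ and $V(\bmdelta;\bmbeta_0,\bfu)$ contribute the identical term $(n\lambda/r_n)w_j|\delta_j|$. Hence the difference reduces to the $A_0$-sum
\begin{equation*}
n\lambda\sum_{j\in A_0} w_j\Bigl[\bigl(|\ckbeta_j+r_n^{-1}\delta_j|-|\ckbeta_j|\bigr)-\bigl(|\beta_{0j}+r_n^{-1}\delta_j|-|\beta_{0j}|\bigr)\Bigr],
\end{equation*}
and it suffices to show each bracket vanishes.

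Next, condition (ii) gives $|\ckbeta_j-\beta_{0j}|\leq \eta|\beta_{0j}|<|\beta_{0j}|$ for $j\in A_0$, so $\ckbeta_j$ is nonzero and shares the sign of $\beta_{0j}$, and moreover $|\ckbeta_j|\geq(1-\eta)|\beta_{0j}|$. Combining this with the beta-min hypothesis \eqref{eq:assbeta0j}, I get for every $j\in A_0$,
\begin{equation*}
|\beta_{0j}|>\frac{M_1}{r_n(1-\eta)}\geq \frac{M_1}{r_n}\quad\text{and}\quad |\ckbeta_j|\geq (1-\eta)|\beta_{0j}|>\frac{M_1}{r_n}.
\end{equation*}
Since $\|\bmdelta\|_\infty\leq M_1$ implies $|r_n^{-1}\delta_j|\leq M_1/r_n$, adding $r_n^{-1}\delta_j$ to either $\beta_{0j}$ or $\ckbeta_j$ preserves the sign. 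Therefore $|\beta_{0j}+r_n^{-1}\delta_j|-|\beta_{0j}|=\sgn(\beta_{0j})r_n^{-1}\delta_j$ and likewise $|\ckbeta_j+r_n^{-1}\delta_j|-|\ckbeta_j|=\sgn(\ckbeta_j)r_n^{-1}\delta_j$; because $\sgn(\ckbeta_j)=\sgn(\beta_{0j})$, the two expressions coincide. Summing over $j\in A_0$ shows the bracketed quantity is zero, giving $V(\bmdelta;\ckbmbeta,\bfu)=V(\bmdelta;\bmbeta_0,\bfu)$.

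There is no real obstacle here beyond carefully unwinding the definition of $V$ and verifying that the two sign-preservation inequalities both follow from \eqref{eq:assbeta0j}; the mild subtlety is that the factor $1/(1-\eta)$ in the beta-min hypothesis is exactly what is needed to guarantee $|\ckbeta_j|>M_1/r_n$ after the perturbation bound $|\ckbeta_j|\geq(1-\eta)|\beta_{0j}|$ is applied.
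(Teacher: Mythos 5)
Your proof is correct and follows essentially the same route as the paper's: reduce to the per-coordinate identity $|\ckbeta_j+r_n^{-1}\delta_j|-|\ckbeta_j|=|\beta_{0j}+r_n^{-1}\delta_j|-|\beta_{0j}|$ on $A_0$, then use $|\ckbeta_j|\geq(1-\eta)|\beta_{0j}|>M_1/r_n\geq|r_n^{-1}\delta_j|$ together with the beta-min bound on $|\beta_{0j}|$ to show both sides equal $\sgn(\beta_{0j})\delta_j/r_n$. Your explicit observation that the off-support terms coincide because $\ckbeta_j=0=\beta_{0j}$ there, and that condition (ii) forces $\sgn(\ckbeta_j)=\sgn(\beta_{0j})$, are points the paper states without elaboration, but the argument is the same.
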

\begin{proof}
The assumptions on $\ckbmbeta$ imply that $\calA(\ckbmbeta)=A_0$ and $\sgn(\ckbeta_j)=\sgn(\beta_{0j})$ for all $j \in A_0$. By the definition of $V$ \eqref{eq:defVn} it then suffices to show that 
\begin{align}\label{eq:signequal}
| \beta_{0j}+ r_n^{-1}\delta_{j}|- | \beta_{0j}|=| \ckbeta_{j}+ r_n^{-1}\delta_{j}|- | \ckbeta_{j}|
\end{align}
for all $j\in A_0$. By the definition of $\eta$ in \eqref{eq:difftdbeta}, for $j\in A_0$
\begin{align*}
\eta |\beta_{0j}| \geq |\ckbeta_{j}-\beta_{0j}| \geq |\beta_{0j}| - |\ckbeta_{j}|
\end{align*}
and therefore
\begin{align*}
|\ckbeta_{j}| \geq (1-\eta) |\beta_{0j}| > M_1/r_n \geq |r_n^{-1}\delta_j|,
\end{align*}
where we have used \eqref{eq:assbeta0j} and that $\|\bmdelta\|_{\infty}\leq M_1$. Consequently, for $j \in A_0$ we have
\begin{align*}
| \ckbeta_{j}+ r_n^{-1}\delta_{j}|- | \ckbeta_{j}|= \sgn(\ckbeta_j) \delta_j/r_n.
\end{align*}
On the other hand, by \eqref{eq:assbeta0j} and $\eta \in[0,1)$, 
$|\beta_{0j}| > M_1/r_n \geq |r_n^{-1}\delta_j|$ for $j\in A_0$ and thus
\begin{align*}
| \beta_{0j}+ r_n^{-1}\delta_{j}|- | \beta_{0j}|= \sgn(\beta_{0j}) \delta_j/r_n.
\end{align*}
Now \eqref{eq:signequal} follows since $\sgn(\ckbeta_j)=\sgn(\beta_{0j})$ for all $j \in A_0$.
\end{proof}

\begin{proof}[Proof of Lemma~\ref{lm:fixknownvar}]
Define $\tbmdelta=\arg\min_{\bmdelta}V(\bmdelta;\bmbeta_0,\sbfU)$, which follows the same distribution as $\hbmdelta$ \eqref{eq:defdelta}. This is because $\sbfU\eqinL \bfU$ by fixing $\hsigma=\sigma$ and $V(\bmdelta;\bmbeta_0,\bfu)$ has a unique minimizer for any $\bfu$ if the columns of $\bfX$ are in general position (Lemma~\ref{lm:prelim}). Consequently,
\begin{align*}
\nu[\hbmdelta \mid \|\hbmdelta\|_{\infty} < M_1] =\nu[\tbmdelta\mid \|\tbmdelta\|_{\infty} < M_1].
\end{align*}
Let $\calK=\{\bmdelta\in\R^p:\|\bmdelta\|_{\infty}< M_1\}$.
According to Lemma~\ref{lm:Veq}, $V(\bmdelta;\ckbmbeta,\sbfU)=V(\bmdelta;\bmbeta_0,\sbfU)$ for all $\bmdelta\in\calK$.
As the unique minimizer of $V(\bmdelta;\ckbmbeta,\sbfU)$ \eqref{eq:defdeltastar}, $\|\sbmdelta \|_{\infty}< M_1$ implies that $\sbmdelta$ is also a local minimizer of $V(\bmdelta;\bmbeta_0,\sbfU)$. Since $V(\bmdelta;\bmbeta_0,\sbfU)$ is convex in $\bmdelta$ and has only a unique minimizer $\tbmdelta$, we must have $\sbmdelta=\tbmdelta$ and $\|\tbmdelta \|_{\infty}< M_1$. Furthermore, using the same argument in the other direction, one can show that $\|\tbmdelta \|_{\infty}< M_1$ implies $\|\sbmdelta \|_{\infty}< M_1$, and thus $\{\|\tbmdelta \|_{\infty}< M_1\}$ is equivalent to $\{\|\sbmdelta \|_{\infty}< M_1\}$. This completes the proof. 
\end{proof}

\begin{proof}[Proof of Theorem~\ref{thm:threshold}]
Let $E_1$ be the event that $\ckbmbeta$ satisfies conditions (i) and (ii) in Lemma~\ref{lm:fixknownvar} and $E_2=\{\|\hbmdelta\|_{\infty} < M_1\}$. We first show that \eqref{eq:supPrdiff} holds on $E_1$. Obviously, \eqref{eq:assbeta0j} holds because of \eqref{eq:betamin}. The argument in the proof of Lemma~\ref{lm:fixknownvar} implies that, on event $E_1$,
\begin{align}
P(\|\sbmdelta\|_{\infty} < M_1\mid \ckbmbeta) &=P(\|\tbmdelta\|_{\infty} < M_1) \nonumber \\
& = P(E_2)\geq 1-\alpha_1, \label{eq:truncatePr}
\end{align}
where the second equality is due to $\tbmdelta\eqinL \hbmdelta$. Let $\bmdelta^*_{\calK}$ and $\hbmdelta_{\calK}$ be the respective truncations of $\sbmdelta$ and $\hbmdelta$ to $\calK$. Lemma~\ref{lm:fixknownvar} implies $\nu[\bmdelta^*_{\calK} \mid \ckbmbeta]=\nu[\hbmdelta_{\calK}]$ on event $E_1$. A direct consequence is that on $E_1$, $P(\bmdelta^*_{\calK}\in B \mid \ckbmbeta)=P(\hbmdelta_{\calK} \in B)$ for any $B\in\scrR^p$ and therefore,
\begin{align*}
\sup_{B\in \scrR^p} |P(\sbmdelta \in B \mid \ckbmbeta) - P(\hbmdelta \in B)| \leq 2\alpha_1
\end{align*}
by Lemma~\ref{lm:truncated} and \eqref{eq:truncatePr}.

Next we find a lower bound for $P(E_1)$. Since $\inf_{A_0}|\beta_{0j}|> 2\tau$ \eqref{eq:betamin}, on event $\calE$, we have $\calA(\ckbmbeta)=A_0$, according to Lemma~\ref{lm:screen}. By construction $\ckbmbeta_{A_0}=\hbmbeta_{A_0}$ \eqref{eq:thresholddef} and consequently
\begin{equation}\label{eq:etaupper}
\eta =  \sup_{j \in A_0} \frac{|\hbeta_{j}-\beta_{0j}|}{| \beta_{0j}|} \leq \frac{\|\hbmbeta-\bmbeta_{0}\|_2}{\inf_{A_0}| \beta_{0j}|} < \frac{1}{2}
\end{equation}
again on $\calE$. Thus, $P(E_1)\geq P(\calE) \geq 1-2p^{1-u}$.
\end{proof}

\subsection{Proof of Theorem~\ref{thm:thresholdsigmaest}}

\begin{lemma}\label{lm:devVmin}
Let $\bmgamma$ be any minimizer of $V(\bmdelta;\ckbmbeta,\bfu)$ for $\ckbmbeta\in\R^p$ and $\bfu \in \R^p$. For any $\bmDelta \in \R^p$, we have
\begin{align}\label{eq:bounddevVmin}
V(\bmgamma+\bmDelta;\ckbmbeta,\bfu) \geq V(\bmgamma;\ckbmbeta,\bfu) + \frac{n}{2r_n^2} \bmDelta^{\trans} \bfC \bmDelta.
\end{align}
\end{lemma}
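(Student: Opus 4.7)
The plan is to decompose $V$ into a smooth quadratic piece and a convex nonsmooth remainder, then combine an exact second-order expansion for the former with a subgradient inequality for the latter at the minimizer $\bmgamma$.

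Specifically, I would write
\begin{equation*}
V(\bmdelta;\ckbmbeta,\bfu)=Q(\bmdelta)+L(\bmdelta),\qquad Q(\bmdelta)=\frac{n}{2r_n^2}\bmdelta^{\trans}\bfC\bmdelta,
\end{equation*}
where $L(\bmdelta)$ collects the linear term $-\frac{n}{r_n}\bmdelta^{\trans}\bfu$ and the two weighted $\ell_1$-type sums. The function $L$ is convex in $\bmdelta$ (each summand is either linear or a composition of an affine map with $|\cdot|$), while $Q$ is quadratic with constant Hessian $\frac{n}{r_n^2}\bfC$.

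For $Q$, the exact identity
\begin{equation*}
Q(\bmgamma+\bmDelta)=Q(\bmgamma)+\frac{n}{r_n^2}(\bfC\bmgamma)^{\trans}\bmDelta+\frac{n}{2r_n^2}\bmDelta^{\trans}\bfC\bmDelta
\end{equation*}
holds for every $\bmDelta$. For $L$, the subgradient inequality gives $L(\bmgamma+\bmDelta)\geq L(\bmgamma)+\bfg^{\trans}\bmDelta$ for any $\bfg\in\partial L(\bmgamma)$. Since $\bmgamma$ is a minimizer of $V=Q+L$ and $Q$ is differentiable, the first-order optimality condition $\nabla Q(\bmgamma)+\partial L(\bmgamma)\ni 0$ lets me pick $\bfg=-\nabla Q(\bmgamma)=-\frac{n}{r_n^2}\bfC\bmgamma$. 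Adding the two inequalities, the cross terms $\pm\frac{n}{r_n^2}(\bfC\bmgamma)^{\trans}\bmDelta$ cancel and I am left with
\begin{equation*}
V(\bmgamma+\bmDelta;\ckbmbeta,\bfu)\geq V(\bmgamma;\ckbmbeta,\bfu)+\frac{n}{2r_n^2}\bmDelta^{\trans}\bfC\bmDelta,
\end{equation*}
which is \eqref{eq:bounddevVmin}.

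There is no real obstacle here, as this is the standard strong-convexity argument that appears whenever one decomposes a penalized least-squares objective into a quadratic part and a convex nonsmooth part. The only point requiring a little care is justifying the first-order optimality condition, i.e., verifying that $\partial(Q+L)(\bmgamma)=\nabla Q(\bmgamma)+\partial L(\bmgamma)$; this follows from the standard sum rule for subdifferentials since $Q$ is everywhere differentiable.
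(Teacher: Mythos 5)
Your proposal is correct and is essentially the paper's argument: the paper likewise expands the quadratic term exactly, applies the subgradient inequality $|b_j+r_n^{-1}\Delta_j|-|b_j|\geq s_j r_n^{-1}\Delta_j$ to the penalty, and uses the KKT condition $\bfC(\bfb-\ckbmbeta)+\lambda\bfW\bfs-\bfu=\bfzr$ (which is exactly your choice $\bfg=-\nabla Q(\bmgamma)\in\partial L(\bmgamma)$) to cancel the first-order terms. The only difference is that you phrase it in abstract subdifferential-calculus language while the paper writes the same cancellation out componentwise.
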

\begin{proof}
Let $\bfb=r_n^{-1}\bmgamma+\ckbmbeta=(b_j)_{1:p}$ and $\bmDelta=(\Delta_j)_{1:p}$. Direct calculations give
\begin{align*}
& \quad V(\bmgamma+\bmDelta;\ckbmbeta,\bfu)- V(\bmgamma;\ckbmbeta,\bfu)  \\
 & = \frac{n}{2r_n^2} \bmDelta^{\trans} \bfC (\bmDelta + 2 \bmgamma)
- \frac{n}{r_n}\bmDelta^{\trans} \bfu 
+ n\lambda \sum_{j=1}^p w_j (| b_j+ r_n^{-1}\Delta_{j}|-|b_j|). 
\end{align*}
The KKT condition for $\bmgamma$ to minimize $V(\bmdelta;\ckbmbeta,\bfu)$ is
\begin{align}\label{eq:KKTminV}
\bfC(\bfb-\ckbmbeta) + \lambda \bfW \bfs -\bfu =\bfzr,
\end{align}
where $\bfs=(s_j)_{1:p}$ is the subgradient of $\|\bmbeta\|_1$ at $\bfb$. By the definition of a subgradient,\begin{align}\label{eq:subgradineq}
| b_j+ r_n^{-1}\Delta_{j}|-|b_j| \geq s_j r_n^{-1}\Delta_j
\end{align}
for all $j=1,\ldots,p$. Now we have
\begin{align*}
& V(\bmgamma+\bmDelta;\ckbmbeta,\bfu)-V(\bmgamma;\ckbmbeta,\bfu) \\
& \quad \geq \frac{n}{2r_n^2} \bmDelta^{\trans} \bfC \bmDelta + \frac{n}{r_n^2} \bmDelta^{\trans} \bfC \bmgamma
- \frac{n}{r_n}\bmDelta^{\trans} \bfu +\frac{n\lambda}{r_n} \sum_{j=1}^p w_j s_j \Delta_j \\
& \quad = \frac{n}{2r_n^2} \bmDelta^{\trans} \bfC \bmDelta + \frac{n}{r_n} \bmDelta^{\trans} 
\left[\bfC(\bfb-\ckbmbeta) - \bfu + \lambda \bfW \bfs \right] = \frac{n}{2r_n^2} \bmDelta^{\trans} \bfC \bmDelta,
\end{align*}
where we have used \eqref{eq:subgradineq} and \eqref{eq:KKTminV}.
\end{proof}

\begin{lemma}\label{lm:scaleu}
Assume that $|\calA(\ckbmbeta)|\leq q$. Let $c>0$, $\bfu=(u_j)_{1:p}\in\R^p$, and $\bmgamma$ be any minimizer of $V(\bmdelta;\ckbmbeta,\bfu)$. If Assumption RE$(q,3)$ is satisfied and $|u_j|\leq w_j\lambda/2$ for all $j=1,\ldots,p$, then
\begin{align}\label{eq:boundscaleu}
|V(\bmgamma;\ckbmbeta,c\bfu)-V(\bmgamma;\ckbmbeta,\bfu)| 
\leq \frac{8 w_{\max}|1-c|}{w_{\min}\kappa^2(q,3)} n\lambda^2 \sum_{A_0} w_j^2.
\end{align}
\end{lemma}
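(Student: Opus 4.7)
First I would observe that both the quadratic term $\frac{n}{2r_n^2}\bmgamma^{\trans}\bfC\bmgamma$ and the penalty term in $V(\bmdelta;\ckbmbeta,\cdot)$ do not depend on the $\bfu$ argument, so the two values $V(\bmgamma;\ckbmbeta,c\bfu)$ and $V(\bmgamma;\ckbmbeta,\bfu)$ differ only through the linear piece. Explicitly,
\begin{equation*}
V(\bmgamma;\ckbmbeta,c\bfu)-V(\bmgamma;\ckbmbeta,\bfu)=-\frac{n(c-1)}{r_n}\,\bmgamma^{\trans}\bfu,
\end{equation*}
so the task reduces to controlling $|\bmgamma^{\trans}\bfu|$. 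Using the assumption $|u_j|\leq w_j\lambda/2$ coordinatewise gives immediately $|\bmgamma^{\trans}\bfu|\leq (\lambda/2)\sum_{j=1}^p w_j|\gamma_j|$, so what remains is to bound the weighted $\ell_1$-norm $\|\bfW\bmgamma\|_1$.

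Next I would extract a cone-type inequality from the optimality of $\bmgamma$. Let $A=\calA(\ckbmbeta)$, so $|A|\leq q$. Using $V(\bmgamma;\ckbmbeta,\bfu)\leq V(\bfzr;\ckbmbeta,\bfu)=0$, the reverse triangle inequality $|\ckbeta_j+r_n^{-1}\gamma_j|-|\ckbeta_j|\geq -r_n^{-1}|\gamma_j|$ on $j\in A$, and the linear bound $|\bmgamma^{\trans}\bfu|\leq (\lambda/2)\sum_j w_j|\gamma_j|$ on the right-hand side, rearrangement gives
\begin{equation*}
\frac{1}{r_n^2}\bmgamma^{\trans}\bfC\bmgamma+\frac{\lambda}{r_n}\sum_{j\notin A}w_j|\gamma_j|\leq \frac{3\lambda}{r_n}\sum_{j\in A}w_j|\gamma_j|.
\end{equation*}
In particular $\sum_{j\notin A}w_j|\gamma_j|\leq 3\sum_{j\in A}w_j|\gamma_j|$, so $\bmgamma$ satisfies the cone condition in the definition of $\kappa(q,3)$, and Assumption RE$(q,3)$ therefore applies to the pair $(\bmgamma,A)$.

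Now I would combine RE with Cauchy--Schwarz. The RE assumption gives $\bmgamma^{\trans}\bfC\bmgamma\geq \kappa^2(q,3)\|\bmgamma_A\|_2^2$, and $\sum_{j\in A}w_j|\gamma_j|\leq (\sum_{j\in A}w_j^2)^{1/2}\|\bmgamma_A\|_2$. Plugging these into the display above yields $\|\bmgamma_A\|_2\leq 3\lambda r_n\kappa^{-2}(q,3)\bigl(\sum_{j\in A}w_j^2\bigr)^{1/2}$, hence $\sum_{j\in A}w_j|\gamma_j|\leq 3\lambda r_n\kappa^{-2}(q,3)\sum_{j\in A}w_j^2$. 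Because $\sum_{j\notin A}w_j|\gamma_j|\leq 3\sum_{j\in A}w_j|\gamma_j|$, the full weighted $\ell_1$-norm satisfies $\sum_j w_j|\gamma_j|\leq 12\lambda r_n\kappa^{-2}(q,3)\sum_{j\in A}w_j^2$. Putting the pieces together,
\begin{equation*}
|V(\bmgamma;\ckbmbeta,c\bfu)-V(\bmgamma;\ckbmbeta,\bfu)|=\frac{n|c-1|}{r_n}|\bmgamma^{\trans}\bfu|\leq \frac{6n|c-1|\lambda^2}{\kappa^2(q,3)}\sum_{j\in A}w_j^2,
\end{equation*}
which I would then absorb into the stated form using $\sum_{j\in A}w_j^2\leq (w_{\max}/w_{\min})\sum_{j\in A_0}w_j^2$ (or the comparable bookkeeping relation between $\calA(\ckbmbeta)$ and $A_0$ that is natural in the intended application, e.g.\ after thresholding where $\calA(\ckbmbeta)=A_0$).

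The principal technical obstacle is the cone-inequality step: the penalty terms involve $|\ckbeta_j+r_n^{-1}\gamma_j|-|\ckbeta_j|$ on $A$ and $|\gamma_j|$ off $A$, and the reverse triangle inequality must be used with the correct sign so that the $3\!:\!1$ ratio emerges and RE$(q,3)$ applies. Matching the exact constant $8$ and the precise index set $A_0$ in the target bound is then a matter of careful book-keeping between $\calA(\ckbmbeta)$ and $A_0$ via $w_{\min}$ and $w_{\max}$, but the conceptual content is the three-step chain: optimality $\Rightarrow$ cone condition $\Rightarrow$ RE $+$ Cauchy--Schwarz $\Rightarrow$ weighted $\ell_1$ bound.
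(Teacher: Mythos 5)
Your proof is correct, and it takes a genuinely different route from the paper's. The paper's proof is two lines: it writes the difference as $n|1-c|\,|\bfu^{\trans}(\bfb-\ckbmbeta)|$ with $\bfb=r_n^{-1}\bmgamma+\ckbmbeta$, observes from the KKT condition that $\bfb$ is exactly the Lasso estimator when $\ckbmbeta$ plays the role of the truth and $\bfu$ the role of $\bfX^{\trans}\bmeps/n$, and then cites the already-established $\ell_1$ oracle bound \eqref{eq:boundL1} of Lemma~\ref{lm:lassobounds} together with $\|\bfu\|_{\infty}\le w_{\max}\lambda/2$; the factor $8w_{\max}/w_{\min}$ comes precisely from $\|\bfu\|_\infty\|\bfb-\ckbmbeta\|_1$ combined with the $1/w_{\min}$ inside \eqref{eq:boundL1}. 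You instead re-derive the needed weighted $\ell_1$ bound from first principles (basic inequality $V(\bmgamma)\le V(\bfzr)=0$, cone condition, RE, Cauchy--Schwarz), and you pair $\bmgamma$ with $\bfu$ coordinatewise via $|\bmgamma^{\trans}\bfu|\le(\lambda/2)\sum_j w_j|\gamma_j|$ rather than through $\|\bfu\|_\infty\|\cdot\|_1$. This buys a slightly sharper constant ($6$ in place of $8w_{\max}/w_{\min}$, since $w_{\max}\ge w_{\min}$) and makes the lemma self-contained, at the cost of redoing the standard oracle-inequality argument that the paper simply imports. One loose end: your proposed absorbing inequality $\sum_{j\in A}w_j^2\le(w_{\max}/w_{\min})\sum_{j\in A_0}w_j^2$ with $A=\calA(\ckbmbeta)$ is not valid in general (take $|A|\gg|A_0|$), so strictly speaking your bound is over $\calA(\ckbmbeta)$ rather than $A_0$. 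But you correctly flag that the two sets coincide in the intended application (thresholded $\ckbmbeta$ on the event $\calE$), and the paper's own proof silently makes the same identification — applying \eqref{eq:boundL1} with $\ckbmbeta$ as truth really yields a sum over $\calA(\ckbmbeta)$, which the paper writes as $\sum_{A_0}$ — so this is a bookkeeping imprecision shared with the original rather than a defect of your argument.
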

\begin{proof}
Let $\bfb=r_n^{-1}\bmgamma+\ckbmbeta$. Direct calculations give
\begin{align*}
 |V(\bmgamma;\ckbmbeta,c\bfu)-V(\bmgamma;\ckbmbeta,\bfu)| 
& =  n \left|(1-c) \bfu^{\trans} (\bfb-\ckbmbeta) \right| \\
& \leq  n |1-c| \cdot\|\bfu\|_{\infty} \|\bfb-\ckbmbeta \|_1.
\end{align*}
It is seen from \eqref{eq:KKTminV} that $\bfb$ is a minimizer of the loss \eqref{eq:lassoloss} if $\ckbmbeta$ is the true coefficient vector and if $\bfX^{\trans}\bmeps/n=\bfu$. Inequality \eqref{eq:boundL1} in Lemma~\ref{lm:lassobounds} applied under these assumptions leads to
\begin{align*}
\|\bfb-\ckbmbeta \|_1 \leq \frac{16\lambda}{\kappa^2(q,3)} \sum_{A_0} \frac{w_j^2}{w_{\min}}
\end{align*}
if $|u_j|\leq w_j\lambda/2$ for all $j$. Moreover, $\|\bfu\|_{\infty} \leq w_{\max} \lambda/2$ and
hence \eqref{eq:boundscaleu} follows.
\end{proof}

\begin{proof}[Proof of Lemma~\ref{lm:fixunknownvar}]
To simplify notation, let $\hat{c}=\sigma/\hsigma$ and
\begin{equation*}
h=\frac{8 w_{\max}\zeta}{w_{\min}\kappa^2(q,3)} n\lambda^2 \sum_{A_0} w_j^2.
\end{equation*}
If $|U_j^*| \leq (1-\zeta)w_j\lambda/2$, then $|U_j^*| \leq w_j\lambda/2$ and by \eqref{eq:diffhsigma} $|\hat{c}U_j^*| \leq w_j\lambda/2$. Lemma~\ref{lm:scaleu} with \eqref{eq:diffhsigma} implies
\begin{align*}
|V(\sbmdelta;\ckbmbeta,\hat{c}\sbfU)-V(\sbmdelta;\ckbmbeta,\sbfU)| &\leq h,  \\
|V(\sbmdelta_0;\ckbmbeta,\sbfU)-V(\sbmdelta_0;\ckbmbeta,\hat{c}\sbfU)| &\leq h. 
\end{align*}
Let $\bmDelta=\sbmdelta-\sbmdelta_0$. Now we have
\begin{align*}
V(\sbmdelta;\ckbmbeta,\sbfU) & \geq V(\sbmdelta;\ckbmbeta,\hat{c}\sbfU) - h \\
& \geq V(\sbmdelta_0;\ckbmbeta,\hat{c}\sbfU) + \frac{n}{2r_n^2} \bmDelta^{\trans} \bfC \bmDelta -h \\
& \geq V(\sbmdelta_0;\ckbmbeta,\sbfU)+ \frac{n}{2r_n^2} \bmDelta^{\trans} \bfC \bmDelta -2h,
\end{align*}
where the second inequality is due to Lemma~\ref{lm:devVmin}. Lastly, since $V(\sbmdelta_0;\ckbmbeta,\sbfU)\geq V(\sbmdelta;\ckbmbeta,\sbfU)$ by definition \eqref{eq:defdeltastar}, $\bmDelta^{\trans} \bfC \bmDelta \leq 4 r_n^2 h/n$ which coincides with \eqref{eq:boundquadratic}.
\end{proof}

\begin{proof}[Proof of Theorem~\ref{thm:thresholdsigmaest}]
Recall that $\calE$ is the event $\cap_{j=1}^p\{|U_j| \leq w_j\lambda/2\}$. Since the distribution of $\sbmdelta_0$ does not depend on $\hsigma$ and is identical to the distribution of $\sbmdelta$ when $\hsigma$ is fixed to the true noise level $\sigma$, \eqref{eq:supPrdiffdelta0} follows immediately from \eqref{eq:supPrdiff} which holds on $\calE$.

Let $E_3$ be the event in \eqref{eq:diffhsigma} and $\calE^*$ be the event that $\cap_{j=1}^p\{|U_j^*| \leq (1-\zeta)w_j\lambda/2\}$. 
By Lemma~\ref{lm:screen}, on $\calE$ we have $\calA(\ckbmbeta)=A_0$, $|\calA(\ckbmbeta)|=q_0\leq q$ and by \eqref{eq:etaupper}
\begin{equation}\label{eq:bmbetacheck}
\inf_{A_0} |\ckbeta_j| \geq \frac{1}{2} \inf_{A_0} |\beta_{0j}| > \tau. 
\end{equation}
Therefore, all the assumptions on $\ckbmbeta$ and $\hsigma$ in Lemma~\ref{lm:fixunknownvar} are satisfied on $\calE \cap E_3$, which happens with probability at least $1-(\alpha_2+2p^{1-u})$. Moreover, the conditional probability of \eqref{eq:boundquadratic} given $(\ckbmbeta,\hsigma)$ is at least 
\begin{equation*}
P(\calE^* \mid\hsigma) \geq 1-2p^{1-u},
\end{equation*}
by choosing $\lambda \geq (\hsigma/\sigma)\lambda_0/(1-\zeta)$. For the lower bound of the above probability, see \eqref{eq:condlambda} in Lemma~\ref{lm:lassobounds} with $(1-\zeta)w_j$ in place of $w_j$ and $\hsigma$ in place of $\sigma$. As $\hsigma/\sigma \leq 1+\zeta$ on $E_3$, it suffices to choose $\lambda$ as in \eqref{eq:condlambdasigmaest}. What remains is to show that $\bmDelta=\sbmdelta-\sbmdelta_0$ is $M_2$-sparse on the event $\calE^*$. Then \eqref{eq:L2Deltabound} follows from \eqref{eq:boundquadratic} and the definition of $\phi_{\min}(M_2)$ \eqref{eq:remin}.  Regarding $\ckbmbeta$ and $\hsigma$ as the true parameters, Lemma~\ref{lm:screen} with \eqref{eq:bmbetacheck} implies that $\calA(\ckbmbeta) \subseteq \calA(\sbmbeta)$ on $\calE^*$ and therefore, $|\calA(\sbmdelta)| \leq |\calA(\sbmbeta)|\leq M_2/2$ by \eqref{eq:boundsize}. Since $\calE^*$ with \eqref{eq:diffhsigma} implies $|(\sigma/\hsigma)U_j^*| \leq w_j\lambda/2$ for all $j$, by a similar reasoning we also have $|\calA(\sbmdelta_0)| \leq M_2/2$ and thus $|\calA(\bmDelta)|\leq M_2$ on $\calE^*$.
\end{proof}

\appendix
\renewcommand{\theequation}{A.\arabic{equation}}     
\setcounter{equation}{0}  
\renewcommand{\thefigure}{A.\arabic{figure}}  
\setcounter{figure}{0}  
\renewcommand{\thetheorem}{A.\arabic{theorem}}  
\setcounter{theorem}{0}  
\renewcommand{\theremark}{A.\arabic{remark}}
\setcounter{remark}{0}

\section*{Appendix}
Recall that $A^{\dag}=A\setminus \{j\}$ in proposal (P3) and $A^{\dag}=A\cup \{j\}$ in (P4). Let $B=A\cap A^{\dag}$. For both proposals,
\begin{equation}\label{eq:detCAA}
 \frac{\det\bfC_{A^{\dag}A^{\dag}}}{\det\bfC_{AA}}=(\bfC_{jj}-\bfC_{jB}\bfC_{BB}^{-1}\bfC_{Bj})^{|A^\dag|-|A|}
 \defi (r_{\det})^{|A^\dag|-|A|}.
\end{equation} 
Suppose that the matrix $\bfC_{AA}^{-1}$ is given. 

When (P3) is proposed, let $k(j)\in\{1,\ldots,|A|\}$ index the position of $j$ in the set $A$ and $d_{k}$ be the $k^\supth$ diagonal element of $\bfC_{AA}^{-1}$. Then $d_{k(j)}=1/r_{\det}$ and thus the ratio \eqref{eq:detCAA} is immediately obtained. If this proposal is rejected, no further computation is necessary. If it is accepted, $\bfC_{A^{\dag}A^{\dag}}^{-1}$ can be obtained after a reverse sweeping of $(-\bfC_{AA}^{-1})$ on position $k(j)$.
When (P4) is proposed, $r_{\det}=\bfC_{jj}-\bfC_{jA}\bfC_{AA}^{-1}\bfC_{Aj}$ and thus the ratio \eqref{eq:detCAA} can be readily calculated. Again, if the proposal is rejected, no further computation is needed. If it is accepted, add $j$ to the last position in the set $A^{\dag}$ and then sweep the matrix
\begin{equation*}
\left(\begin{array}{cc}
-\bfC_{AA}^{-1} & \bfC_{AA}^{-1}\bfC_{Aj} \\
\bfC_{jA}\bfC_{AA}^{-1} & r_{\det}
\end{array}
\right)
\end{equation*}
on the last position to obtain $-\bfC_{A^{\dag}A^{\dag}}^{-1}$. It is seen that for both proposals, the ratio \eqref{eq:detCAA} can be calculated easily and sweeping on a single position is all we need to update $\bfC_{AA}^{-1}$.

\bibliographystyle{asa}
\bibliography{sparsereferences}

\end{document}